\documentclass[9.5pt]{IEEEtran}

\usepackage[colorlinks,urlcolor=blue,linkcolor=blue,citecolor=blue]{hyperref}

\usepackage{enumitem}

\usepackage{array}
\usepackage[hang]{footmisc}

\usepackage[subrefformat=parens,labelformat=parens,caption=false,font=footnotesize]{subfig}
\usepackage{amsmath}
\usepackage[nolist,nohyperlinks]{acronym}
\usepackage{amsmath}

\begin{acronym}
    \acro{4G}{fourth generation}
    \acro{5G}{fifth generation}
    \acro{AoA}{angle of arrival}
   	\acro{AoD}{angle of departure}
   	\acro{AP}{access point}
    \acro{BCRLB}{Bayesian CRLB}
    \acro{BG}{beam group}
    \acro{BS}{base stations}
    \acro{BPP}{binomial point process}
    \acro{BP}{broadcast probing}
	\acro{CDF}{cumulative density function}
    \acro{CCDF}{complementary cumulative density function}
    \acro{CF}{closed-form}
    \acro{CKM}{constrained k-means}
    \acro{DAS}{distributed antenna system}
    \acro{CRLB}{Cramer-Rao lower bound}
    \acro{ECDF}{empirical cumulative distribution function}
    \acro{EI}{Exponential Integral}
    \acro{eMBB}{enhanced mobile broadband}
    \acro{FIM}{Fisher Information Matrix}
    \acro{GoF}{goodness-of-fit}
    \acro{GPS}{global positioning system}
    \acro{GE}{group exploration}
    \acro{GNSS}{global navigation satellite system}
    \acro{HetNet}{heterogeneous network}
    \acro{IoT}{internet of things}
    \acro{IIoT}{industrial internet of things}
    \acro{ILP}{integer linear program}
    \acro{KM}{k-means}
    \acro{KPI}{key performance indicator}
    \acro{KHM}{k-harmonic means}
    \acro{WKHM}{weighted k-harmonic means}
    \acro{KC}{k-centers}
    \acro{LOS}{line of sight}
    \acro{LLR}{log-likelihood ratio}
    \acro{MAB}{multi-armed bandit}
    \acro{MBS}{macro base station}
    \acro{MD}{\ac{MD}}
    \acro{MEC}{mobile-edge computing}
    \acro{mIoT}{massive internet of things}
    \acro{MIMO}{multiple input multiple output}
    \acro{mm-wave}{millimeter wave}
    \acro{mMTC}{massive machine-type communications}
    \acro{MS}{mobile station}
    \acro{MVUE}{minimum-variance unbiased estimator}
    \acro{NLOS}{non line-of-sight}
    \acro{OFDM}{orthogonal frequency division multiplexing}
    \acro{PAC}{probably approximately correct}
    \acro{PDF}{probability density function}
    \acro{PGF}{probability generating functional}
    \acro{PLCP}{Poisson line Cox process}
    \acro{PLT}{Poisson line tessellation}
    \acro{PLP}{Poisson line process}
    \acro{PPP}{Poisson point process}
    \acro{PV}{Poisson-Voronoi}
    \acro{QoS}{quality of service}
    \acro{RAT}{radio access technique}
    \acro{RIC}{radio access network intelligent controller}
    \acro{RL}{reinforcement-learning}
    \acro{RN}{radio node}
    \acro{RSSI}{received signal-strength indicator}
    \acro{RSRP}{reference signal received power}
    \acro{BS}{base station}
    \acro{SINR}{signal to interference plus noise ratio}
    \acro{SG}{stochastic geometry}
    \acro{SNR}{signal to noise ratio}
    \acro{SSB}{synchronization signal block}
    \acro{SWIPT}{simultaneous wireless information and power transfer}
    \acro{TS}{Thompson Sampling}
    \acro{TS-CD}{TS with change-detection}
    \acro{KS}{Kolmogorov-Smirnov}
    \acro{UCB}{upper confidence bound}
	\acro{ULA}{uniform linear array}
    \acro{UPA}{uniform planar array}
	\acro{UE}{user equipment}
 	\acro{URLLC}{ultra-reliable low-latency communications}
    \acro{V2V}{vehicle-to-vehicle}    
    \acro{wpt}{wireless power transfer}
\end{acronym}

\usepackage{graphicx,wrapfig,lipsum}
\usepackage{rotating}
\usepackage{amsmath, amssymb, amsthm}
\usepackage{stmaryrd}
\usepackage{tikz}
\usepackage{verbatim}
\usepackage{url}
\usepackage[utf8]{inputenc}
\usepackage[english]{babel}
\newtheorem{theorem}{Theorem}[]
\newtheorem{corollary}{Corollary}[]

\newtheorem{lemma}[]{Lemma}

\usepackage{multicol}
\newtheorem{rmk}{Remark}[]

\usepackage{algorithm}
\usepackage{algpseudocode}
\usepackage{scalerel}
\usepackage{multicol}
\usepackage{setspace}

\usepackage[noadjust]{cite}
\usepackage{booktabs}
\usepackage{derivative}
\usepackage{bbm}
\usepackage[normalem]{ulem}
\definecolor{Ao}{rgb}{0.0, 0.62, 0.1}
\usepackage{color}
\usepackage{dsfont}
\usepackage{bm}
\usepackage{setspace}
\usepackage{cellspace}
\usepackage{array}
\usetikzlibrary{automata}
\usetikzlibrary{arrows}
\usetikzlibrary{shapes.geometric, arrows}
\usepackage[]{footmisc}
\usetikzlibrary{positioning}
\usetikzlibrary{calc}
\usetikzlibrary{arrows}
\hypersetup{nolinks=true}
\allowdisplaybreaks

\usepackage{caption}
\usepackage{textcomp}

\usepackage{mathtools}
\begin{document}
	\title{\fontsize{22.8}{27.6}\selectfont A Framework for Enterprise Network Dimensioning}
   \author{Gourab Ghatak
  \thanks{G. Ghatak is with the Department of Electrical Engineering, IIT Delhi, India 110016;
  \vspace{-0.5cm}}}
\date{}

\maketitle
\begin{abstract}
We study radio node (RN) placement for indoor enterprise networks. Using stochastic geometry (SG), we derive the meta-distribution (MD) of the SINR for a test user equipment (UE), with and without cooperation from outdoor macro base stations (MBSs), and compare these results with an integer linear programming (ILP) approach. SG provides an estimate of the required number of RNs but not their locations, while ILP can yield inaccurate local optima and requires high computational power. To address this, we investigate clustering-based algorithms for initializing RN locations using UE location distributions. Along with standard methods, we propose a weighted $k$-harmonic means (WKHM) clustering strategy tailored to maximize SINR. We then introduce a constrained sequential minimum cut algorithm, \texttt{SeqMinCut}, to merge multiple RNs into larger cells and further improve SINR. This is the first work that integrates SG-based statistical analysis, optimization, and clustering to obtain system design insights, dimensioning rules, and planning strategies for enterprise 5G. 
\end{abstract}

\begin{IEEEkeywords}
    Enterprise 5G, indoor network planning, clustering, radio nodes, placement optimization.
\end{IEEEkeywords}
\section{Introduction}
The advent of \ac{5G} is transforming many industries by offering unprecedented connectivity, low latency, and high data rates and reliability~\cite{andrews2014will,rappaport2013mmwave}. Enterprise 5G networks, in particular, can boost operational efficiency and productivity by tailoring the network to specific use cases and their requirements~\cite{ordonez2019use,foukas2017slicing}. As businesses pursue digitization and automation to stay competitive and meet evolving consumer expectations, robust and customized network infrastructure becomes essential. Successful deployment and optimization of enterprise 5G, however, require careful planning and deep system design insights, including mapping application and use-case requirements to network requirements and \ac{QoS} metrics such as rate, reliability, and coverage.

In this context, this paper presents a framework for planning and deploying \acp{RN} in indoor scenarios to enable enterprise 5G. Through a comprehensive analysis, we address three key questions for indoor 5G planning: {\it i) how many \acp{RN} to deploy?, ii) where to deploy them?, and iii) how to enable clustered cell transmission using \ac{DAS} for enhanced \ac{UE} experience.} In doing so, we identify the main considerations and best practices for successful implementations, and clarify the evolving role of enterprise 5G connectivity in modern business environments.
\section{Background and Related Work}

\begin{table*}[t]
\small{
\centering
\caption{Comparison with Prior Work}
\label{tab:comparison}
\begin{tabular}{|p{3cm}|p{3cm}|p{3cm}|p{3cm}|p{3.6cm}|}
\hline
\textbf{Feature / Criteria} & \textbf{Stochastic Geometry (SG) Approaches} \cite{elsawy2016modeling,hmamouche2021new,ghatak2018coverage,haenggi2021meta,haenggi2021meta2} & \textbf{Optimization-Based Approaches} \cite{krishnamachari2000optimization,ji2002methods,lee2000cell,molina1999automatic,ahmadi2024wireless,abrishambaf2019study} & \textbf{Commercial Tools (e.g., iBwave)} \cite{kunisch2006locally,ahmadi2024wireless,haron2021performance,zhang2021radio, domainremcom} & \textbf{This Work} \\
\hline
Primary Methodology & SG-based probabilistic modeling & ILP, GA, Tabu, heuristic search & Ray-tracing + empirical models & Joint SG, ILP, clustering, and graph partitioning \\
\hline
Performance Metric & SINR/RSRP meta-distribution & Coverage, signal strength, UE distribution & RSRP heuristics & MD, RSRP, SINR across all realizations \\
\hline
Insight Provided & RN count distribution & Scenario-specific optimal count & Heuristic coverage regions & Transferable RN dimensioning rules and optimization flow \\
\hline
Location Insight & None & Scenario-specific local optima & Geometry-based suggestions & UE-aware RN location optimization via WKHM \\
\hline
User Distribution Awareness & Uniform distribution assumed & Some methods include UE density & Limited or manual & Fully incorporated using KM, CKM, WKHM \\
\hline
Interference Modeling & Inter-tier interference via SG & Often ignored or simplified & Ray-tracing simulation & Modeled in SG and ILP jointly \\
\hline
Computational Complexity & Low (analytical) & High (combinatorial) & High (manual/simulation) & Balanced: SG estimates + ILP refinement \\
\hline
Clustered Cell Design & Not addressed & Not addressed or manually done & High redundancy setting & SeqMinCut-based RN clustering \\
\hline
Support for DAS & No & No & Partial & Yes; includes admission control \\
\hline
Novel Algorithms & – & – & – & WKHM (weighted KHM), SeqMinCut clustering \\
\hline
Adaptability to Scenario Changes & Moderate (model-dependent) & Poor (new optimization) & Poor (manual tuning) & High (statistical + optimization + clustering) \\
\hline
Tool Availability & N/A & N/A & Commercial license & Open MATLAB Toolbox \cite{codes} \\
\hline
\end{tabular}
}
\end{table*}

\subsection{Statistical analyses with \ac{SG}} Over the last couple of decades, \ac{SG} has proven to be a useful modeling and analysis tool for wireless networks~\cite{elsawy2016modeling, hmamouche2021new, ghatak2018coverage, ghatak2019small, ghatak2020beamwidth, ghatak2021stochastic, shah2024binomial}. { In particular, it has been extensively employed to analyze \acp{HetNet}, where multiple tiers of base stations are typically modeled as independent PPPs~\cite{andrews2023introduction,haenggi27stochastic,dhillon2012modeling, ghosh2012heterogeneous, lu2015stochastic, di2015stochastic}. These studies have provided fundamental insights into coverage, rate distributions, and multi-tier interference management. Our proposed architecture, consisting of a PPP-distributed macrocell tier and an indoor small-cell tier, is structurally analogous to such K-tier HetNet models. However, the key distinction in this work lies in modeling the indoor \acp{RN} using a \ac{BPP}, which captures the finite and fixed number of RNs typically present in enterprise deployments.} Albeit in the first instance this appears as a minor modification of the baseline $K-$tier \ac{HetNet} model, { due to the BPP formulation leads to non-stationarity and the absence of a ``typical UE'', thereby necessitating a test-UE based analysis. This introduces analytical challenges that are not encountered in standard HetNet frameworks and requires new tools for deriving the distance distribution and the SINR meta-distribution. In particular, now we need the conditional distributions, given the test-user location which can be remarkably different for a UE present at the center of the room in contrast to a UE at the room edge. This directly captures the varying nature of UE experience conditioned on the location that is missed by classical PPP models.} In \ac{SG}, the locations of network elements such as \acp{RN}, blockages, and \acp{UE} are modeled as spatial stochastic processes. These probabilistic models let planners statistically evaluate key performance metrics such as coverage probability, interference, and spectral efficiency~\cite{ghatak2018coverage, ganti2009stochastic}. In particular, the \ac{MD} framework of \ac{SG}~\cite{haenggi2021meta,haenggi2021meta2} yields fine-grained per-link performance insights for \acp{UE}, enabling network deployment planning for target \acp{KPI} such as coverage, delay, and performance variance. \ac{MD} has been derived for multi-tier networks~\cite{wang2018sir} and with \ac{RN} cooperation~\cite{cui2017sir}, mainly using Poisson models for node locations. However, for indoor networks, the Poisson assumption for \ac{RN} locations may not hold. Afshang~{et al.}~\cite{afshang2017fundamentals} present fundamentals of modeling finite networks using the \ac{BPP}. A multi-tier analysis that includes a fixed number of indoor \acp{RN} and a Poisson-distributed outdoor \ac{MBS} tier has not yet been reported. This gap is critical for the present work, as most enterprise networks will opportunistically use both indoor \acp{RN} and outdoor \acp{MBS} to optimize performance.

Despite its tractability, \ac{SG} analysis has three main drawbacks. First, it relies on simplifying assumptions, such as uniform spatial distributions and homogeneous propagation environments. Second, \ac{SG} models struggle to capture dynamic factors, including UE mobility and blockages. Third, they only assess the impact of RN density or number, offering no guidance on RN placement. Thus, while \ac{SG} is useful for 5G network planning, its applicability and accuracy can be limited in some deployments, motivating further development of radio network planning tools, which we pursue in this work.

\subsection{Optimization Approaches} The \ac{RN} placement problem has been widely studied from an optimization perspective, mainly for outdoor networks; see~\cite{krishnamachari2000optimization, ji2002methods, lee2000cell,hurley2002planning, laiho2002radio, binzer2000radio, tun2021radio, corre2009three, gamst2009cellular, ayoubi2025advanced, su20225g, zhang2023rme}. Related facility-location formulations and approximation algorithms (e.g., $k$-median / $p$-median style models) are also widely used as abstractions for access-point placement~\cite{daskin2013network}. For indoor \ac{RN} placement, commonly used algorithms include genetic algorithms~\cite{molina1999automatic}, Tabu search~\cite{wu2007optimization}, binary {ILP}~\cite{wong2006base}, greedy algorithms~\cite{kim2007multiple}, and Ngadiman’s algorithm~\cite{ngadiman2005new}. If computational complexity is not a concern and the search space is discrete, exhaustive search can yield the optimal \ac{RN} locations. Pujji~\cite{pujji2012optimisation} provides a comprehensive analysis of these methods and proposes a hybrid algorithm that balances computational complexity and placement optimality. 

\ac{UE} locations can guide \ac{RN} placement. While $k$-means clustering suits static UEs~\cite{abrishambaf2019study}, mobile UEs require more robust approaches, such as the two-stage optimization of~\cite{chatterjee2021joint}, later applied in~\cite{topal2023optimal,abdel2024robust}. Even statistical UE information can aid placement when exact locations are unknown. However, such optimization-based methods are scenario-specific and offer limited system-level insight, motivating the joint \ac{SG}–optimization framework proposed in this work.

\subsection{Commercial Software for Network Planning}
Indoor radio network planning has been extensively studied to meet the growing demand for seamless wireless connectivity in complex indoor environments~\cite{louro20203d, albanese2022ris, xia2023digital}.
Early planning relied on empirical models like the COST 231 multi-wall model~\cite{action1999231} and the ITU-R indoor model~\cite{series2015propagation}, which estimate path loss from statistically fitted measurements. Although computationally efficient, these models lack the detail needed to capture complex wave–structure interactions, hindering accurate characterization of performance variation across the area. To overcome this, deterministic ray-tracing models gained prominence. Works such as Kunisch and Pamp~\cite{kunisch2006locally} introduced advanced ray-tracing to accurately model reflections, diffractions, and transmissions. These capabilities are integrated into commercial tools like iBwave Design~\cite{ahmadi2024wireless}, Remcom~\cite{domainremcom}, WinProp~\cite{haron2021performance}, and Ranplan Professional~\cite{zhang2021radio}. Their core strength is accurate ray-tracing predictions, followed by manual \ac{RN} placement. Some tools have recently added automatic radio node placement, but their algorithms use only the geometry and dimensions of the deployment area and do not perform rigorous optimization. 

\subsection{Key Contributions, Novelty, and Insights}
We provide a multi-step joint \ac{RN} number, placement, and association optimization framework for indoor enterprise \ac{5G} networks. The salient contributions of this work are discussed below. Corresponding to each contribution, we highlight the key novelty, that, to the best of our knowledge, has not been addressed in literature.
\begin{itemize}[leftmargin=*]
    \item First, by approximating the deployment area as a disk and using tools from \ac{SG}, we derive the \ac{MD} of a uniformly located \ac{UE}. This enables the estimation of the number of required \acp{RN} for not only satisfying a given downlink \ac{SINR} requirement but also a given per-UE throughput requirement. We investigate two cases: i) prioritized \ac{RN} association and ii) \ac{MBS} cooperation, and study the consequent decrease in the required number of \acp{RN} with cooperation. It is important to highlight that although {\it dominance over macro} is a key feature for indoor network designers, it has not been studied in literature.

    {\bf Key Novelty:} (1) Characterization of the \ac{MD} of SINR for a two-tier network consisting of a \ac{BPP} tier and an \ac{PPP} tier, and (2) Location dependent \ac{MD} analysis. This is challenging due to non existence of a {\it typical UE} owing to the asymmetry in the point processes from the \ac{UE} perspective.
    \item To compare the insights derived from the \ac{MD}, we formulate an \ac{ILP} to jointly optimize the number and the locations of the \acp{RN}. The \ac{MD} analysis provides a reliable upper bound to the \ac{ILP} solution.

    {\bf Key Novelty:} (1) In contrast to~\cite{wong2006base}, we include the impact of \ac{MBS} interference in deriving the optimal number of RNs.
    \item The \ac{ILP} formulation provides only a local optima for the placement of the \acp{RN}. Furthermore, it does not take into account the \ac{UE} densities in the deployment area. To alleviate this, based on the derived SG based upper bound on the number of \acp{RN}, we investigate the initial \ac{RN} placement methods leveraging off-the-shelf clustering algorithms such as \ac{KM}, \ac{CKM}, \ac{KHM}, and \ac{KC}. We highlight the advantages and drawbacks of each of these strategies. 

    {\bf Key Novelty:} (1) In order to strike a balance between the downlink \ac{RSRP} and association, we propose a novel clustering method called \ac{WKHM}. We derive its centroid-update step and demonstrate that for the association-unconstrained case, it outperforms all its competitors in terms of donwlink \ac{RSRP}; (2) This is the first work that combines a \ac{SG} analysis with a placement optimization to derive network dimensioning rules, thereby attempting to alleviate the limitations of a \ac{SG} only analyses that are limited to determining the number of RNs. 
    \item We propose a novel algorithm to combine the RNs to form clustered cells. Forming clustered cells by combining multiple RNs forms the basis of \acp{DAS}, subject to a maximum \ac{UE} association per cell have the advantage of improving the \ac{SINR} and \ac{RSRP} of the \acp{UE}. The proposed algorithm called \texttt{SeqMinCut} employs a vertex weight constrained sequential Stoer-Wagner mincut of a graph starting from a fully connected graph to minimize the inter-RN interference.
    
    {\bf Key Novelty:} In literature, only geographical proximity and backhaul capacity determine the \ac{RN} cooperation, without any consideration to the number of associated \acp{UE}. Thus, our proposal implements a preemptive admission control in order to maintain the pilot orthogonality of intra-cell \acp{UE}~\cite{bjornson2017massive}.

    \item We leverage the above results to provide a unified framework for complete planning of indoor enterprise networks. The flow of planning so obtained is organized as a indoor network planning MATLAB toolbox consisting of functions to implement each of the above steps~\cite{codes}.
\end{itemize}

{
\subsection{Dominant contribution}
The central contribution of this paper is a reliability-to-deployment framework for bounded enterprise networks. The framework is organized as a sequence of explicit stage-wise mappings
The stages are not independent solutions to unrelated problems. The finite-area SG/meta-distribution analysis provides an RN cardinality but no location information. The candidate-site ILP converts this cardinality into a geometry- and deployability-aware placement. WKHM then incorporates the spatial demand distribution and received-signal fractions, while SeqMinCut converts the placed RNs into operational cells under admission constraints. The final commercial-tool comparison evaluates the complete deployment produced by these handoffs. Thus, the principal advance is the explicit and technically consistent coupling of reliability analysis, constrained placement, and cell formation into one implementable planning procedure.

The individual technical contributions support this central mapping: the bounded mixed BPP-PPP analysis supplies the reliability-to-cardinality step; the deployability-aware ILP supplies the cardinality-to-placement step; WKHM supplies signal-aware demand adaptation; and SeqMinCut supplies load-constrained cell formation. This organization also makes the limitations transparent: the SG output is a conservative baseline rather than a location optimum, and detailed geometry and installation constraints enter only in the subsequent placement and validation stages.
}

{
The proposed framework should be interpreted as a reliability-to-deployment pipeline. The finite-area stochastic geometry analysis first converts a reliability specification into a conservative RN-count recommendation. The candidate-site optimization then converts this cardinality recommendation into a deployable placement subject to radio and installation constraints. Finally, WKHM and SeqMinCut refine the user-aware placement and cell structure, respectively, and the resulting deployment is validated under detailed indoor propagation. In addition to the baseline results, we evaluate the sensitivity of the RN count to footprint shape, UE non-uniformity, and propagation uncertainty; provide a signal-fraction interpretation and ablation of the WKHM weighting; report confidence intervals and paired statistical tests; and distinguish the enterprise-specific inputs from the transferable analytical and optimization components.
}

{
\subsection{Scope and generality}
The framework separates scenario-specific inputs from transferable methodology. The enterprise-specific inputs are the bounded indoor floorplan, material-dependent propagation and blockage model, finite candidate installation set, cabling and power availability, prohibited or inaccessible zones, maintenance requirements, and platform-specific admission limits. These quantities define a particular planning instance, but they are not intrinsic restrictions of the analytical or optimization methods.

The transferable components are: (i) location-conditioned finite-area reliability analysis, (ii) conversion of a percentile-reliability target into an RN-count baseline, (iii) placement over an arbitrary designer-provided candidate set with linear feasibility and cost constraints, (iv) signal-aware refinement using a spatial demand distribution, and (v) graph partitioning under load or admission constraints. A campus deployment can instantiate the same pipeline with a multi-building indoor-outdoor propagation model; an industrial facility can introduce safety-clearance, machinery-obstruction, and deterministic installation constraints; and a public venue can replace the static UE law by crowd-density maps for different event states. The stage-wise interfaces of the pipeline remain unchanged.

The present framework is a long-term planning method. If user mobility, traffic bursts, or blockage dynamics evolve on the same time scale as association and scheduling, an additional temporal traffic/queueing and reconfiguration layer is required. Such short-time-scale control is complementary to, rather than a replacement for, the placement and dimensioning layer studied here.
}



\section{System Model and Objectives}
First, we assume that the deployment scenario is characterized by its area. This is due the fact that individual deployment scenarios can be of varied shapes and complex propagation environments which inhibit a statistical analysis. { In order to make the \ac{SG} analysis tractable, we adopt a simplified system model where the deployment area is approximated as a disk and the pathloss depends only on the Euclidean distance with a fixed exponent. While this abstraction facilitates analytical characterization of the SINR meta-distribution, it does not capture the full range of indoor propagation phenomena such as wall penetration losses, shadowing, or attenuation due to furniture and obstacles. We emphasize that this choice is not intended to replace realistic modeling, but rather to provide statistical insights and general dimensioning rules. To address practical complexities, we later incorporate wall blockages and irregular geometries in the ILP-based framework (Section~IV.C). This two-step approach allows us to balance analytical tractability with practical relevance.} The \acp{RN} are assumed to be installed on the ceiling, with height $g$. Consider the number of \acp{RN} be $N_{\rm r}$. Furthermore, let the number of \acp{UE} be $N_{\rm u}$. First, we assume that they are uniformly distributed in the floor. Thus, the UE density is $\lambda_{\rm u} = \frac{N_{\rm u}}{\pi R^2}$ UEs/m$^2$ if the deployment area is a disk of radius $R$, while it is $\lambda_{\rm u} = \frac{N_{\rm u}}{R^2}$ if it is a square of side $R$.

Let the per-UE request arrival rate be $\lambda_{\rm a}$ bits/s/UE, giving a traffic density $\sigma = \lambda_{\rm u} \lambda_{\rm a}$ bits/s/m$^2$. The carrier frequency is $f_{\rm c}$ and the aggregate bandwidth is $B$. Let $\bar{\xi} = \{\xi_1, \xi_2, \ldots, \xi_{N_u}\}$ denote the downlink \ac{SINR} of the \acp{UE}. The \acp{RN} are located at $\Phi_{\rm r} = \{{\bf m}_1, {\bf m}_2, \ldots, {\bf m}_{N_{\rm r}}\}$, partitioned into $N_{\rm c} \leq N_{\rm r}$ clusters (cells), each with one or more \acp{RN}. Each \ac{UE} associates with one cell, so the \acp{UE} are partitioned into $N_{\rm c}$ groups with locations $\Phi_{\rm u1}, \Phi_{\rm u2}, \ldots, \Phi_{uN_{\rm c}}$, each served by its cell. Downlink transmissions experience inter-cluster interference.

Each \ac{RN} transmits with power $P$. A transmit–receive pair at distance $d$ has received power $KhPd^{-\alpha}$, where $K = \left(\frac{c}{4\pi f_{\rm c}}\right)^2$ is the path-loss constant, $\alpha$ is the path-loss exponent, and $h$ is the small-scale fading. We denote by $\ell_{\rm r}(d)$ and $\ell_{\rm m}(d)$ the path-loss to the receiver from indoor \acp{RN} and from \acp{MBS}, respectively. According to 3GPP, choosing an appropriate $\alpha$ (e.g., 3) without explicit wall-penetration modeling can emulate indoor propagation~\cite{3GPPmodel,3gpp38901,winner22007}. Let the channel noise power be $N_0$.

The objective is to minimize the number of \acp{RN} while meeting downlink coverage and capacity requirements. The optimization variables are the number of \acp{RN}, the number of clusters, the clustering assignments, and the \ac{RN} locations. Mathematically, the problem is to minimize $|\Phi_{\rm r}|$ subject to a downlink performance constraint $f(\bar{\xi}) \geq f_0$, e.g., minimum or mean \ac{RSRP}, or mean \ac{SINR}. We also investigate individual cell load, which depends on the number and locations of \acp{RN} and the \acp{UE} traffic demands~\cite{ghatak2018accurate}, making the problem challenging because coverage and capacity requirements often conflict. To simplify, we only study how our algorithms affect load and do not include it explicitly in the optimization. The number of \acp{UE} served by a cell is further limited by admission control.

\section{\ac{MD} Analysis and \ac{ILP}}
\label{sec:MD}
The \ac{MD} of SINR~\cite{haenggi2016meta} provides a finer characterization by quantifying the fraction of users that achieve a target reliability level. { In other words, while the coverage probability answers the question \emph{``what is the probability of success for a randomly chosen user?''}, the MD answers the stronger question \emph{``what fraction of users achieve a given level of reliability?''}. This distinction is crucial for enterprise networks, where performance guarantees must hold not only on average but also for individual links. In particular, in our setting, the MD represents that the faction of \acp{UE} over all network realizations that achieve and SNR/RSRP larger than a threshold is at least some reliability threshold. Therefore, the MD serves as a central performance metric in our framework, enabling the derivation of dimensioning rules that are both statistically rigorous and practically relevant.}

We first consider a network where each cell contains a single \ac{RN}, i.e., no \ac{RN} clustering. A \ac{UE} served by an \ac{RN} experiences interference from all other \acp{RN}. With no prior information on \ac{UE} locations, they are assumed to be uniformly distributed. For the \ac{MD} analysis, the deployment region is a disk of radius $R$, in which $N_{\rm r}$ \acp{RN} form a \ac{BPP} $\Phi_{\rm r} \subset \mathcal{B}((0,0),R)$. The \acp{MBS}, with locations $\Phi_{\rm m}$, are modeled as a \ac{PPP} in the annular region $\mathcal{B}((0,0),R_{rm m}) \setminus \mathcal{B}((0,0),R)$. Owing to the bounded deployment regions, both $\Phi_{\rm r}$ and $\Phi_{\rm m}$ are translation-variant, precluding the definition of a network-wide \emph{typical} \ac{UE} as in standard \ac{SG} analysis. {In other words, the global stationarity assumption underlying the Palm distribution, and thereby the concept of a typical point, is not valid here. To address this, we adopt a \emph{test UE} located at a deterministic distance $r_0$ from the origin, and condition 
all performance metrics on this location. This construction is consistent with Palm theory. For a point process  $\Phi$, the reduced Palm distribution $P^!_x$ satisfies
\[
\mathbb{E}[f(\Phi) \mid \text{point at } x] = \int f(\Phi \cup \{x\}) \, dP^!_x(\Phi).
\]
In stationary PPP models, $P^!_x$ is translation invariant, giving rise to the typical UE. In our bounded BPP + PPP 
setting, we instead define the test UE and condition on its distance $r_0$, thereby obtaining a mathematically rigorous 
alternative to the typical UE assumption. All subsequent SINR meta-distribution derivations are therefore carried out 
with respect to this test UE framework.} Leveraging the isotropic nature of the scenario, let us consider our test UE to be at $(r_0, 0)$.

\begin{rmk}
{Since the RN process is a bounded BPP, stationarity is lost and 
a typical UE cannot be defined in the classical sense. Therefore, when introducing Lemma~2, we define a \emph{test UE} at distance $r_0$ from the origin and condition the analysis on this location. This corresponds to a reduced Palm distribution centered at $r_0$ rather than translation-invariant Palm conditioning. Consequently, all results derived henceforth are with respect to this test UE framework, which rigorously replaces the typical UE notion. }
\end{rmk}

\subsection{RN Prioritized Connection}
The instantaneous \ac{SINR} at the test \ac{UE} is
\begin{align}
    \xi = \frac{KP_{\rm r}\ell_{\rm r}(d)h_1}{N_0 + \sum_{i \in \Phi_{\rm r} \backslash {{\bf x}_1}} P_{\rm r} \ell_{\rm s}(d_i)h_i  + \sum_{j \in \Phi_{\rm m}} P_{\rm m} \ell_{\rm m}(d_i)h_j},
\end{align}
where $h_1$ is the channel fading power from the serving \ac{RN}. On the other hand, $h_i$ and $h_j$ are the channel fading powers from the interfering \acp{RN} and \acp{MBS} respectively. We consider a transmission attempt to be successful if the instantaneous rate exceeds the instantaneous traffic arrival rate. Thus, leveraging Shannon's formula, the success probability, conditioned on the \ac{MBS} and \ac{RN} locations is defined as
\begin{align}
    P_{\rm s} &= \mathbb{P}\left(\frac{B}{N_{\rm a}} \log\left(1 + \xi\right) \geq \lambda_{\rm a} | \Phi_{\rm r}, \Phi_{\rm m}\right), \nonumber \\
    &= \mathbb{P}\left(\xi \geq \gamma | \Phi_{\rm r}, \Phi_{\rm m}\right), \nonumber
\end{align}
where $\gamma = 2^{\frac{\lambda_{\rm a} N_{\rm a} A}{B\pi R^2}} - 1$. $N_{\rm a}$ is the number of associated \acp{UE} to the serving \ac{RN} of the test \ac{UE}.

\begin{lemma}
    The conditional success probability of the test UE at $r_0$, given a realization of $\Phi_{\rm r}$ and $\Phi_{\rm m}$ is
    \begin{align}
    P_{\rm s}(r_0) = &\exp\left(\frac{-\gamma N_0}{KP_{\rm r}\ell_{\rm r}(d)}\right) \cdot\nonumber  \\
        &\left[\prod_{\Phi'_{\rm s}}\frac{1}{1 + \gamma \frac{\ell_s(d_i)}{\ell_s(d)}}\right]\left[\prod_{\Phi_{\rm m}}\frac{1}{1 + \gamma \frac{P_{\rm s}}{P_{\rm r}}\frac{\ell_m(d_j)}{\ell_s(d)}}\right], 
    \end{align}
    where $\Phi'_{\rm r}$ is the reduced point process $\Phi_{\rm r} \backslash \{\bf x_1\}$, i.e., the resultant RN process by removing the serving RN.
    \label{lem:CSP}
\end{lemma}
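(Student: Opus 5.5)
The argument is the standard Rayleigh-fading computation. I assume, as the product form indicates, that the fading powers $h_1$, $\{h_i\}$, $\{h_j\}$ are i.i.d. unit-mean exponential and independent of $\Phi_{\rm r}$ and $\Phi_{\rm m}$. Condition on a realization of both processes, so that the serving distance $d$ and all interferer distances $d_i, d_j$ are fixed.

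\emph{Step 1.} Write the aggregate interference as $I = I_{\rm r} + I_{\rm m}$. Here $I_{\rm r}=\sum_{i\in\Phi'_{\rm r}} P_{\rm r}\ell_{\rm s}(d_i)h_i$ and $I_{\rm m}=\sum_{j\in\Phi_{\rm m}}P_{\rm m}\ell_{\rm m}(d_j)h_j$. I will treat the constant $K$ uniformly: either it is absorbed into the path-loss functions, or it multiplies every link. In the latter case it cancels in the interference ratios and survives only in the noise term.

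\emph{Step 2.} Rearrange the success event as $h_1 \geq \gamma (N_0 + I)/(KP_{\rm r}\ell_{\rm r}(d))$. Since $h_1$ is exponential and independent of $I$,
\begin{align}
P_{\rm s}(r_0) = \mathbb{E}_{h}\left[\exp\left(-\frac{\gamma(N_0+I)}{KP_{\rm r}\ell_{\rm r}(d)}\right)\right], \nonumber
\end{align}
where the expectation is only over the interferers' fading. The noise term separates immediately and gives the first factor in Lemma~\ref{lem:CSP}.

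\emph{Step 3.} By independence of the $h_i$ and $h_j$, the remaining expectation factorizes into a product over the interferers. Each factor is the Laplace transform of a unit-mean exponential, $\mathbb{E}[e^{-s h}] = 1/(1+s)$. This is evaluated at $s = \gamma \ell_{\rm s}(d_i)/\ell_{\rm s}(d)$ for an RN interferer, where $P_{\rm r}$ cancels. For an MBS interferer it is evaluated at $s=\gamma (P_{\rm m}/P_{\rm r})\,\ell_{\rm m}(d_j)/\ell_{\rm s}(d)$. These give the two bracketed products in the statement.

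The products are over finitely many points for $\Phi'_{\rm r}$, and almost surely finitely many for $\Phi_{\rm m}$ because its window is bounded. Interchanging expectation and product is therefore trivial, and there is no convergence issue.

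The only real obstacle is notational consistency rather than mathematics. The statement mixes $\ell_{\rm r}$ and $\ell_{\rm s}$, $P_{\rm s}$ and $P_{\rm m}$, and $\Phi'_{\rm s}$ and $\Phi'_{\rm r}$. It also places $K$ only in the serving term of the SINR. I will state explicitly that $\ell_{\rm s}\equiv\ell_{\rm r}$, $P_{\rm s}\equiv P_{\rm m}$ in the MBS ratio, and $\Phi'_{\rm s}\equiv\Phi'_{\rm r}$. I will also state that $K$ is common to all links, so that the interference ratios are exactly as displayed.
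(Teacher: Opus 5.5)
Your proposal is correct and is exactly the standard Rayleigh-fading argument: the exponential tail of $h_1$ gives the noise factor, and the Laplace transform $1/(1+s)$ of each independent interferer's fading gives the two products; the paper does not write this out but simply cites the meta-distribution literature for it. Your notational reconciliation is also the right one: with $K$ common to all links (as in the system model's received power $KhPd^{-\alpha}$), $\ell_{\rm s}\equiv\ell_{\rm r}$, and $P_{\rm s}\equiv P_{\rm m}$ in the MBS ratio, the displayed formula holds exactly.
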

\begin{IEEEproof}
This is a standard result in \ac{SG}, e.g., see~\cite{haenggi2021meta} for the general result and \cite{wang2018sir} for the multi-tier case.
\end{IEEEproof}
The \ac{CCDF} of the conditional success probability is called the \ac{MD}.
\begin{align}
    F_{P_{\rm s}}(\beta, \gamma) = \mathbb{P}\left(P_{\rm s}(\gamma) > \beta\right).
\end{align}
{In stationary and ergodic networks, the \ac{MD} $F_{P_{\rm s}}(\beta, \gamma)$ can be interpreted 
as the fraction of time that a typical UE experiences an SINR above $\gamma$ with reliability at least $\beta$. However, in our bounded BPP + PPP setting, ergodicity does not hold. Therefore, $F_{Ps}(\beta, \gamma)$ should be interpreted as the \emph{fraction of UEs across all network realizations} (of RN and UE locations) for which the conditional success probability exceeds $\beta$ at threshold $\gamma$. This distinction is particularly relevant for network planning: in enterprise deployments, multiple realizations correspond to different UE location configurations and possible RN placements. Hence, the MD provides a statistically meaningful measure of reliability across scenarios, enabling planners to dimension the network so that a target fraction of UEs achieves a required reliability level.}

\begin{rmk}
    In the case of coverage dimensioning in terms of \ac{SINR} or \ac{RSRP} rather than capacity dimensioning, the value of $\gamma$ can be directly set as the minimum \ac{SINR} required for decoding the control channel or the minimum target \ac{RSRP} in the downlink.
\end{rmk}
In what follows, we proceed with a fixed $\gamma$ and discuss the implications of cell load in a separate result in Section~\ref{sec:NRD}. Deriving the \ac{MD} of the SINR directly is very likely infeasible~\cite{haenggi2021meta}. In literature, researchers often characterize the \ac{MD} through the moments of the conditional success probability, i.e., $\mathbb{E}\left[P^b_{\rm s}(r_0)\right]$. This requires the evaluation of the\ac{PGF} of the underlying point process. For $\Phi_{\rm r}$, this is particularly challenging due to its asymmetry. First, we need the distribution of the distance of the test UE from the serving RN as follows.
\begin{lemma}
\label{lem:nearestRNdist}
~\cite[Th.~2.3.6]{mathai1999introduction}
    The distribution of the distance from the test UE to the nearest RN $D_{\rm r}$, conditioned on $r_0$ is
    \begin{align}
        &f_{D_{\rm r}}(y| r_0) = \nonumber \\
        &\begin{cases}
            \frac{2y}{R^2}; &0 \leq y \leq R - r_0 \nonumber \\
            \frac{2y}{\pi R^2} \cos^{-1} \left(\frac{-R^2 + y^2 + r_0^2}{2r_0y}\right); &R - r_0 \leq y \leq R + r_0. \nonumber 
        \end{cases}
    \end{align}
\end{lemma}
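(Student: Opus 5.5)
The plan is to read the statement as the distance distribution from the fixed test UE at $\mathbf{x}_0=(r_0,0)$ to a \emph{single} point distributed uniformly on $\mathcal{B}((0,0),R)$. This is the classical result cited from~\cite[Th.~2.3.6]{mathai1999introduction}. The nearest-RN law then follows because the $N_{\rm r}$ points of the \ac{BPP} are i.i.d. I will work with the CDF and differentiate it. Since the RN location is uniform on the disk,
\begin{align}
F_{D_{\rm r}}(y\mid r_0)=\frac{\left|\mathcal{B}(\mathbf{x}_0,y)\cap\mathcal{B}((0,0),R)\right|}{\pi R^2}. \nonumber
\end{align}
So everything reduces to computing, or more conveniently differentiating, the area of the intersection of two disks.

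I split the range of $y$ into two cases. If $0\le y\le R-r_0$, the small disk $\mathcal{B}(\mathbf{x}_0,y)$ lies entirely inside the deployment disk. Then $F=y^2/R^2$ and the density is $2y/R^2$, which is the first branch.

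If $R-r_0\le y\le R+r_0$, I avoid writing out the full lens area, which is a sum of two circular segments. Instead I use the fact that $\frac{d}{dy}|\mathcal{B}(\mathbf{x}_0,y)\cap\mathcal{B}(0,R)|$ equals the length of the arc of the circle $\partial\mathcal{B}(\mathbf{x}_0,y)$ lying inside $\mathcal{B}(0,R)$ (polar coordinates centred at $\mathbf{x}_0$). That arc is symmetric about the line joining the two centres and subtends a total angle $2\theta(y)$. Here $\theta(y)$ is the angle at $\mathbf{x}_0$ in the triangle with sides $r_0$, $y$ and $R$; the side $r_0$ is the segment from $\mathbf{x}_0$ to the origin, and the side $R$ joins the origin to an intersection point of the two circles. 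The law of cosines gives $R^2=r_0^2+y^2-2r_0y\cos\theta$. The arc lies on the side facing the origin, so the angle is measured from the direction pointing toward the origin. This yields
\begin{align}
\theta(y)=\cos^{-1}\!\left(\frac{-R^2+y^2+r_0^2}{2r_0y}\right). \nonumber
\end{align}
Hence the density is $2y\,\theta(y)/(\pi R^2)$, the second branch. For $y>R+r_0$ the density vanishes.

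As checks, at $y=R-r_0$ the argument of $\cos^{-1}$ equals $-1$, so $\theta=\pi$ and the two branches match continuously. At $y=R+r_0$ the argument equals $1$, so $\theta=0$. The step needing most care is the geometric identification of the arc and the choice of the correct angle ($\theta$ versus $\pi-\theta$) from the law of cosines. Fixing orientation with the limiting cases above resolves it.

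Finally, the statement says ``nearest RN''. The formula above is the law of the distance to one uniformly placed RN. For the nearest of the $N_{\rm r}$ i.i.d.\ \ac{BPP} points, I would add the step $F_{\min}(y)=1-(1-F_{D_{\rm r}}(y\mid r_0))^{N_{\rm r}}$, which coincides with the stated expression when $N_{\rm r}=1$. I would flag this distinction explicitly when the lemma is later used in the \ac{PGF} computations.
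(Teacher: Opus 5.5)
Your proposal is correct. The paper gives no derivation here; it simply imports the formula from \cite[Th.~2.3.6]{mathai1999introduction}. You instead supply a self-contained proof:
\begin{itemize}
\item The CDF is the normalized area of $\mathcal{B}(\mathbf{x}_0,y)\cap\mathcal{B}((0,0),R)$.
\item Differentiating in polar coordinates centred at the test UE, the density is $y$ times the angular measure of the part of $\partial\mathcal{B}(\mathbf{x}_0,y)$ inside the deployment disk.
\item That measure is $2\theta(y)$. A point at angle $\psi$ from the direction toward the origin lies at squared distance $r_0^2+y^2-2r_0y\cos\psi$ from the origin, so it is inside exactly when $|\psi|\le\theta(y)$.
\end{itemize}
Your endpoint checks $\theta(R-r_0)=\pi$ and $\theta(R+r_0)=0$ confirm the orientation of the $\cos^{-1}$ argument and the continuity of the two branches. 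The citation alone does not let a reader verify these. The citation buys the paper brevity. Your route makes the exact scope of the formula visible: it is the law of the distance to \emph{one} uniformly placed point.

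This makes your closing caveat a genuine correction to the statement, not a side remark. For a BPP with $N_{\rm r}>1$ points, the nearest-RN density is $N_{\rm r}\,f_{D_{\rm r}}(y\mid r_0)\bigl(1-F_{D_{\rm r}}(y\mid r_0)\bigr)^{N_{\rm r}-1}$. This is the version the later results actually require:
\begin{itemize}
\item Theorem~1 integrates against the serving-distance density while conditioning the remaining RNs to lie outside $\mathcal{B}(\mathbf{x}_0,y)$, which presumes $y$ is the \emph{nearest} distance.
\item The access-probability lemma uses $F_{D_{\rm r}}$ explicitly as the CDF of the nearest RN.
\end{itemize}
So your argument proves the displayed formula under its correct reading (a single RN, i.e.\ $N_{\rm r}=1$). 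The order-statistic step you propose is exactly what is needed for the lemma to match its ``nearest RN'' label and its downstream use.
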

The next result that we need is the accurate characterization of the $b-$th moment discussed below.
\begin{theorem}
\label{theo:Moment_RN}
    The $b-$th moment of the conditional success probability $P_{\rm s}$ is given as
    \begin{align}
        M_b(r_0) = \int T_N(r_0, y) T_R(r_0, y) T_M(r_0, y) f_{D_{\rm r}}(y|r_0)dy,  \nonumber 
    \end{align}
    where, the conditional distribution $f_{D_{\rm r}}(y|r_0)$ is given in Lemma~\ref{lem:nearestRNdist}, $T_N(r_0, y)$ corresponds to the impact of the noise power. $T_R(r_0, y)$ corresponds to the impact of interference from non-serving RNs, and $T_M(r_0, y)$ corresponds to the impact of of the interference from the \ac{MBS} tier. In particular,
    \small{
    \begin{align}
        T_N(r_0, y) &= \exp\left(\frac{-b\gamma N_0}{KP_{\rm r}\ell_s(y)}\right), \nonumber \\
        T_R(r_0, y) &= \left(\frac{1}{2\pi R} \int_{0}^{\pi} \int_{r_1(\theta)}^{r_2(\theta)} \left(\frac{1}{1 + \gamma \frac{\ell_s(d(r, \theta))}{\ell_s(y)}}\right)^b {\rm d}r {\rm d}\theta + \right. \nonumber \\ 
    &\left. \int_{r_3(\theta)}^{r_4(\theta)} \left(\frac{1}{1 + \gamma \frac{\ell_s(d(r, \theta))}{\ell_s(y)}}\right) {\rm d}r {\rm d}\theta\right)^{N_S - 1}, \nonumber \\ 
        T_M(r_0, y) &= \exp\left(-\lambda_M \int_0^{2\pi}\int_{R}^{R_M} 1 -  \right. \nonumber \\
        &\left.\left(\frac{1}{1 + \gamma \frac{\ell_m\left(d(r, \theta)\right)}{\ell_s(y)}}\right)^b {\rm d}r {\rm d}\theta \right),
    \end{align}
    }
    \normalsize
    where $d(r) = r^2 + r_0^2 + 2rr_0 \cos \theta$. The limits $r_1(\theta)$ and $r_2(\theta)$ are respectively:
    \begin{align}
        r_1(\theta) &= \min\{-R, r_0-\sqrt{y^2\sec^2\theta - r_0^2\tan^2\theta}\},\nonumber \\
        r_2(\theta) &= \max\{-R, r_0-\sqrt{y^2\sec^2\theta - r_0^2\tan^2\theta}\}\nonumber, \\
        r_3(\theta) &= \min\{R, r_0+\sqrt{y^2\sec^2\theta - r_0^2\tan^2\theta}\}, \nonumber \\
        r_4(\theta) &= \max\{R, r_0+\sqrt{y^2\sec^2\theta - r_0^2\tan^2\theta}\}. \nonumber
    \end{align}
\end{theorem}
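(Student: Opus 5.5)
The plan is to start from the product form of $P_{\rm s}(r_0)$ in Lemma~\ref{lem:CSP}, raise it to the power $b$, and take expectations in stages. First condition on the serving distance $D_{\rm r}=y$; this is where Lemma~\ref{lem:nearestRNdist} enters. Given $y$, the three factors of $P_{\rm s}^b(r_0)$ depend on disjoint sources of randomness: the noise factor on $y$ only, the RN product on $\Phi'_{\rm r}$, and the MBS product on $\Phi_{\rm m}$. Since $\Phi_{\rm r}$ and $\Phi_{\rm m}$ are independent, the conditional expectation factorizes as $T_N(r_0,y)\,T_R(r_0,y)\,T_M(r_0,y)$. Deconditioning over $y$ with $f_{D_{\rm r}}(y\mid r_0)$ then gives $M_b(r_0)$.

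The noise term needs no work: $T_N(r_0,y)$ is the deterministic first factor of Lemma~\ref{lem:CSP} raised to the $b$-th power.

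For the MBS tier I will apply the probability generating functional of a PPP on the annulus $\mathcal{B}((0,0),R_M)\setminus\mathcal{B}((0,0),R)$. The function is $v(x)=\bigl(1+\gamma\frac{P_{\rm m}}{P_{\rm r}}\frac{\ell_{\rm m}(\|x-(r_0,0)\|)}{\ell_{\rm r}(y)}\bigr)^{-b}$, and $\mathbb{E}\prod v=\exp(-\lambda_M\int(1-v))$. Writing the integral in polar coordinates centred at the origin, the distance to the test UE is given by the law of cosines through $d(r,\theta)$. This produces $T_M$; the Jacobian $r$ and the power ratio can be absorbed into the integrand and the notation.

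The RN tier is the main obstacle. Conditioned on the nearest RN lying at distance $y$, the remaining $N_{\rm r}-1$ RNs are i.i.d., uniform on the part of the disk $\mathcal{B}((0,0),R)$ lying outside $\mathcal{B}((r_0,0),y)$; this is the standard BPP nearest-neighbour conditioning. The expectation of the product is therefore the $(N_{\rm r}-1)$-th power of a single-point average of $(1+\gamma\,\ell_{\rm r}(d)/\ell_{\rm r}(y))^{-b}$ over that region, normalized by its area. The hard part is parametrising the region. I would use a line through the test UE at angle $\theta$ and intersect it with the circle $\|x-(r_0,0)\|=y$; solving the quadratic gives the chord endpoints $r_0\pm\sqrt{y^2\sec^2\theta-r_0^2\tan^2\theta}$. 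Clipping these against the disk boundary $\pm R$ gives the two admissible segments with limits $r_1,r_2$ and $r_3,r_4$, one on each side of the exclusion ball. By symmetry about the $x$-axis, $\theta$ need only range over $[0,\pi]$. I would also check the normalising constant and the Jacobian carefully, since the stated prefactor $1/(2\pi R)$ seems to play this role only approximately. Finally, substituting the three terms and integrating against Lemma~\ref{lem:nearestRNdist} over $y\in[0,R+r_0]$ completes the argument.
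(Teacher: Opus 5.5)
Your skeleton is the paper's own proof: condition on the serving distance $y$, split $P_{\rm s}^b$ into noise, RN and MBS factors using independence of $\Phi_{\rm r}$ and $\Phi_{\rm m}$, treat the $N_{\rm r}-1$ interferers as i.i.d.\ uniform on $\mathcal{A}(y)=\mathcal{B}(0,R)\setminus\mathcal{B}((r_0,0),y)$, and apply the PPP generating functional on the annulus. The step that fails as written is the one meant to produce $r_1,\dots,r_4$. A line through the test UE meets the circle $\|x-(r_0,0)\|=y$ at distance exactly $y$ on either side, because the UE is its centre. So no quadratic with roots $r_0\pm\sqrt{y^2\sec^2\theta-r_0^2\tan^2\theta}$ arises. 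The paper instead intersects a line through the \emph{origin}, $y'=x\tan\theta$, with that circle. Even then the displayed limits do not come out:
Take the signed coordinate $r\in[-R,R]$ along the line, i.e.\ the point $(r\cos\theta,r\sin\theta)$. The roots are then $r_0\cos\theta\pm\sqrt{y^2-r_0^2\sin^2\theta}$; the displayed expression is the $x$-root multiplied by $\sec^2\theta$, and it blows up at $\theta=\pi/2$. The area element is $|r|\,{\rm d}r\,{\rm d}\theta$, and both pieces need the exponent $b$. Each piece must be intersected with $[-R,R]$ and set empty when the excluded chord leaves the disk; the $\min/\max$ definitions do not do this. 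The cleanest route is polar coordinates centred at the UE, with $\phi$ measured from the outward radial direction. Then $\mathcal{A}(y)=\{(\rho,\phi):y<\rho\le\rho_{\max}(\phi)\}$ with $\rho_{\max}(\phi)=\sqrt{R^2-r_0^2\sin^2\phi}-r_0\cos\phi$, and the interferer distance is simply $\rho$.

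Second, do not leave the normalisation ``approximate''. Exchangeability of the BPP points gives exactly
\begin{align*}
M_b(r_0)&=N_{\rm r}\int f_1(y)\,T_N(r_0,y)\,T_M(r_0,y)\\
&\quad\times\Bigl(\frac{1}{\pi R^2}\int_{\mathcal{A}(y)}v_y(x)\,{\rm d}x\Bigr)^{N_{\rm r}-1}{\rm d}y .
\end{align*}
Here $v_y(x)=\bigl(1+\gamma\,\ell_{\rm r}(\|x-(r_0,0)\|)/\ell_{\rm r}(y)\bigr)^{-b}$, and $f_1$ is the density displayed in Lemma~\ref{lem:nearestRNdist}. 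That density is the distance to a \emph{single} uniform RN, not to the nearest of $N_{\rm r}$. Equivalently, use the true nearest-neighbour density $N_{\rm r}f_1(1-F_1)^{N_{\rm r}-1}$, with $F_1$ the CDF of $f_1$, together with the prefactor $1/|\mathcal{A}(y)|$ that your conditioning produces; this works because $|\mathcal{A}(y)|=\pi R^2(1-F_1(y))$. Combining your $1/|\mathcal{A}(y)|$ with Lemma~\ref{lem:nearestRNdist} verbatim is wrong for $N_{\rm r}>1$. The stated $1/(2\pi R)$ with a Jacobian-free ${\rm d}r\,{\rm d}\theta$ matches neither correct pairing. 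Likewise, the Jacobian $r$ and the factor $P_{\rm m}/P_{\rm r}$ in $T_M$ change its value and cannot be ``absorbed into notation''. Also, $d(r,\theta)$ as displayed is a squared distance. Your argument, like the paper's, therefore proves a corrected version of the theorem. State that version explicitly instead of claiming the displayed $T_R$ and $T_M$.
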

\begin{IEEEproof}
The $b$-th moment is evaluated as
\begin{align}
        M_b &= \mathbb{E}\left[P_{\rm r}^b\right]  = T_N(r_0, b) \cdot\nonumber \\
        &  \mathbb{E}\left[\prod_{\Phi_{\rm r}} \left(\frac{1}{1 + \gamma \frac{\ell_s(d_i)}{\ell_s(d)}}\right)\right] \mathbb{E}\left[\prod_{\Phi_{\rm m}}\left(\frac{1}{1 + \gamma \frac{P_{\rm s}}{P_{\rm r}}\frac{\ell_m(d_j)}{\ell_s(d)}}\right)^b\right] \nonumber
\end{align}
where $T_N(r_0, b, y) = \exp\left(\frac{-b\gamma N_0}{KP_{\rm r}\ell_s(y)}\right)$.
Now,
\begin{align}
    &\mathbb{E}\left[\prod_{\Phi_{\rm r}} \left(\frac{1}{1 + \gamma \frac{\ell_s(d_i)}{\ell_s(d)}}\right)\right] \nonumber \\
    &\overset{(a)}{=} \mathbb{E}_{\Phi_{\rm r}}\left[\frac{1}{2\pi R} \int_{0}^{2\pi} \int_0^{r_1(\theta)} \left(\frac{1}{1 + \gamma \frac{\ell_s(d(r, \theta))}{\ell_s(y)}}\right) {\rm d}r {\rm d}\theta + \right. \nonumber \\ 
    &\left. \int_{r_1(\theta)}^{r_2(\theta)} \left(\frac{1}{1 + \gamma \frac{\ell_s(d(r, \theta))}{\ell_s(y)}}\right) {\rm d}r {\rm d}\theta\right], \nonumber \\
    &\overset{(b)}{=} \left(\frac{1}{2\pi R} \int_{0}^{2\pi} \int_{r_1(\theta)}^{r_2(\theta)} \left(\frac{1}{1 + \gamma \frac{\ell_s(d(r, \theta))}{\ell_s(y)}}\right)^b {\rm d}r {\rm d}\theta \right)^{N_S - 1}, \nonumber
\end{align}
where $d(r) = r^2 + r_0^2 + 2rr_0 \cos \theta$. In step (a) the limits on the range of the distance of the interfering RNs are dependent on $\theta$. Step (b) follows from the fact that conditioned on the location of the nearest RN ${\bf x}_1$, the remaining point process $\Phi_{\rm r}$ is a BPP with $N_S -1$ points uniformly distributed in the area $\mathcal{A} = \mathbb{B}(0, R) \backslash \left(\mathbb{B}(0, R) \cap \mathbb{B}(r_0, d)\right)$. Thus, given that the interfering RNs are not present in $\mathcal{A}$, their distances are found by simultaneously solving $y = x \tan \theta$ and $(x - r_0)^2 + y^2 = d^2$, where the first equation is that of a line through the origin at an angle $\theta$ with the x-axis, while the second equation is that of a circle with the test UE at its center and a radius equal to the distance of the nearest RN.
In case the above equations do not have a solution for a given $\theta$, then at that angle, the RNs can have any distance between 0 to $R$. On the other hand, corresponding to the MBSs, we have
\begin{align}
    &\mathbb{E}\left[\prod_{\Phi_{\rm m}}\left(\frac{1}{1 + \gamma \frac{P_{\rm s}}{P_{\rm r}}\frac{\ell_m(d_j)}{\ell_s(d)}}\right)^b\right] \nonumber \\
    &= \exp\left(-\lambda_M \int_0^{2\pi}\int_{R}^{R_M} 1 - \left(\frac{1}{1 + \gamma \frac{\ell_m\left(d(r, \theta)\right)}{\ell_s(y)}}\right)^b {\rm d}r {\rm d}\theta \right). \nonumber
\end{align}
This follows from the \ac{PGF} of a \ac{PPP}.
\end{IEEEproof}
Using Lemma~\ref{lem:nearestRNdist} and Theorem~\ref{theo:Moment_RN}, we will characterize the \ac{MD}. However, before discussing the same, let us first analyze the impact of \ac{MBS} cooperation on the above.

\subsection{\ac{MBS} Cooperation}
In the presence of nearby \ac{MBS} of the same operator, cooperative deployment is necessary~\cite{ghatak2021coverage}. In particular, the \acp{UE} receiving sufficient downlink power from the outdoor \ac{MBS} may choose to connect to the \ac{MBS} instead of an indoor \ac{RN}. Furthermore, the location of the indoor \acp{RN} must be optimized by considering the interference caused by the \acp{RN} to such \acp{UE}. The conditional success probability must be evaluated by conditioning not only on the realization of the \ac{RN} and the \ac{MBS} process but also on the association state, i.e., either \ac{RN} or \ac{MBS} association. Note that for the multi-tier case, the derivation of the \ac{SINR} \ac{MD} is significantly more challenging compared to the analysis present in~\cite{wang2018sir} and \cite{cui2017sir} due to the non-stationarity of $\Phi_{\rm s}$ and $\Phi_{\rm m}$. First, let us consider the distance to the nearest \ac{MBS} from the test \ac{UE}, denoted by $D_{\rm M}$. As shown in Fig.~\ref{fig:nearest_MBS}, we need to find the probability that there are no MBS present in the light blue region. The distribution is derived below from simple geometric arguments and hence the proof is being skipped.
\begin{figure}
    \centering
    \includegraphics[width = 0.6\linewidth]{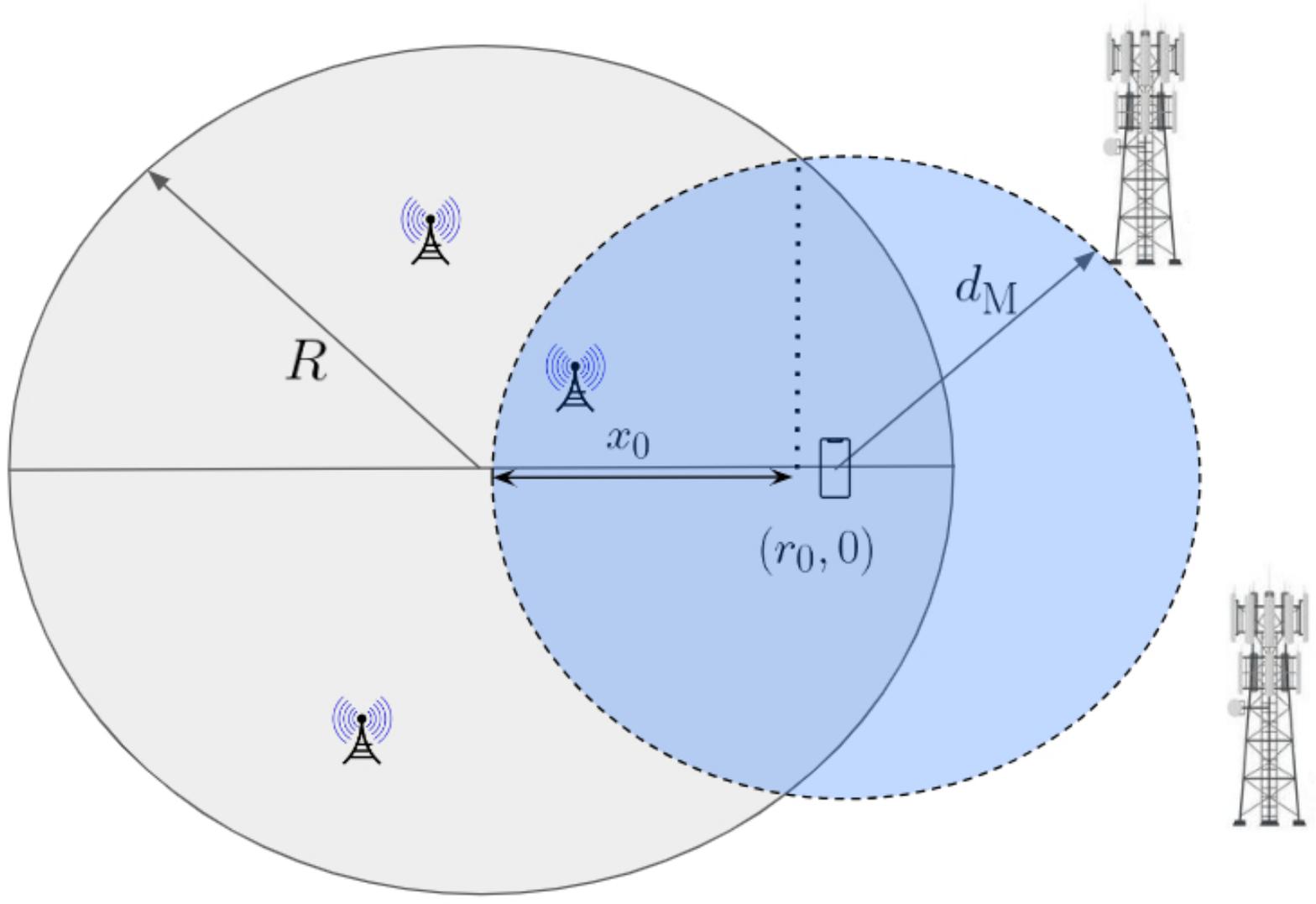}
    \caption{Illustration of the nearest \ac{MBS} distance.}
    \label{fig:nearest_MBS}
\end{figure}
\begin{lemma}
\label{lem:nearestMBSdist}
The distribution of the distance from the test UE to the nearest MBS, conditioned on $r_0$ is given by
    \begin{align}
        f_{D_{\rm m}}(y | r_0) = \begin{cases}
            f_1(y,r_0); -R \leq x' \leq R ,  \nonumber \\ 
2 \pi y \lambda_M \exp\left(-\pi\lambda_M\left(y^2- R^2\right)\right); \text{otherwise},
        \end{cases}
    \end{align}
    where $f_1(y, r_0)$ is given in ~\eqref{eqpdfd1} and $x'$ is the abscissa of the intersection point of the two circles.
    \begin{table*}
\centering
        \small{
    \begin{align}\nonumber
 f_{d_{\rm M}}(y, r_0)=&\lambda_{\rm M} exp\left(-\lambda_{\rm M} \left(y^2\left(\frac{\pi}{2}+t_2\sqrt{1-t_2^2}-\tan^{-1}\frac{-t_2}{\sqrt{1-t_2^2}}\right) -R^2\left(\frac{\pi}{2}-t_1\sqrt{1-t_1^2}-\tan^{-1}\frac{t_1}{\sqrt{1-t_1^2}}\right)\right)\right)\\
 &\times\frac{-yR}{r_0}\left[{2\sqrt{1-t_1^2}}\right]++t_3y^2\left[2{\sqrt{1-t_2^2}}\right]+2y\left[t_2\sqrt{1-t_2^2}+\frac{\pi}{2}-\tan^{-1}\frac{-t_2}{\sqrt{1-t_2^2}}\right]
 \label{eqpdfd1}
\end{align}
}
\normalsize
where $t_1=\frac{R^2y^2+r_0^2}{2r_0R}$, $t_2=\frac{r_0^2-R^2+y^2}{2r_0y}$, and $t_3=\frac{y^2+r_d^2-u_0^2}{2u_0y^2}$.
    \hrule
\end{table*}
\end{lemma}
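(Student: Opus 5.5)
The approach is the void-probability method for the PPP $\Phi_{\rm m}$ of intensity $\lambda_M$ on the annulus $\mathcal{B}((0,0),R_M)\setminus\mathcal{B}((0,0),R)$. For the test UE at $(r_0,0)$ with $r_0\le R$:
\begin{align}
\mathbb{P}(D_{\rm m}>y\,|\,r_0)=\exp\left(-\lambda_M\,|A(y)|\right),\quad A(y)=\mathcal{B}((r_0,0),y)\setminus\mathcal{B}((0,0),R). \nonumber
\end{align}
Here $A(y)$ is the light blue region of Fig.~\ref{fig:nearest_MBS}, the only part of the disk around the UE where an MBS can lie. The density then follows as $f_{D_{\rm m}}(y|r_0)=\lambda_M\frac{{\rm d}|A(y)|}{{\rm d}y}\exp(-\lambda_M|A(y)|)$, for $y$ below the distance to the outer boundary (where $R_M$ plays no role).

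The work is computing $|A(y)|=\pi y^2-|\mathcal{B}((r_0,0),y)\cap\mathcal{B}((0,0),R)|$, split by whether the circle $|{\bf x}-(r_0,0)|=y$ meets $|{\bf x}|=R$. Solving the two circle equations gives the intersection abscissa
\begin{align}
x'=\frac{R^2-y^2+r_0^2}{2r_0}. \nonumber
\end{align}
\textbf{Intersecting case}, $|x'|\le R$, equivalently $R-r_0\le y\le R+r_0$: the intersection is a lens. I will use the standard lens-area formula, writing each circular segment as sector minus triangle. With $t_2=\cos$ of the half-angle at the UE and $t_1$ the corresponding cosine at the origin, the segments are expressible through $\tan^{-1}(t/\sqrt{1-t^2})$ and $t\sqrt{1-t^2}$. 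This produces the exponent in \eqref{eqpdfd1}: $y^2(\frac{\pi}{2}+t_2\sqrt{1-t_2^2}-\tan^{-1}\frac{-t_2}{\sqrt{1-t_2^2}})$ is the UE-centered disk part outside, and the $R^2(\cdots t_1\cdots)$ term is the removed cap of the indoor disk. I will then differentiate with respect to $y$ by the chain rule through $t_1(y)$ and $t_2(y)$. The derivative of $\frac{\pi}{2}\pm t\sqrt{1-t^2}\mp\sin^{-1}t$ collapses to $\pm 2\sqrt{1-t^2}$, giving the bracketed prefactor terms $2\sqrt{1-t_1^2}$ and $2\sqrt{1-t_2^2}$ multiplied by ${\rm d}t_i/{\rm d}y$ (which is where $t_3$ appears), plus $2y[\cdots]$ from differentiating $y^2$.

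\textbf{Non-intersecting case}, $y\ge R+r_0$: the UE disk contains the whole indoor disk, so $|A(y)|=\pi(y^2-R^2)$. This directly gives $f=2\pi\lambda_M y\exp(-\pi\lambda_M(y^2-R^2))$. For $y<R-r_0$ the region $A(y)$ is empty, so there is no mass there.

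The main obstacle is the bookkeeping in the lens case. Signs of $t_1,t_2$ change depending on whether the chord lies left or right of the origin and of the UE, so the branches of $\tan^{-1}$ must be chosen consistently. I would first verify the area expression at the endpoints $y=R-r_0$ (area $0$) and $y=R+r_0$ (area $\pi(y^2-R^2)$), which checks continuity with the other case before differentiating. I would also sanity-check the resulting density numerically against a Monte Carlo simulation of the PPP.
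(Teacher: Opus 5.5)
Your proposal is correct and is the argument the paper omits as ``simple geometric arguments'': the void probability of the light-blue region in Fig.~\ref{fig:nearest_MBS}. Carried out, your lens area and chain rule reproduce \eqref{eqpdfd1} term by term once $t_1=x'/R=\frac{R^2-y^2+r_0^2}{2r_0R}$ (the printed $t_1$ is a typo) and $t_3={\rm d}t_2/{\rm d}y=\frac{y^2+R^2-r_0^2}{2r_0y^2}$. Your zero density for $y<R-r_0$ (i.e.\ $x'>R$) is also right, and it shows the lemma's ``otherwise'' branch holds only for $x'<-R$.

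One slip to fix: since $-\tan^{-1}\frac{-t_2}{\sqrt{1-t_2^2}}=+\sin^{-1}t_2$, the two brackets are $\frac{\pi}{2}\pm\bigl(t\sqrt{1-t^2}+\sin^{-1}t\bigr)$, not $\frac{\pi}{2}\pm t\sqrt{1-t^2}\mp\sin^{-1}t$; only the former differentiates to $\pm2\sqrt{1-t^2}$. The branch bookkeeping you worry about is also unnecessary. The area of a radius-$\rho$ disk beyond a chord at signed offset $t\rho$, namely $\rho^2(\cos^{-1}t-t\sqrt{1-t^2})$, is valid for all $t\in[-1,1]$. Moreover, ${\rm d}|A(y)|/{\rm d}y$ is just the length $2y\cos^{-1}(-t_2)$ of the arc of $\partial\mathcal{B}((r_0,0),y)$ outside $\mathcal{B}((0,0),R)$, and the three prefactor terms of \eqref{eqpdfd1} indeed sum to it.
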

Given this distribution and Lemma~\ref{lem:nearestRNdist}, the conditional association probabilities are derived below.
\begin{lemma}
Conditioned on $r_0$, the access probabilities (alternatively, the probability of dominance over macro) for the \ac{UE} are
\begin{align}
    \mathcal{P}_{\rm r}(r_0) &= \mathbb{E}\left[F_{D_{\rm r}}\left( \left(\frac{KP_{\rm m}}{KP_{\rm r}}\right)^{-\frac{1}{\alpha_{\rm r}}}\right) D_{\rm m}^{\frac{\alpha_{\rm m}}{\alpha_{\rm s}}}\right], \nonumber \\
        \mathcal{P}_{\rm m} &= 1 - \mathcal{P}_{\rm r}(r_0), \nonumber
\end{align}
where $F_{D_{\rm r}}$ is the CDF of the nearest RN (the measure with respect to the PDF given in \eqref{lem:nearestRNdist}) while the expectation is with respect to the distribution given in Lemma~\ref{lem:nearestMBSdist}.
\end{lemma}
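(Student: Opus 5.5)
The lemma follows from the max-average-received-power association rule. Conditioned on $r_0$, the test UE connects to the RN tier exactly when the average power from its nearest RN is at least the average power from its nearest MBS:
\begin{align}
KP_{\rm r} D_{\rm r}^{-\alpha_{\rm r}} \geq KP_{\rm m} D_{\rm m}^{-\alpha_{\rm m}}. \nonumber
\end{align}
Small-scale fading is excluded here, since association is based on long-term averaged RSRP. Because the power-law path loss is monotone, the nearest point of each tier is the strongest point of that tier. The comparison therefore reduces to the two nearest distances, and the event becomes
\begin{align}
D_{\rm r} \leq \left(\frac{P_{\rm m}}{P_{\rm r}}\right)^{-1/\alpha_{\rm r}} D_{\rm m}^{\alpha_{\rm m}/\alpha_{\rm r}}. \nonumber
\end{align}
The steps are:
\begin{enumerate}
\item Write $\mathcal{P}_{\rm r}(r_0)=\mathbb{P}(\text{this event}\mid r_0)$.
\item Condition on $D_{\rm m}=y$. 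The RN tier is a BPP and the MBS tier is an independent PPP on the annulus, so $D_{\rm r}$ and $D_{\rm m}$ are conditionally independent given $r_0$. The inner probability is therefore just $F_{D_{\rm r}}$, evaluated at $(P_{\rm m}/P_{\rm r})^{-1/\alpha_{\rm r}}y^{\alpha_{\rm m}/\alpha_{\rm r}}$.
\item $F_{D_{\rm r}}$ is obtained by integrating the density of Lemma~\ref{lem:nearestRNdist} piecewise over $[0,R-r_0]$ and $[R-r_0,R+r_0]$. It is set to $1$ once its argument exceeds $R+r_0$.
\item Take the outer expectation over $y$ using the density of Lemma~\ref{lem:nearestMBSdist}. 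This gives the stated expression, with the exponent written $\alpha_{\rm m}/\alpha_{\rm s}$ since $\ell_{\rm s}$ and $\ell_{\rm r}$ denote the same tier.
\item The complement $\mathcal{P}_{\rm m}=1-\mathcal{P}_{\rm r}(r_0)$ is immediate: the two association events partition the sample space, and ties occur with probability zero because both distance laws are absolutely continuous.
\end{enumerate}

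The main point needing care is the independence and the role of the nearest-RN law. Lemma~\ref{lem:nearestRNdist} is stated as the distance to a uniformly placed RN. For $N_{\rm r}$ RNs, the nearest distance has CDF $1-(1-G(y\mid r_0))^{N_{\rm r}}$, where $G$ is the single-point CDF. I would therefore define $F_{D_{\rm r}}$ through this order statistic. With that reading, the lemma's phrase ``CDF of the nearest RN'' is exactly right, and the proof goes through unchanged.

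One more check is the support of the MBS distance. Since MBSs lie outside $\mathcal{B}(0,R)$, we have $D_{\rm m}\ge R-r_0$, and the outer integral should run over that support only. This affects only the limits of integration, not the structure of the argument.
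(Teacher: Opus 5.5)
Your proposal is correct and takes the same route as the paper, whose proof consists only of evaluating $\mathbb{P}\left(KP_{\rm r}D_{\rm r}^{-\alpha_{\rm r}} > KP_{\rm m}D_{\rm m}^{-\alpha_{\rm m}}\right)$; your conditioning on $D_{\rm m}$ and use of the independence of the two tiers spell out that step. Your side remarks are sound refinements that the paper glosses over: the density in Lemma~\ref{lem:nearestRNdist} is the law of a single uniform RN, so $F_{D_{\rm r}}$ should be the order-statistic CDF $1-(1-G(y\mid r_0))^{N_{\rm r}}$, and the factor $D_{\rm m}^{\alpha_{\rm m}/\alpha_{\rm r}}$ belongs inside the argument of $F_{D_{\rm r}}$.
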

\begin{IEEEproof}
    The proof follows directly by evaluating the probability $\mathbb{P}\left(KP_{\rm r}D_{\rm r}^{-\alpha_{\rm r}} > KP_{\rm m}D_{\rm m}^{-\alpha_{\rm m}}\right)$ for random variables $D_{\rm r}$ and $D_{\rm m}$.
\end{IEEEproof}
Now, unlike typical \ac{SG} analysis of the standard success probability, we define the {\it association-conditioned} success probability as a discrete random variable conditioned on realizations of the underlying point processes $\Phi_{\rm m}$ and $\Phi_{\rm r}$ as:
\begin{align}
    P_{\rm s}(r_0) = \begin{cases}
        P_{{\rm s}|{\rm r}}(r_0) &\text{with probability } \mathcal{P}_{\rm r}(r_0), \nonumber \\
        P_{{\rm s}|{\rm m}}(r_0) &\text{with probability } \mathcal{P}_{\rm m}(r_0),
    \end{cases}
\end{align}
where
\begin{align}
    &P_{{\rm s}|{\rm r}}(r_0) =  \nonumber \\
    &e^{\left(\frac{-\gamma N_0}{KP_{\rm r}\ell_s(d_S)}\right)} \cdot \nonumber \left[\prod_{\Phi_{\rm r}}\frac{1}{1 + \gamma \frac{\ell_s(d_i)}{\ell_s(d)}}\right]\left[\prod_{\Phi_{\rm m}}\frac{1}{1 + \gamma \frac{P_{\rm s}}{P_{\rm r}}\frac{\ell_m(d_j)}{\ell_s(d)}}\right].
\end{align}
and
\begin{align}
    &P_{{\rm s}|{\rm m}}(r_0) = \nonumber \\
    &e^{\left(\frac{-\gamma N_0}{KP_{\rm s}\ell_M(d_M)}\right)}
    \left[\prod_{\Phi_R}\frac{1}{1 + \gamma \frac{\ell_s(d_i)}{\ell_s(d)}}\right]\left[\prod_{\Phi'_M}\frac{1}{1 + \gamma \frac{P_{\rm s}}{P_{\rm r}}\frac{\ell_m(d_j)}{\ell_s(d)}}\right].
\end{align}
The first moment of the conditional success probability gives us the standard success probability, which is often referred to as the coverage probability.
\begin{corollary}
The standard success probability can be obtained by taking an expectation of $P_{\rm s}(r_0)$ over the realizations of $\Phi_{\rm m}$ and $\Phi_{\rm s}$ and it matches the classical SINR coverage probability for multi-tier networks. Mathematically, it is
    \begin{align}
        p_{\rm s}(r_0) &= \mathbb{E}_{\Phi_{\rm s}, \Phi_{\rm m}}\left[P_{\rm s}(r_0)\right] \nonumber \\
        &= p_{\rm r}(r_0) \mathcal{P}_{\rm r}(r_0) + p_{\rm m}(r_0) \mathcal{P}_{\rm m}(r_0),
    \end{align}
    where
    \begin{align}
         p_{\rm r}(r_0) = \mathbb{E}^0_{\rm r}\left[P_{{\rm s}|{\rm r}}(r_0)\right]; \;\;
         p_{\rm m}(r_0) = \mathbb{E}^0_{\rm m}\left[P_{{\rm s}|{\rm m}}(r_0)\right]. \nonumber
    \end{align}
$\mathbb{E}^0_{\rm r}$ and $\mathbb{E}^0_{\rm m}$ respectively refer to the conditional point processes given \ac{RN} and \ac{MBS} association under the reduced Palm conditioning~\cite{chiu2013stochastic}.
\end{corollary}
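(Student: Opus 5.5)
The plan is to apply the law of total expectation with respect to the association state. Let $\mathcal{E}_{\rm r}$ be the event that the test UE at $(r_0,0)$ associates with the nearest RN, i.e. $KP_{\rm r}D_{\rm r}^{-\alpha_{\rm r}} > KP_{\rm m}D_{\rm m}^{-\alpha_{\rm m}}$, and let $\mathcal{E}_{\rm m}$ be its complement. The association-conditioned success probability $P_{\rm s}(r_0)$ is a measurable function of $(\Phi_{\rm r},\Phi_{\rm m})$. On $\mathcal{E}_{\rm r}$ it equals $P_{{\rm s}|{\rm r}}(r_0)$, and on $\mathcal{E}_{\rm m}$ it equals $P_{{\rm s}|{\rm m}}(r_0)$. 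So I would write
\begin{align}
\mathbb{E}_{\Phi_{\rm r},\Phi_{\rm m}}\left[P_{\rm s}(r_0)\right] = \mathbb{E}\left[P_{{\rm s}|{\rm r}}(r_0)\mid\mathcal{E}_{\rm r}\right]\mathbb{P}(\mathcal{E}_{\rm r}) + \mathbb{E}\left[P_{{\rm s}|{\rm m}}(r_0)\mid\mathcal{E}_{\rm m}\right]\mathbb{P}(\mathcal{E}_{\rm m}). \nonumber
\end{align}
The preceding lemma on access probabilities gives $\mathbb{P}(\mathcal{E}_{\rm r})=\mathcal{P}_{\rm r}(r_0)$ and $\mathbb{P}(\mathcal{E}_{\rm m})=\mathcal{P}_{\rm m}(r_0)$. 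It remains to identify the two conditional expectations with $p_{\rm r}(r_0)$ and $p_{\rm m}(r_0)$.

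For this identification I would disintegrate over the serving distance. On $\mathcal{E}_{\rm r}$, condition on $D_{\rm r}=y$ using Lemma~\ref{lem:nearestRNdist}. Given this, the remaining $N_{\rm r}-1$ RNs form a BPP outside $\mathcal{B}((r_0,0),y)\cap\mathcal{B}(0,R)$, exactly as in step (b) of the proof of Theorem~\ref{theo:Moment_RN}. Conditioning on $\mathcal{E}_{\rm r}$ only adds the constraint that $\Phi_{\rm m}$ has no point within distance $(P_{\rm m}/P_{\rm r})^{1/\alpha_{\rm m}}y^{\alpha_{\rm r}/\alpha_{\rm m}}$ of the test UE. By the independence property of the PPP, the restriction of $\Phi_{\rm m}$ to the complement of that exclusion ball is still a PPP. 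This conditional law, the reduced Palm law of the interferers given the serving node, is what $\mathbb{E}^0_{\rm r}$ denotes. Taking the expectation of the product form of $P_{{\rm s}|{\rm r}}$ then reproduces the $b=1$ case of Theorem~\ref{theo:Moment_RN}, with the MBS PGF integral restricted to the exclusion region. This step therefore defines $p_{\rm r}(r_0)$.

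The MBS branch is handled symmetrically. Condition on $D_{\rm m}=y$ with density from Lemma~\ref{lem:nearestMBSdist}, remove the serving MBS to obtain $\Phi'_{\rm m}$, and use the fact that the other MBSs lie beyond $y$. The RN BPP is conditioned to have no point inside the corresponding RN exclusion disc, and its points remain i.i.d.\ uniform on what is left. This gives $p_{\rm m}(r_0)=\mathbb{E}^0_{\rm m}[P_{{\rm s}|{\rm m}}(r_0)]$.

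The final step is to observe that averaging the exponential-noise and product (Laplace-transform) form over these conditional laws is precisely the standard derivation of the SINR coverage for max-power association in multi-tier networks. The only difference is that the BPP tier enters through a binomial, rather than exponential, PGF. This is why the result matches the classical coverage expression.

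The main obstacle is making the conditioning in the second step rigorous. The event $\mathcal{E}_{\rm r}$ couples $D_{\rm r}$ and $D_{\rm m}$, so the conditional expectation is not simply a product of independent expectations. I would handle this by writing $\mathbb{E}[P_{{\rm s}|{\rm r}}\mathbf{1}_{\mathcal{E}_{\rm r}}]$ as a double integral over $(y,y_{\rm m})$ with the joint density $f_{D_{\rm r}}f_{D_{\rm m}}$, which factorizes because $\Phi_{\rm r}$ and $\Phi_{\rm m}$ are independent. I would then divide by $\mathcal{P}_{\rm r}(r_0)$, absorbing this normalization into the definition of $\mathbb{E}^0_{\rm r}$.
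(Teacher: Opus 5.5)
Your proposal is correct and takes the same route the paper implicitly takes. The paper gives no separate proof: the corollary is immediate from defining $P_{\rm s}(r_0)$ as $P_{{\rm s}|{\rm r}}(r_0)$ or $P_{{\rm s}|{\rm m}}(r_0)$ according to the max-power association event, then applying the law of total expectation with $\mathbb{P}(\mathcal{E}_{\rm r})=\mathcal{P}_{\rm r}(r_0)$. Your extra disintegration over the serving distance (tilting $f_{D_{\rm r}}$ by the MBS void probability of the exclusion ball and normalizing by $\mathcal{P}_{\rm r}(r_0)$) is consistent, and it explains what $\mathbb{E}^0_{\rm r}$ and $\mathbb{E}^0_{\rm m}$ mean more explicitly than the paper does.
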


\begin{corollary}
    The $b-$th moment is obtained in a similar manner
    \begin{align}
        M_b(r_0) = M_{b|R}(r_0) \mathcal{P}_R(r_0) + M_{b|M}(r_0) \mathcal{P}_M(r_0),  \nonumber
    \end{align}
    where $M_{b|t} = \mathbb{E}\left[P^b_{S|t}\right]$ for $t \in \{R,M\}$.
\end{corollary}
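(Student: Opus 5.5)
The plan is to obtain the corollary by conditioning on the association state and applying the law of total expectation. The moment of a mixture is then the mixture of the conditional moments. By definition, the association-conditioned success probability $P_{\rm s}(r_0)$ is a two-valued random variable. It equals $P_{{\rm s}|{\rm r}}(r_0)$ on the event $\mathcal{E}_{\rm r}=\{KP_{\rm r}D_{\rm r}^{-\alpha_{\rm r}} > KP_{\rm m}D_{\rm m}^{-\alpha_{\rm m}}\}$ and $P_{{\rm s}|{\rm m}}(r_0)$ on its complement $\mathcal{E}_{\rm m}$. These two events partition the sample space of $(\Phi_{\rm r},\Phi_{\rm m})$ for the test UE at $(r_0,0)$. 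We can therefore write
\begin{align}
P_{\rm s}^b(r_0) = P_{{\rm s}|{\rm r}}^b(r_0)\,\mathbbm{1}_{\mathcal{E}_{\rm r}} + P_{{\rm s}|{\rm m}}^b(r_0)\,\mathbbm{1}_{\mathcal{E}_{\rm m}}, \nonumber
\end{align}
because the product of the two indicators vanishes, so no cross terms appear when the $b$-th power is taken.

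Next I take expectations over $\Phi_{\rm r}$ and $\Phi_{\rm m}$ and use $\mathbb{E}[X\mathbbm{1}_{\mathcal{E}}]=\mathbb{E}[X\mid\mathcal{E}]\,\mathbb{P}(\mathcal{E})$. By the lemma on access probabilities, $\mathbb{P}(\mathcal{E}_{\rm r})=\mathcal{P}_{\rm r}(r_0)$ and $\mathbb{P}(\mathcal{E}_{\rm m})=\mathcal{P}_{\rm m}(r_0)=1-\mathcal{P}_{\rm r}(r_0)$. This gives $M_b(r_0)=M_{b|R}(r_0)\mathcal{P}_R(r_0)+M_{b|M}(r_0)\mathcal{P}_M(r_0)$ with $M_{b|t}=\mathbb{E}[P^b_{S|t}\mid\mathcal{E}_t]$. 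Setting $b=1$ recovers the preceding corollary, which is a useful consistency check.

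The main obstacle is to make precise what the conditional moments $M_{b|t}$ are, because conditioning on $\mathcal{E}_t$ changes the law of the point processes. I would handle this by first conditioning on the serving distances, $D_{\rm r}=y$ for RN association or $D_{\rm m}=y$ for MBS association. Given $D_{\rm r}=y$ and association to the RN tier, the remaining $N_{\rm r}-1$ RNs form a BPP on $\mathcal{B}(0,R)\setminus\mathcal{B}((r_0,0),y)$, exactly as in step (b) of Theorem~\ref{theo:Moment_RN}. The MBS process, conditioned on $\mathcal{E}_{\rm r}$, is a PPP with no points in the ball of radius $(P_{\rm m}/P_{\rm r})^{1/\alpha_{\rm m}}y^{\alpha_{\rm r}/\alpha_{\rm m}}$ around the test UE. Its PGFL therefore has the same form as $T_M$, with the inner integration limit enlarged.

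Symmetrically, given $D_{\rm m}=y$ and MBS association, the MBS interferers form a PPP outside $\mathcal{B}((r_0,0),y)$ within the annulus. The RNs form a BPP of $N_{\rm r}$ points excluded from the corresponding exclusion ball. The conditional moments are then integrals of products of the noise term, the BPP term and the PPP term against the conditional distance densities. These densities follow from Lemma~\ref{lem:nearestRNdist} and Lemma~\ref{lem:nearestMBSdist}, normalized by $\mathcal{P}_t(r_0)$. The corollary itself only requires the decomposition step, so this characterization of $M_{b|t}$ serves to confirm that the terms are well defined.
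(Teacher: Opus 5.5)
Your proposal is correct and follows the paper's own (implicit) argument: the paper gives no separate proof, only that the result follows ``in a similar manner'' to the $b=1$ corollary. That means splitting over the association state and applying the law of total expectation, with $M_{b|t}$ read as the association-conditioned moment (the paper's $\mathbb{E}^0_t$), exactly as you do; your characterization of the conditioned RN and MBS processes goes beyond what the statement needs, and the one point to tighten is that $P_{\rm s}^b=P_{{\rm s}|{\rm r}}^b\mathbbm{1}_{\mathcal{E}_{\rm r}}+P_{{\rm s}|{\rm m}}^b\mathbbm{1}_{\mathcal{E}_{\rm m}}$ holds pointwise for any real or complex $b$ (including $b=-1$ and $b=ju$) because exactly one indicator equals one, not because cross terms vanish.
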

\begin{rmk}
    Albeit intuitive, a word of caution is necessary while working with \ac{MD}s of multi-tier networks. The overall conditional success probability for a given realization is not $P_{{\rm s}|{\rm r}} \mathcal{P}_{\rm r}(r_0) + P_{{\rm s}|{\rm m}} \mathcal{P}_{\rm m}(r_0)$ and hence, $M_b(r_0)  \neq \mathbb{E}\left[\left(P_{{\rm s}|{\rm r}} \mathcal{P}_{\rm r}(r_0) + P_{{\rm s}|{\rm m}} \mathcal{P}_{\rm m}(r_0)\right)^b \right]$.
\end{rmk}
Of special interest are $b = 1$ and $b = -1$. Specifically, $M_1(r_0)$ represents the standard success probability $P_{\rm r}(r_0)$, often referred to as the SINR coverage probability. $M_{-1}$ represents the mean-local delay, i.e., the average number of re-transmissions required for success in the UE-RN link.
The exact \ac{MD} of the SINR can be calculated using the Gil-Palaez theorem as
\begin{align*}
    F_{P_{\rm r}}(z) = \frac{1}{2} + \frac{1}{\pi}\int_0^\infty \frac{\Im(e^{-ju \log(z)}M_{ju})}{u} \mathrm{d}u,
\end{align*}
where $\Im\left(\cdot\right)$ is the imaginary part of the argument and $M_{ju}$ is the $ju$-th moment. Often a simple yet accurate approximation of the SINR MD is useful leveraging the beta distribution. This approximation arises naturally as the support of the \ac{MD} is $[0, 1]$, and is obtained by matching the first moment $M_1$ and the second moment $M_2$. Specifically, the MD is approximately expressed as
\begin{align*}
    \bar{F}(\beta, t) \approx 1 - I_{\varepsilon}\left(\frac{k_1 k_2}{1-k_1}, k_2\right),
\end{align*}
where $I_{\varepsilon}(x,y) = \int_0^{1-\varepsilon} z^{x-1}(1-z)^{y-1}\mathrm{d}z/B(x,y)$ is the regularized incomplete beta function with $B(\cdot, \cdot)$ the beta function, $k_1 = M_1$, and $k_2 = (M_1 - M_2)(1-M_1)/(M_2 - M_1^2)$. The accuracy of the beta approximation is confirmed in Fig.~\ref{fig:md_basic}.
{It must be emphasized that the \ac{SG} analysis, based on the \ac{BPP} assumption, corresponds to a scenario where the RNs are uniformly and randomly placed within the deployment area. As such, the number of RNs estimated via SG should be interpreted as a \emph{conservative upper bound} on the true optimal number of RNs. Optimized placement strategies that account for the environment and user distribution will, in general, require fewer RNs to meet the same performance target. Therefore, the SG-based estimate serves as a useful \emph{baseline for unplanned deployments}, while the subsequent ILP and clustering-based approaches refine this bound toward an efficient and practically realizable network design. To summarize, at this stage, the \ac{MD} analysis provides, for each candidate RN density, the fraction of UEs across network realizations that meet the reliability target $(\beta, \gamma)$. Given a network-level design requirement (e.g., 90\% of UEs achieving SINR $\geq \gamma$ with reliability $\beta$), the minimum RN density that satisfies this target is identified. This determines the corresponding number of RNs, denoted by $K$, which serves as the baseline for subsequent clustering and refinement in Section~V.}
\begin{table*}[t]
\centering
\begin{tabular}{|l|p{0.35\linewidth}|p{0.35\linewidth}|}
\hline
                       & ILP & MD \\ \hline
Accuracy      & Local optimal for a given environment  &Probabilistic   \\ \hline
Generalizability&Non-generalizable, insights not transferable &Distribution across all network realizations\\
\hline
Small-Scale Fading &Not taken into account &Statistical models (Rayleigh, Nakagami, etc) \\
\hline
Impact of blockages &Accurate characterization & Statistical models (3GPP etc.) \\
\hline
Computation complexity &Very high, often infeasible &Low \\
\hline
\end{tabular}
\caption{Difference between MD analysis and ILP.}
\label{table:comp_table_MD_ILP}
\end{table*}

{The \ac{SG} derivation adopts several simplifying assumptions for analytical tractability: (i) an isotropic, disk-shaped deployment region, (ii) uniformly distributed UEs, (iii) a distance-only pathloss model $\ell(d)=K d^{-\alpha}$ with no per-wall deterministic penetration term, and (iv) small-scale fading (Rayleigh) but no closed-form shadowing model. These assumptions enable closed-form expressions for the complex moments and a beta-distribution approximation for the meta-distribution, but they also limit the direct numerical accuracy of the expressions in environments that strongly deviate from these assumptions.

In particular, the closed-form angular integrations in Theorem~1 exploit isotropy of the disk domain. For irregular domains or severe edge effects, the angular limits change and the integrals must be evaluated numerically or replaced by Monte-Carlo estimates. Furthermore, if the UE locations follow a general density $\lambda_u(x)$, any empirical spatial average $\frac{1}{N}\sum_i g(x_i)$ in the moment computations must be replaced by the spatial integral $\frac{1}{|A|}\int_A g(x)\lambda_u(x)\,dx$ (or weighted average with $\lambda_u$). The MD framework itself remains valid; only the integrals that produce $M_b$ change.
  
\begin{rmk}
Including a multiplicative shadowing factor $S$ (e.g., log-normal) that multiplies the pathloss leads to a modification of the $b$-th moment expressions by the factor $\mathbb{E}[S^{-b}]$ wherever $\ell(\cdot)$ appears in denominators inside Laplace/PGF terms. For example, replacing $\ell_s(y)$ by $\ell_s(y)S$ changes the noise term $\exp\!\big(-b\gamma N_0/(K P_r \ell_s(y))\big)$ to its shadowed expectation, and the interference contribution from a tier obtains multiplicative moment-corrections of the form $\mathbb{E}[S^{-b}]$.    
\end{rmk}
}

\subsection{Estimating the Number of \acp{RN} via Integer Linear Program}
While the \ac{SG} analysis gives a statistical characterization for a given deployment area dimension, the downlink \ac{RSRP} and SINR requirements can directly be formulated as an RN minimization problem. Let us approximate the deployment area as a square grid consisting of $n$ cells, $g_i$, $i = \{1, 2, \ldots, n\}$. Let $o_i$ represent the RN occupancy of cell $g_i$, i.e., $o_i = 1$ if an RN is place in cell $g_i$, otherwise we set $o_i = 0$. Furthermore, let us define the connectivity matrix ${\bf C} = \{0,1\}_{\sqrt{n} \times \sqrt{n}}$, wherein, $c_{ij} = 1$ if the cell $g_i$ associates to an RN in cell $g_j$. Furthermore, let us define the symmetric power matrix ${\bf A}_{\sqrt{n} \times \sqrt{n}}$ where $a_{ij}$ represents the fading-averaged downlink power (assuming reciprocity) between the cell $g_i$ and $g_j$, i.e., $a_{i,j} = K P ||g_i - g_j||^{-\alpha}$.
\begin{subequations}
\begin{align}
    \text{min} \;\; & \sum_i^n o_i, \label{eq:obj}\\
    \text{s t} \;\; & \sum_i c_{ij} = 1, \;\; \forall j, \label{const1} \\
    & c_{ij} - o_j \leq 0, \label{const2}\\
    & K_\infty \left(1 - z_{ij}\right) + a_{ij}o_j \geq N'_{ij} + \sum_{j \in \mathcal{G}} c_{ij} o_j. \label{const3}
\end{align}
\label{eq:ilp}
\end{subequations}
Thus, the optimization variables are ${\bf o}_{n \times 1}$ and ${\bf C}_{n \times n}$, that can be stacked together as a single vector ${\bf v} = [{\bf o}^T \text{vec}({\bf C})^T]^T$. The objective, \eqref{eq:obj} denotes that we are interested in minimizing the number of RNs. \eqref{const1} forces each cell/UE to connect to exactly one RN. The constraint \eqref{const2} denotes that a UE in cell $g_i$ can connect to cell $g_j$ iff an RN is placed in $g_j$. Finally, the constraint \eqref{const3} satisfies the minimum SINR requirements for each UE, i.e., $\xi_i \geq \gamma$. $K_\infty$ is a large value that ensures that we are interested in the SINR term only when a UE in $g_i$ connects to an RN in $g_j$, otherwise we ignore the constraint. In order to solve \eqref{eq:ilp}, we reformulate it into the following program.
\begin{subequations}
\begin{align}
    \text{min} \;\; & {\bf f}^T{\bf v},  \nonumber \\
    \text{s t} \;\;& {\bf A}_{eq}{\bf v} = {\bf b}_{eq}; \quad \nonumber  {\bf A}{\bf v} \leq {\bf b}. \nonumber 
\end{align}
\label{eq:ilp1}
\end{subequations}
where,
\begin{align}
    {\bf A}_{eq} &= \left[{\bf 0}_{n \times n} \;\; {\bf 1}^T \otimes {\bf I}_{n \times n}\right], \quad
    {\bf b}_{eq} = {\bf 1}_{n \times 1}, \nonumber \\
    {\bf A} &= \left[{\bf A}_1^T {\bf A}_2^T\right]^T, \quad {\bf b} = \left[{\bf b}_1^T \;\; {\bf b}_2^T\right]^T, \nonumber \nonumber \\
    {\bf A}_1 &= \left[\left({\bf I}_{n \times n} \otimes -2{\bf A}\right)\cdot \chi \;\; K_\infty{\bf I}_{n \times n}\right] + \left[{\bf 1}\otimes {\bf A} \;\; {\bf 0}_{n^2 \times n^2}\right], \nonumber \\
    {\bf A}_2 &= \left[{\bf I}_{n \times n} \otimes -{\bf 1}_{n \times n} \;\; {\bf I}_{n^2 \times n^2}\right], \nonumber \\
    {\bf b}_1 &= {\bf 0}_{n^2 \times 1}, \quad
    {\bf b}_2 = \left(-\gamma + K_\infty\right){\bf 1}_{n^2 \times 1}. \nonumber 
\end{align}
In the above $\otimes$ represents the Kronecker product. We solve the above using the Gurobi optimizer version 11.0.0~\cite{anand2017comparative}. The solution of the ILP often reaches a local minima without providing much insights into the placement problem. However, as we shall see in Section~\ref{sec:NRD}, this is a useful tool if we need to model exact blockage environments and results in a more accurate estimation of the number of RNs as compared to the SG approach. {In essence, the ILP formulation jointly optimizes the binary RN placement vector $o\in\{0,1\}^n$ (one entry per candidate cell) and the connectivity matrix $C\in\{0,1\}^{n\times n}$ (assignment of each cell to exactly one RN). For a discretization consisting of $n$ candidate cells, the total number of binary decision variables in the stacked vector $v=[o^T,\operatorname{vec}(C)^T]^T$
scales as $\mathcal{O}(n^2)$. The number of linear constraints (assignment constraints, coupling constraints, and SINR constraints) also scales on the order of $O(n^2)$. These scalings directly imply that the ILP size grows quadratically with the grid resolution (number of candidate cells). Integer programming is NP-hard in general and thus the worst-case time complexity is exponential in the number of binary variables; modern solvers (e.g., branch-and-bound / branch-and-cut) often solve many practical instances far more quickly than the worst case, but no polynomial-time
guarantee exists.}
Nevertheless, Table~\ref{table:comp_table_MD_ILP} summarizes the advantages and the disadvantages of a statistical approach such as \ac{SG} with an optimization-based approach such as \ac{ILP}. Apart from the convergence to a local minima, the ILP formulation has other issues, e.g., no consideration of the fading conditions, and ignoring the UE density in the deployment area. 

{
\subsection{Deployability-aware extension of the placement ILP} We further extend the framework to take into account realistic deployment scenarios. Let $\mathcal{S}$ be the candidate RN sites, $\mathcal{U}$ the demand points, and $\mathcal{H}$ the wiring closets or PoE switches. For each site $s$, let $f_s\in\{0,1\}$ indicate structural and maintenance feasibility and let $e_s\in\{0,1\}$ indicate power availability. A cable-routing graph is used to precompute the shortest feasible route length $L_{sh}$ from site $s$ to closet $h$, and $r_{sh}=\mathbf{1}\{L_{sh}\leq L_{\max}\}$ indicates whether the route satisfies the PoE-reach limit. Let $C_h$ denote the available port capacity at closet $h$, $c_s^{\mathrm{inst}}$ the site-dependent installation cost, and $c^{\mathrm{cab}}$ the cable cost per unit route length.

The binary variables $o_s$, $c_{us}$, and $z_{sh}$ indicate RN installation, demand-point association, and site-to-closet connection, respectively. The baseline placement problem can be extended as
\begin{align*}
\min_{\{o,c,z\}}\quad
& \sum_{s\in\mathcal{S}}c_s^{\mathrm{inst}}o_s
+c^{\mathrm{cab}}\sum_{s\in\mathcal{S}}\sum_{h\in\mathcal{H}}L_{sh}z_{sh} \\
\mathrm{s.t.}\quad
& \sum_{s\in\mathcal{S}}c_{us}=1, & \forall u\in\mathcal{U},\\
& c_{us}\leq o_s, & \forall u\in\mathcal{U},\ s\in\mathcal{S},\\
& o_s\leq f_s,\qquad o_s\leq e_s, & \forall s\in\mathcal{S},\\
& \sum_{h\in\mathcal{H}}z_{sh}=o_s, & \forall s\in\mathcal{S},\\
& z_{sh}\leq r_{sh}, & \forall s\in\mathcal{S},\ h\in\mathcal{H},\\
& \sum_{s\in\mathcal{S}}z_{sh}\leq C_h, & \forall h\in\mathcal{H},\\
& \xi_u\geq\gamma, & \forall u\in\mathcal{U}.
\end{align*}
Here, the final constraint uses the same linearized RSRP/SINR expression as the baseline ILP. A hard deployment budget can additionally be imposed as
\[
\sum_s c_s^{\mathrm{inst}}o_s+c^{\mathrm{cab}}\sum_{s,h}L_{sh}z_{sh}\leq C_{\max}.
\]
The formulation treats radio feasibility and physical deployability jointly: prohibited or inaccessible sites are excluded through $f_s$, unavailable power locations through $e_s$, PoE reach through $r_{sh}$, closet capacity through $C_h$, and cable effort through the objective or budget. If conduit-level capacities must be represented, the precomputed route variables can be replaced by a standard edge-flow submodel without changing the remainder of the framework.
}

\section{Initial Placement of \acp{RN}}
\label{sec:IP}
{Based on the MD and/or ILP analysis described in Section~IV, we first determine the number of RNs $K$ that satisfies the target reliability $(\beta, \gamma)$. This $K$ serves as the input to the clustering framework described next} where the task is to identify the initial locations for the $K$ \acp{RN}. For the same, we consider a prior knowledge about the distribution of the UE locations. Specifically, let there be N UEs in the network with locations $\Phi_{\rm u} = {\bf x_i}$, $i = 1, \ldots, N$ as defined in Section~II. Furthermore, let the distribution of the location of each ${\bf x}_i$ be independent and identically distributed. In this section, we discuss RN placement startegies for a given UE location realization. Then, in Section~\ref{sec:NRD}, we carry out a statistical analysis of the same over both the UE and RN realizations.
\subsection{KM and CKM}
The classical \ac{KM} clustering algorithm~\cite{macqueen1967classification,lloyd1982least} minimizes the sum of the squared Euclidean distance from the \acp{UE} to their nearest \acp{RN}. Mathematically, this is represented as
\begin{align}
    \mathcal{M}_{\rm KM} = \sum_{i = 1}^N \min_l\{||x_i - m_l||^2\}. \label{eq:KM metric}
\end{align}
For uniform locations of \acp{UE}, \ac{KM} intuitively appears to be a reasonable method to designate the \ac{RN} locations. However, for a non-homogeneous \ac{UE} density, it is well-known that \ac{KM} may result in imbalanced cluster allotments.

To alleviate the above issue, \cite{bradley2000constrained} had proposed the \ac{CKM} algorithm. \ac{CKM} and its variants impose an upper and/or a lower limit on the fraction of \acp{UE} that associate to each \ac{RN}. Formally, \ac{CKM} minimizes \eqref{eq:KM metric} subject to
\begin{align}
    N_{\rm min} \leq  N_l \leq N_{\max} \label{eq:capacity constraint}. 
\end{align}
where $0 \leq N_{\min}, N_{\max} \leq N$. However, such a capacity constraint albeit relevant for several applications, is unsuitable for many wireless network realizations as will be shown in Section~\ref{sec:NRD}. This is precisely due to the fact that in order to meet the capacity constraint, \ac{CKM} tends to create disproportionately sized clusters, i.e., larger cluster area in regions with lower \ac{UE} density while smaller cluster area in regions with higher \ac{UE} density. This results in a higher number of cell-edge \acp{UE} of large cells thereby deteriorating their signal strength.

\subsection{\ac{KHM}: Optimal for Fully Clustered \acp{RN}}
Although \ac{KM} and \ac{CKM} are intuitive algorithms and are simple to implement, for finite \ac{UE} locations, they do not optimize the downlink \ac{UE} signals. \ac{KM} and \ac{CKM} are not guaranteed to be sum-\ac{RSRP} optimal for finite number of UEs. 
In this regard, the \ac{KHM} algorithm, which facilitates inverse weightage to the Euclidean distances is an attractive choice. Specifically, the \ac{KHM} metric as defined in~\cite{zhang1999k, zhang2000generalized} is:
\begin{align}
    \mathcal{M}_{\rm KHM} = \sum_{i = 1}^{N_S}\frac{K}{\sum_{l = 1}^{K} \frac{1}{||{\bf x}_i - {\bf m_l}||^p}}.
\end{align}
Note that the \ac{KHM} metric considers the sum of the \ac{RSRP} from all the \acp{RN} to each \ac{UE}. In case the \acp{RN} are not clustered, and form separate cells, this formulation thus is not an accurate representation of the downlink \ac{UE} performance. In fact the $\mathcal{M}_{\rm KHM}$ formulation is counter-productive since it considers the interfering signals  However, for clustered cells, $\mathcal{M}_{KHM}$ is an exact representation of sum-\ac{RSRP}.
\begin{rmk}
    Minimizing $\mathcal{M}_{KHM}$ gives the optimal \ac{RN} positions for a completely clustered cell.
\end{rmk}

\begin{figure}
    \centering
    \includegraphics[width = 0.65\linewidth]{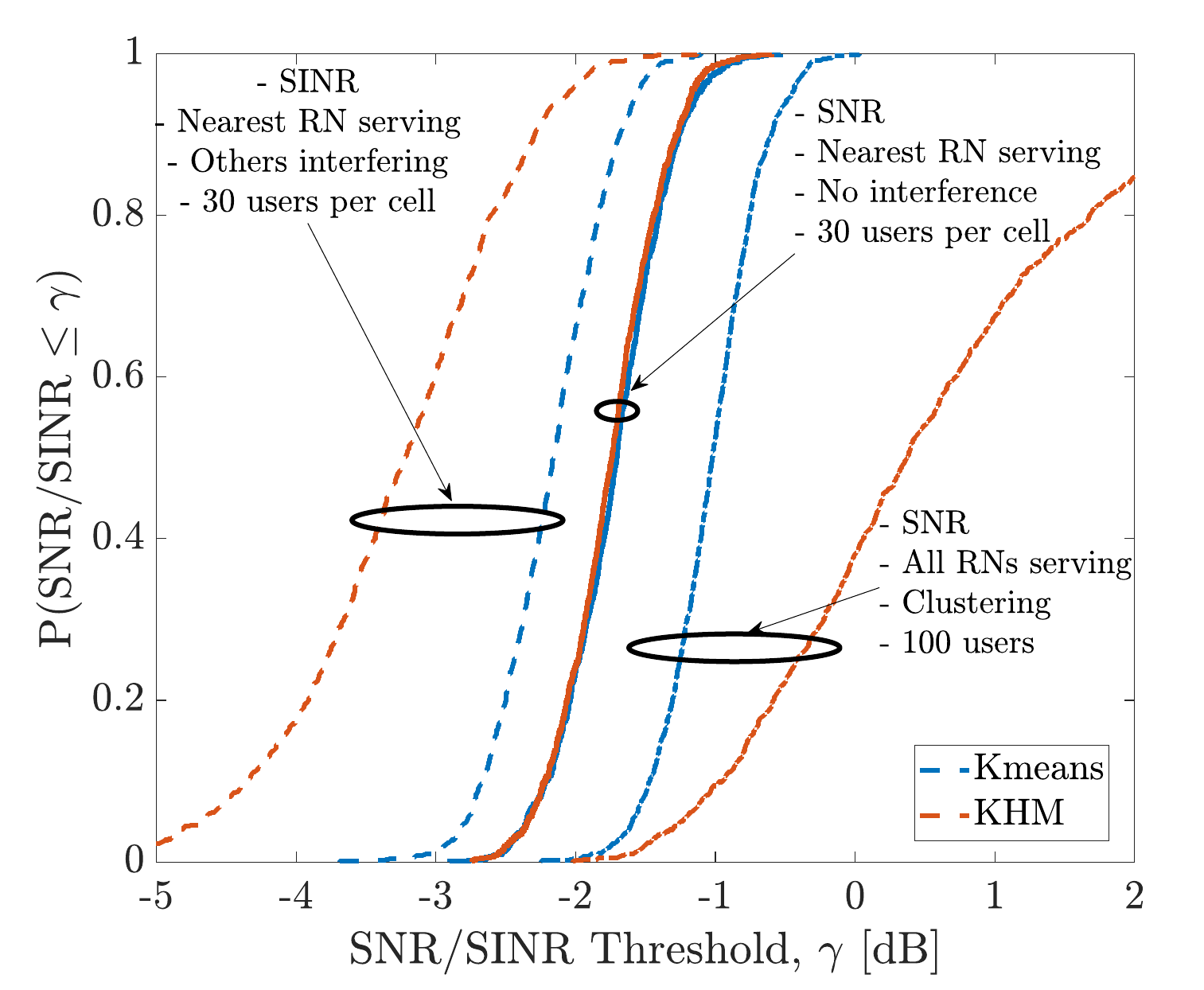}
    \caption{Comparison of the performance of K-means vs K-Harmonic Means in selecting the locations of the clustered RNs. Here $N_S = 3$.}
    \label{fig:KM_v_KHM}
\end{figure}
Fig.~\ref{fig:KM_v_KHM} considers 100 uniformly distributed UEs and 3 RNs. Without full clustering, KM achieves higher SINR than KHM, while both yield similar SNR. Under complete clustering, KHM significantly outperforms KM in downlink SNR, as all 100 UEs are served by a single cell instead of $\sim$30–33 UEs per cell. As discussed in Section~\ref{sec:Cluster}, KHM is optimal under full cooperation, whereas with limited or no clustering, the nearest RN is treated as the serving cell.

\subsection{\ac{WKHM}}
To leverage the benefits of \ac{KM} and \ac{KHM} simultaneously, we propose the \ac{WKHM} metric in this section. Recall that the issue with \ac{KHM} is the consideration of the received power from all RNs for each \ac{UE}. Let us consider a weighted version of \ac{KHM} as follows.
\begin{align}
    \mathcal{M}_{\rm WKHM} = \sum_{i = 1}^{N_S}\frac{K}{\sum_{l = 1}^{K} \frac{w_{il}}{||{\bf x}_i - {\bf m_l}||^p}}.
\end{align}
where $w_{il}$ is the weight given to the \ac{UE} $i$ and RN $l$ link. Thus, ideally $w_{il} = 1$ when the $l-$th RN is the nearest one from the $i-$th UE, otherwise $w_{il} = 0$. However, such a construction results in the WKHM metric being non-differentiable. Consequently, deriving the update steps similar to \ac{KHM} is challenging. As a result we introduce soft weights $w_{il} = \frac{||{\bf x}_i - {\bf m_l}||^{-q}}{\sum_{l = 1}^ K ||{\bf x}_i - {\bf m_l}||^{-q}}$ to characterize the WKHM metric. Note that the numerator and the denominator of the weights consist of the $-q$th power of the Euclidean distance. Thus the weight assigned corresponding to an RN for a UE is the ratio of the $-q$-th power of the Euclidean distance to that RN to the sum of the $-q-$th powers to the distances of all the RNs. Formally, this is called the signal fraction of the received signal. The derivative with respect to a cluster center is thus
\begin{align}
    \pdv{P}{m_l} = &\sum_{i = 1}^N \left[ \frac{K}{\left(\sum_{l = 1}^K d_{il}^{-q} \sum_{l = 1}^K \frac{1}{d_{il}^{p + q}}\right)}\left[\left(\sum_{l = 1}^K \frac{1}{d_{il}^{p+q}}\right) \right.\right. \nonumber \\
    &\left.\left. \left(q d_{il}^{-(q + 2)}\right) + \left(\sum_{l = 1}^K d_{il}^{-q}\right) \frac{p + q}{d_{il}^{p+q+2}}\right]\right]\left(x_i - m_l\right),
\end{align}
where $d_{il} = ||{\bf x}_i - {\bf m}_l||$. Equating the above to zero we obtain the following RN location update step:
$    m_l = \frac{\sum_i B_i x_i}{\sum_i B_i},$ 
where
\begin{align}
    B_i = \frac{1}{\left(\sum_{l = 1}^K d^{-q}_{il} \sum_{l = 1}^K d^{p+q}_{il}\right)^2} &\left[\left(\sum_{l = 1}^K \frac{-q}{d^{p+q}_{il}}\right)d^{-(q+2)}_{il} + \right. \nonumber \\
    &\left. d^{-q}_{il} \left(\frac{p + q}{d^{p+q+2}_{il}}\right) \right] \nonumber 
\end{align}
\begin{rmk}
    Note that for $w_{il} = 1, \forall i, l$, the WKHM update step coincides with the KHM update step as expected. 
\end{rmk}
{Unlike classical KHM, which minimizes a purely geometric harmonic-mean distance, the proposed WKHM replaces Euclidean distance with a radio-aware dissimilarity derived from RSRP/SINR statistics. As a result, the WKHM objective is not equivalent to KHM under any distance reweighting or constrained formulation. In particular, WKHM emphasizes worst-case users and lower-tail SINR performance, which is fundamentally different from the mean-distance minimization of k-means, the constrained variants of CKM, or the max-distance objective of k-centers.

The per-iteration computational complexity of WKHM remains $\mathcal{O}(N K)$, identical in order to KHM, with convergence ensured through monotonic descent of the modified objective. WKHM is therefore positioned as a radio-node location refinement mechanism tailored to indoor wireless planning, rather than as a generic replacement for existing clustering algorithms.}
As mentioned before, the network designer may not always dimension the network from the perspective of \ac{SINR} optimality. Often the industry standard is to design for a target \ac{RSRP} guarantee and study the impact on \ac{SINR} for such a dimensioning rule.

{
\noindent\textbf{Signal-fraction interpretation of WKHM.}
Let the slow-fading-averaged power received by UE $i$ from RN $l$ be
\(P_{il}=P_0\bigl(\lVert x_i-m_l\rVert^2+\epsilon\bigr)^{-\alpha/2}.
\)
For $q=\alpha$, the weights have the exact physical interpretation
\(
w_{il}=\frac{P_{il}}{\sum_{j=1}^{K}P_{ij}}.
\)
Let $S_i=\sum_jP_{ij}$ and let $v_i=\max_l P_{il}/S_i$ denote the serving-signal fraction under strongest-RN association. The SINR can then be written as
\[
\xi_i=\frac{v_iS_i}{(1-v_i)S_i+N_0}
     =\frac{v_i}{1-v_i+N_0/S_i}.
\]
For $0<v_i<1$ and $S_i>0$,
\[
\frac{\partial \xi_i}{\partial v_i}
=\frac{1+N_0/S_i}{(1-v_i+N_0/S_i)^2}>0,
\quad
\frac{\partial \xi_i}{\partial S_i}
=\frac{v_iN_0/S_i^2}{(1-v_i+N_0/S_i)^2}>0.
\]
Hence, increasing both the serving-signal fraction and aggregate received power is sufficient to increase the SINR. Moreover, for the regularized weighted-distortion representation and $p=q=\alpha$, the per-UE term becomes
\[
\sum_{l=1}^{K}w_{il}(\lVert x_i-m_l\rVert^2+\epsilon)^{p/2}
=\frac{KP_0}{S_i}.
\]
Because $x\mapsto 1/x$ is convex and decreasing on $\mathbb{R}_{+}$, this harmonic penalty emphasizes UEs with small aggregate received power. The normalized signal-fraction weights simultaneously make each RN update depend most strongly on UEs for which that RN is a plausible serving node. WKHM is therefore a tail-aware surrogate aligned with sufficient conditions for SINR improvement. It does not, however, constitute a global SINR-optimality guarantee, since a centroid displacement may also alter the interference geometry.
}

\subsection{\ac{KC}: Max-Min Fairness}
In case the network designer intends to guarantee for the worst case \ac{UE} \ac{RSRP} experience, the \ac{KC} algorithm~\cite{gonzalez1985clustering} can be an attractive option. The \ac{KC} algorithm attempts to maximize the minimum distance between the RN centers and the \acp{UE}. The corresponding metric is.
\begin{align}
    \mathcal{M}_{\rm KC} = \max_i \{ \min_l \{ ||x_i - m_l||^2\}\}.
    \label{eq:KC_metric}
\end{align}
Unfortunately, the \ac{KC} clustering problem is NP-HARD, and thus, it is unlikely that there can ever be efficient polynomial time exact algorithms. However, sub-optimal approximate algorithms exist for the problem. In our simulations, we use the algorithm presented in \cite{kleindessner2019fair} for solving the KC problem. { Regardless of the choice of the clustering algorithm, a note on their implementation is necessary. In fact, the final deployment based on the particular choice of an algorithm is not simply based on its evaluation in one particular realization, but rather averaged out over multiple realizations based on the UE distribution. This is important since individual realizations may vary in their cardinality. As an implementation example, consider that the UE distributions are exponentially skewed in both the abscissa and the ordinate. Then, for we generate several realizations of the UE locations following the same distribution and for each case, we obtain the algorithm-specific centroid locations. The final deployment is the mean centroid locations over the realizations. This indirectly takes into account the UE distribution while smoothening out the fluctuations of the result over individual realizations.}

{The clustering objectives adopted in this work are primarily 
based on RSRP or distance (e.g., KM, KHM, KC), which offer analytical tractability and directly 
align with coverage-based dimensioning targets commonly used in industry practice. RSRP-based metrics 
also enable scalable clustering algorithms without the need for iterative interference calculations. However, 
in strongly interference-limited regimes, SINR can be a more pertinent clustering criterion. The proposed 
framework is sufficiently flexible to incorporate SINR-based objectives if required. First, the WKHM clustering method attempts to incorporate SINR measurements by leveraging the signal fraction. Then, the 
SeqMinCut algorithm (Section~VI) indirectly captures inter-RN interference through the edge weight matrix 
constructed from inter-RN RSRP, thereby reflecting SINR effects during cluster formation. }

\section{Clustered Cells and \ac{DAS}}
\label{sec:Cluster}
Once the initial RN positions are determined, the next task is to cluster multiple RNs to form larger cells~\cite{gesbert2010multicell,irmer2011comp,3gpp36819comp}. This is motivated by two facts: i) in typical enterprise 5G networks several cells experience long periods of under-load or idle operation, and ii) combining multiple RNs improves the downlink \ac{RSRP} of the UEs and reduces inter-cell interference, thereby enhancing the downlink \ac{SINR}. However, in order to maintain orthogonal pilots among intra-cell UEs, admission control is imposed on a per-cell basis~\cite{marzetta2010noncooperative}. This limits the number of RNs that cluster together to form larger cell. To balance this trade-off, we propose the following approach.

\subsection{Sequential \texttt{mincut}}
In this approach (Algorithm~\ref{alg:seqmincut}), we first initialize a completely connected graph ${\bf G}$ where \( V(G) \) denotes the set of vertices comprising of the the RNs in the network. The locations of the vertices obtained via one of the algorithms of the previous section. The edges of the graph ${\bf R}$ are the inter-RN \ac{RSRP} measurements. Thus ${\bf R}$ is a symmetric matrix with the diagonal elements set as $\infty$. Furthermore, ${\bf w}$ is a vector of length $N$ that contains the UE association to each RN. Finally, we consider a system parameter $u_{\max}$ that is the maximum number of UEs admissible in a single cell.

\begin{algorithm}[h]
\caption{$[C]$ = \texttt{SeqMinCut}$(N, {\bf w}, {\bf R}, u_{\max})$}
\begin{algorithmic}[1]
\Require: \begin{itemize}
    \item $N$ vertices representing RNs.
    \item Vertex weights ${\bf w}$: number of UEs with each RN.
    \item Edge weight matrix ${\bf R}$ denoting the inter-RN RSRP. 
    \item threshold $u_{\max}$.
\end{itemize}
\Ensure Clusters of radio nodes such that the sum of vertex weights in each cluster is less than $u_{\max}$.
\State Initialize a fully connected graph with index vector ${\bf g}$ with $N$ nodes.
\State Initialize sub-graph ${\bf C} = \left[{\bf g}\right]$.
\State Initialize $Flag = 0$.
\State Initialize ${\bf T} = []$.
\While{$Flag =0$}
    \For{i = 1:rows({\bf C})}
    \If {$\sum({\bf w}({\bf C}(i,:)) > u_{\max}$}
        \State $[{\bf s}_1, {\bf s}_2]$ = Stoer-Wagner Mincut of $C(i,:)$.
        \State ${\bf T} = [{\bf T}^T,  {\bf s_1}^T, {\bf s_2}^T]^T$
    \Else
        \State ${\bf T} = [{\bf T}^T,  {\bf C(i,:)}^T]^T$
    \EndIf
    \EndFor
    \If {Sum of weights of all sub-graph rows $\leq$ $u_{\max}$}
        \State Flag = 1 and Break
    \Else
        \State Clear $C$;  ${\bf C} = {\bf T}$; ${\bf T} =[]$
    \EndIf
\EndWhile
\State \textbf{return} $C$
\end{algorithmic}
\label{alg:seqmincut}
\end{algorithm}
The minimum cut in ${\bf G}$ is found using a the classical Stoer-Wagner algorithm~\cite{stoer1997simple}, which separates the graph into two disjoint subsets to minimize the sum of weights of the edges crossing the cut. (For large graphs, randomized/global min-cut variants can be considered as alternatives~\cite{karger1993global}.) The nodes of ${\bf G}$ are partitioned based on the minimum cut, resulting in connected components ${\bf s}_1$ and ${\bf s}_2$. The subgraph matrix ${\bf C}$ is updated by replacing $G$ with two rows ${\bf s}_1$ and ${\bf s}_2$. Then for each row of ${\bf C}$, the sum of vertex weights is computed. The \texttt{mincut} is repeated for all those rows whose sum of vertex weights exceed $u_{\max}$. If the sum is less than $u_{\max}$ for all rows of ${\bf C}$, the algorithm terminates and returns the clusters of radio nodes, each cluster containing nodes whose sum of vertex weights is less than $u_{\max}$.

 { The sequential mincut algorithm operates on the RN-UE bipartite graph $G = (V, E)$, where $V$ includes both RNs and UEs and the edge weights reflect link strengths. By recursively applying a minimum $s$-$t$ cut, the algorithm partitions the graph to minimize the number (and weight) of inter-cluster edges. This corresponds to minimizing the aggregate inter-cluster interference, which is directly relevant for coordinated transmission. In contrast, commonly used methods such as $k$-means cluster RNs based on Euclidean distance, which may not reflect interference relationships, and hierarchical methods have higher computational complexity.}

Indeed other approaches for forming optimal \acp{DAS} can be explored in a future work. However, in the next section, with numerical simulations we show the efficacy of \texttt{SeqMinCut} in enhancing the \ac{UE} signal coverage. {From an analytical standpoint, however, Table~\ref{tab:clustering_comparison} compares the sequential mincut approach with $k$-means and hierarchical clustering in terms of objective, complexity, and performance on a representative scenario. As shown, mincut achieves lower inter-cluster interference with comparable 
or better computational efficiency. This motivates its use in the proposed framework.}

\begin{table*}[t]
\centering
\begin{tabular}{|p{3cm}|p{4.5cm}|p{2.4cm}|p{4cm}|}
\hline
\textbf{Method} & \textbf{Clustering Objective} & \textbf{Complexity} & \textbf{Inter-cluster Interference} \\ \hline
\textbf{Sequential Mincut (proposed)} 
& Minimizes the weight of inter-cluster edges in the RN-UE bipartite graph, thus directly targets interference minimization 
& $\mathcal{O}(K \cdot (|V|+|E|))$ 
& Lowest interference among methods \\ \hline
\textbf{$k$-means} 
& Minimizes Euclidean distance between cluster centroids and points 
& $\mathcal{O}(n \cdot K \cdot I)$ where $I$ is iterations 
& Higher interference due to geometric clustering ignoring link structure (benchmark: $-3$ dB) \\ \hline
\textbf{Hierarchical} 
& Agglomerative or divisive linkage based on distance metrics 
& $\mathcal{O}(n^3)$ (agglomerative) 
& Moderate interference reduction but high computational cost \\ \hline
\end{tabular}
\caption{Comparison of clustering algorithms for RN grouping}
\label{tab:clustering_comparison}
\end{table*}




    
    




\section{Numerical Results and Discussion}
\label{sec:NRD}
Here we discuss some numerical results to highlight our key findings. The propagation environment is based on 3GPP specifications for indoor scenarios~\cite{3GPPmodel,3gpp38901,winner22007}. The path-loss model consists of a single slope (see \cite{haneda2016indoor} for details) with path-loss exponent $\alpha_{\rm r} = 3$. Since outdoor MBSs are typically in non-line-of-sight, $\alpha_{\rm m}$ is assumed to be 4~\cite{ghatak2018coverage}. Additionally, a 30 dB outdoor-to-indoor penetration loss is considered for the MBS tier~\cite{sheikh2021outdoor}. The transmit powers of indoor RNs and MBSs are assumed to be 23 dBm and 40 dBm, respectively. The carrier frequency is 3.2 GHz and the total bandwidth per cell is 100 MHz. The noise power density is assumed to be -176 dBm/Hz and a noise figure of 8 dB is considered. The RN mounting height is assumed to be 4 m. Furthermore, we consider that MBSs that are within 1 km from the deployment area contribute to the interference observed by the \ac{UE}.

{In the SG-based analysis, the RN locations are generated according to the BPP model and are independent of the UE locations. All performance metrics are averaged over multiple independent realizations of both RN and UE point processes. In contrast, for the ILP-based optimization 
and clustering methods, RN locations are first optimized or clustered based on a representative UE distribution and then fixed. The network performance is subsequently evaluated by averaging over multiple independent UE realizations drawn from the same distribution. This decoupled procedure ensures that 
the numerical results reflect average network performance under the considered deployment strategy.}

{{
\subsection{A Note on Our Semi-Analytical Simulation Framework} The derivation of the complex moments under MBS cooperation involves conditional Palm probabilities associated with the non-stationary point process and the cooperative association model. These conditional  Palm probabilities characterize the distribution of interferers and serving MBSs conditioned on the presence  of multiple cooperating transmitters. While obtaining their closed-form expressions is in general mathematically intractable, the presented expressions provide a precise analytical framework for extending  meta-distribution analysis to cooperative, non-stationary settings. To aid their evaluation, we adopt a hybrid semi-analytical strategy that alleviates the need to calculate the complex moments. This involves first evaluating the conditional success probabilities analytically (which is feasible since the corresponding expressions are in closed form). Then, this closed form conditional success probability is evaluated for a realization of the point process using simulations. In essence, each Monte Carlo evaluation results in the generation of the random locations, and then the closed form equation provides one value of the conditional success probability, which is then averaged out across realizations.}}

\subsection{Insights from MD analysis:} 
\begin{figure}
    \centering
    \subfloat[]{\includegraphics[width = 0.45\linewidth]{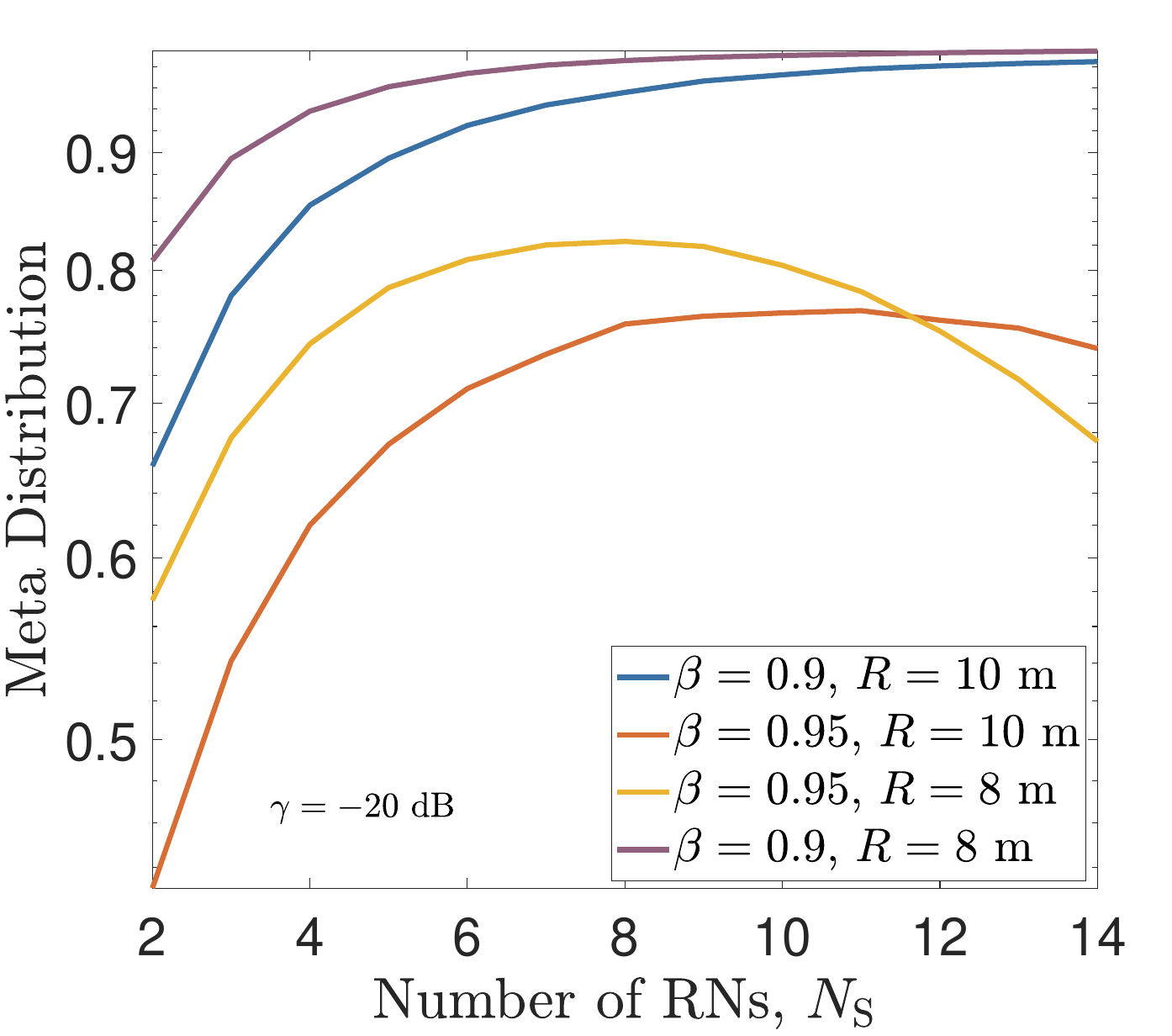}  \label{fig:MDvRN}}
    \hfil
    \subfloat[]{\includegraphics[width = 0.45\linewidth]{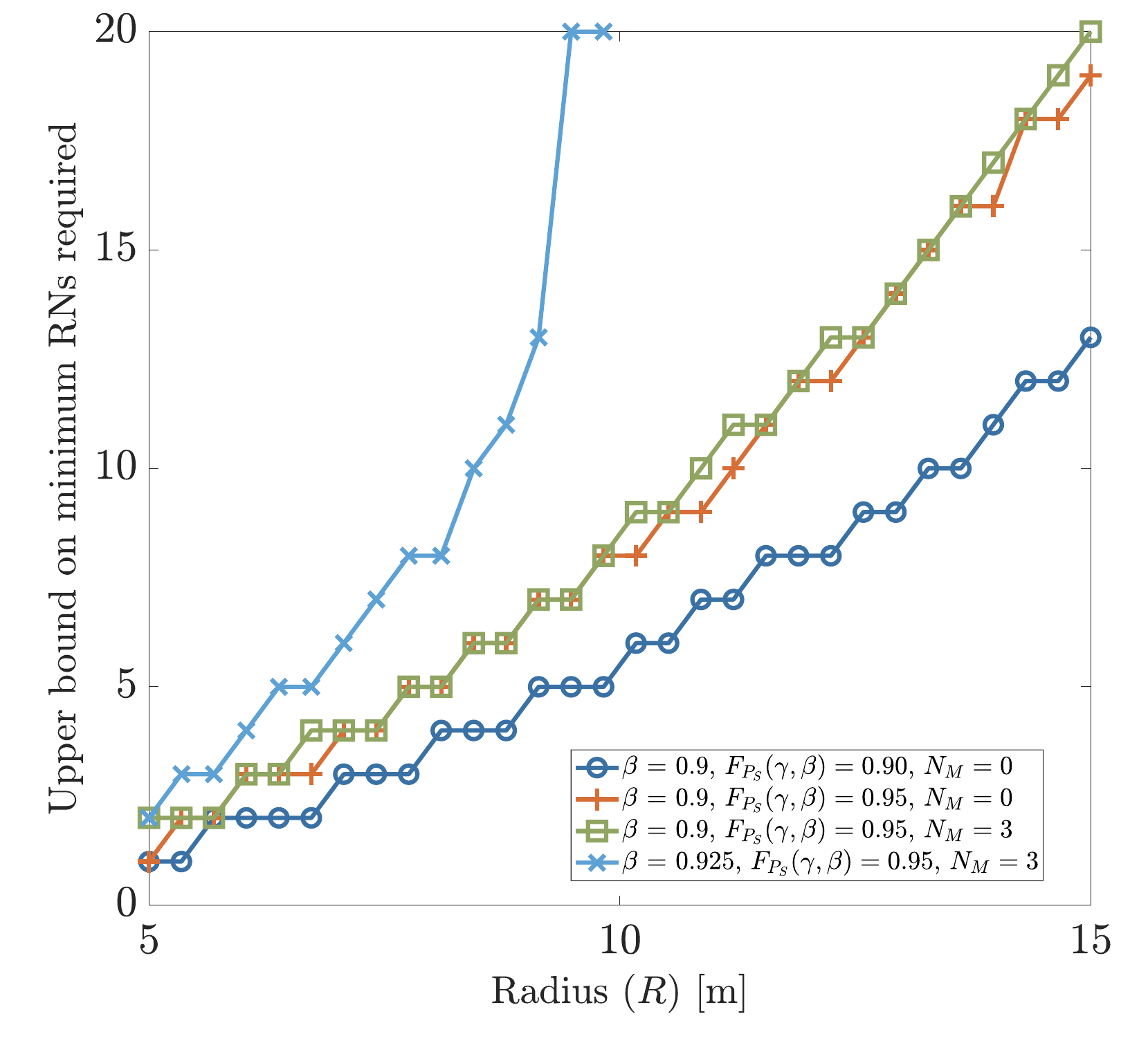} \label{fig:RNvR}}
    \caption{(a) \ac{MD} of the network with respect to the number of RNs; (b) Upper bound on the number of RNs.}
\end{figure}
\begin{figure}
    \centering
    \includegraphics[width = 0.68\linewidth]{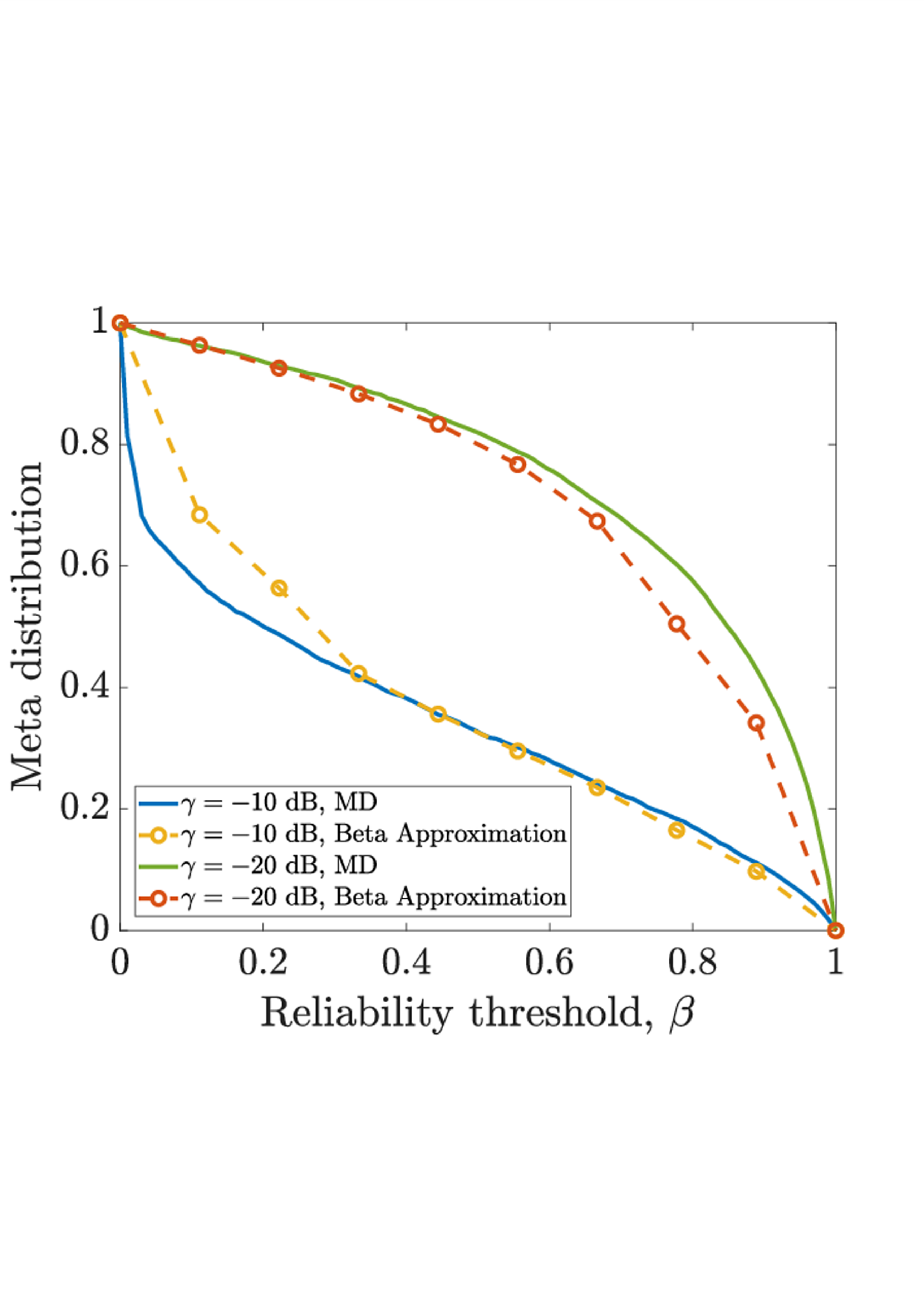}
    \caption{Beta approximation of the MD.}
    \label{fig:md_basic}
\end{figure}
Fig.~\ref{fig:MDvRN} shows the fraction of UEs achieving downlink SINR above $-20$ dB for at least a fraction $\beta$ of time, for different deployment radii, while Fig.~\ref{fig:RNvR} shows the minimum number of \acp{RN} required to meet these coverage targets. As the number of \acp{RN} increases, coverage initially improves, may peak, and can eventually degrade, though this trend is not visible for $\beta=0.9$ in the plotted range. The results also reveal infeasible scenarios; for example, with $R=10$ m, achieving 90\% UE coverage at $\beta=0.95$ is impossible without clustering or optimized placement. Moreover, Fig.~\ref{fig:RNvR} indicates that $\beta$ has a stronger influence on the required number of \acp{RN} than the target UE coverage fraction.
\begin{figure}
    \centering
    \includegraphics[width = 0.65\linewidth]{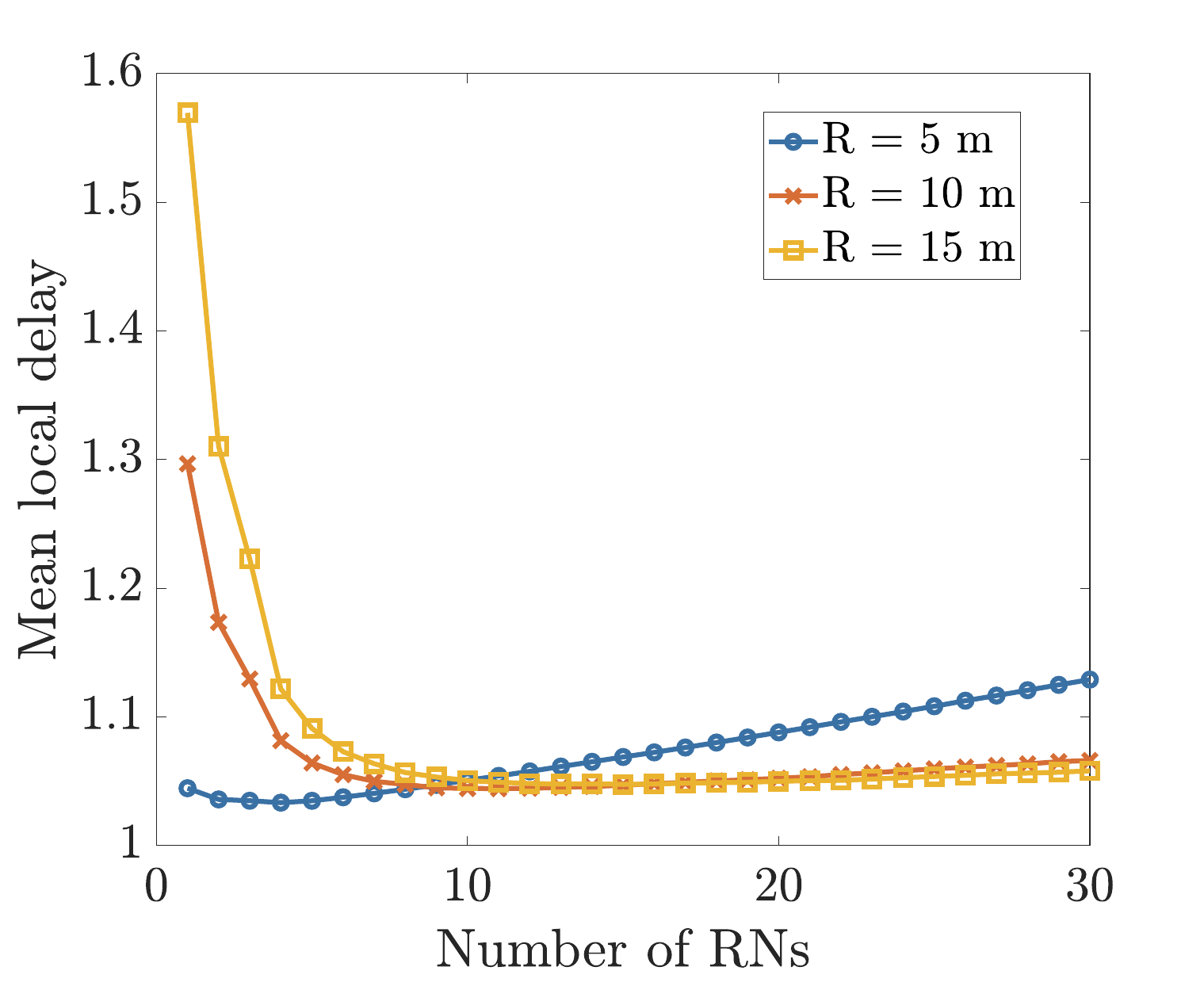}
    \caption{Mean local delay vs the number of RNs.}
    \label{fig:MLD}
\end{figure}

Fig.~\ref{fig:MLD} shows the mean local delay for different number of RNs and different deployment sizes. Recall that the mean local delay is defined as the number of attempts required by the test UE for a single transmission success. For a given network area, there exists an optimal number of RNs that minimizes the mean local delay. This optimal RN number may be different that the one obtained for a given SINR threshold and reliability threshold using the MD framework. Thus, based on the objective to be optimized, the operator needs to select an appropriate RN number to plan the network. For smaller networks, although increasing the number of RNs may increase the RSRP, we observe that the mean local delay sharply increases due to an increase in SINR.

{
\subsection{Sensitivity of RN dimensioning to modelling assumptions:}
The disk-shaped region, uniform UE law, and homogeneous propagation model are analytical abstractions. To quantify the sensitivity of the resulting dimensioning rule, we keep the service-level target fixed and perturb one modelling assumption at a time. Specifically, we determine
\[
N^{\star}=\min\left\{N:{F}_{P_s}\!\left(\beta,\gamma;N\right)\geq\eta\right\},
\]
with $\beta=0.90$ and $\eta=0.95$, while retaining the SINR threshold $\gamma$ and the remaining parameters of the baseline experiment. The cases considered are: (i) a disk, an equal-area $2{:}1$ rectangle, and an equal-area irregular footprint; (ii) uniform, two-hotspot, and edge-biased UE laws with the same mean UE density; and (iii) variations of the path-loss exponent and a moderate blockage/shadowing setting. All cases use the same Monte-Carlo budget and common random numbers.
}
{
\begin{table*}
  \begin{center}
\renewcommand{\arraystretch}{1.12}
\begin{tabular}{p{0.22\linewidth}p{0.42\linewidth}ccc}
\toprule
Factor & Setting & $N^{\star}$ & $\Delta N$ & Relative change \\
\midrule
Baseline & Disk, uniform UEs, $\alpha=3$ & $8$ & $0$ & $0\%$ \\
Footprint & Equal-area $2{:}1$ rectangle & $9$ & $+1$ & $+12.5\%$ \\
Footprint & Equal-area irregular region & $10$ & $+2$ & $+25.0\%$ \\
UE law & Two-hotspot mixture & $11$ & $+3$ & $+37.5\%$ \\
UE law & Edge-biased distribution & $10$ & $+2$ & $+25.0\%$ \\
Propagation & $\alpha=2.5$ & $7$ & $-1$ & $-12.5\%$ \\
Propagation & $\alpha=3.5$ & $10$ & $+2$ & $+25.0\%$ \\
Propagation & Moderate blockage/shadowing & $11$ & $+3$ & $+37.5\%$ \\
\bottomrule
\end{tabular}
\end{center}
\caption{Sensitivity of the MD based dimensioning to system parameters.}
\label{tab:sensitivity}
\end{table*}

The results (see Tab.~\ref{tab:sensitivity}) show that the RN recommendation is comparatively insensitive to moderate changes in footprint shape, but is more strongly affected by a mismatch between the uniform RN model and a concentrated UE distribution, or by harsher propagation. In the considered finite and partially noise-limited regime, the largest change is $37.5\%$. These values are not universal correction factors; rather, they indicate when the SG output can be used as a first-stage estimate and when the geometry-specific ILP and ray-tracing stages are necessary. Accordingly, the SG-derived count is treated as a conservative planning baseline rather than as a deployment-independent optimum.
}

\subsection{MD vs ILP predictions:} Table~\ref{table:comp_table_MD_ILP_result} shows the required number of RNs predicted by the ILP formulation and for different target standard success probabilities $(p_{\rm s})$ of the MD framework. For the ILP solution, we employed a homogeneous 3GPP propagation model and no specific blockage environment was considered. It has long been contested whether SG models such as PPP offer an optimistic~\cite{lee2013stochastic} or a pessimistic~\cite{andrews2011tractable} view of the network. Nevertheless, we see that the 95\% standard success probability requirement closely matches with the ILP predictions. { The required number of RNs, $N$, is determined 
using the $\beta$-percentile of the meta-distribution of the conditional success probability. Specifically, we 
choose $N$ such that 
\(
\mathbb{P}\big(P_{\mathrm{s}}(\theta) \geq p_{\mathrm{th}}\big) \geq \beta,
\)
where $P_{\mathrm{s}}(\theta)$ is the conditional success probability at SIR threshold $\theta$. In this paper, 
we use $\beta = 0.95$ and $p_{\mathrm{th}} = 0.9$ as the design parameters. This ensures that at least 95\% of 
UEs achieve a conditional success probability of 0.9 or higher.}

It must be noted that the ILP framework does not take into account the fading of the individual links and hence can only optimize the placement with respect to averaged channels. On the other hand, the key advantage of an ILP formulation over MD is the possibility of including exact propagation features, such as blockages, walls, etc. Fig.~\ref{fig:ILP_Illustration} shows that for a 30 m $\times$ 30 m network scenario, without any MBS interference, the presence of two walls results in the requirement number of RNs increase from 1 to 4. Thus, in case the network operator has access to the specific indoor blockage features, ILP is a more accurate tool for specific scenarios.

{
\subsection{Numerical illustration of the effect of infrastructure constraints:}
To isolate the effect of physical-deployment constraints, we consider the $48\,\mathrm{m}\times30\,\mathrm{m}$ floorplan shown below. It contains $15$ demand points, nine candidate RN sites, and two PoE closets. Both formulations use the same radio-feasibility matrix: a site can serve a demand point if their horizontal separation is at most $13$~m. The central candidate at $(24,15)$~m lies above a restricted ceiling zone and is therefore assigned $f_s=e_s=0$. The cable-route limit is $L_{\max}=35$~m, and each closet has two available ports. We use normalized costs $c_s^{\mathrm{inst}}=1$ and $c^{\mathrm{cab}}=0.01$ per metre.

\refstepcounter{figure}\label{fig:infrastructure-aware-comparison}
\begin{center}
\includegraphics[width=0.98\linewidth]{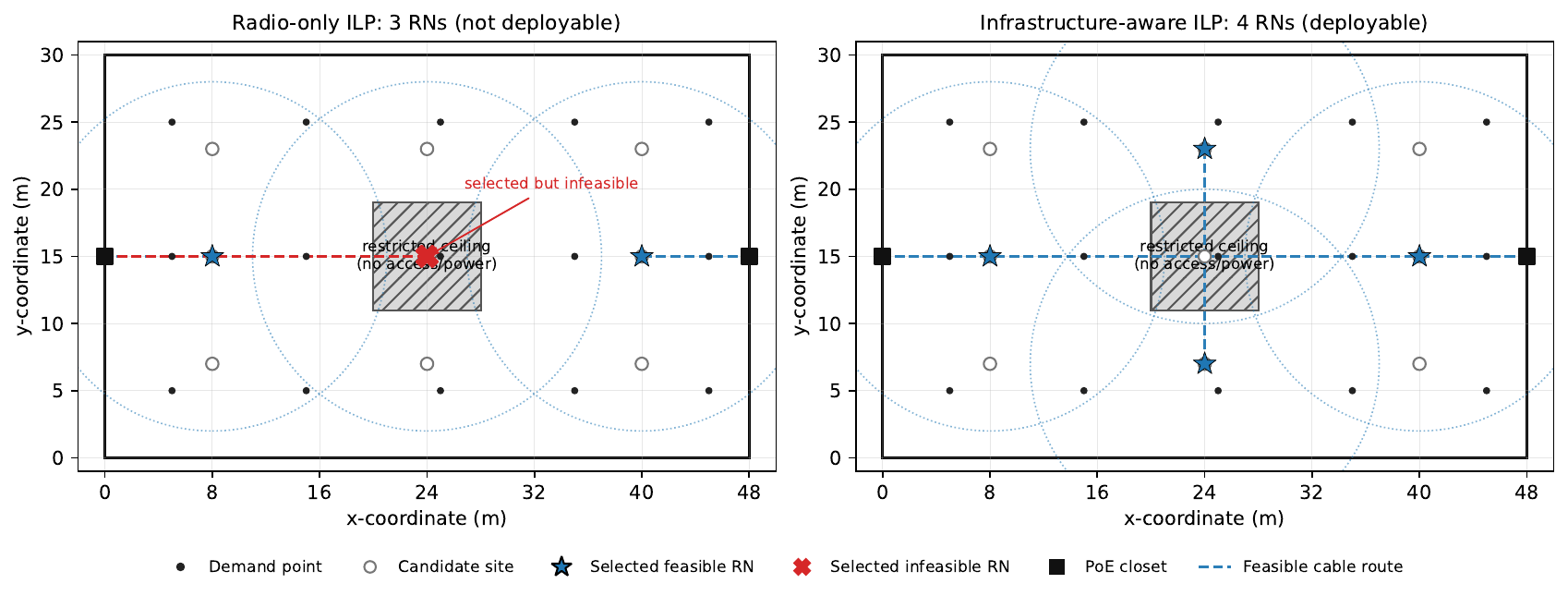}
\end{center}
\noindent\textbf{Fig.~\thefigure.} Comparison of the original radio-only solution and the infrastructure-aware solution. The radio-only ILP selects three centrally located RNs, but one selected site is inaccessible and unpowered. The infrastructure-aware ILP replaces that site by two feasible sites and produces a four-RN placement satisfying the cable-reach and closet-capacity constraints.

\begin{table*}\label{tab:infrastructure-aware-comparison}
\begin{center}
\renewcommand{\arraystretch}{1.12}
\begin{tabular}{lcc}
\toprule
Metric & Radio-only ILP & Infrastructure-aware ILP \\
\midrule
Selected RN locations (m) & $(8,15),(24,15),(40,15)$ & $(8,15),(40,15),(24,7),(24,23)$ \\
Number of RNs & $3$ & $4$ \\
Selected inaccessible/unpowered sites & $1$ & $0$ \\
Total cable length (m) & $40^{\dagger}$ & $80$ \\
Maximum cable-route length (m) & $24^{\dagger}$ & $32$ \\
Mean serving distance (m) & $8.74$ & $7.77$ \\
Maximum serving distance (m) & $12.21$ & $11.18$ \\
Normalized deployment cost & $3.40^{\dagger}$ & $4.80$ \\
Physically deployable & No & Yes \\
\bottomrule
\end{tabular}
\end{center}
\caption{Numerical comparison of the two placement formulations. $^{\dagger}$The cable length and cost for the radio-only solution are post-evaluated nominal values; the placement remains infeasible because one selected site violates the accessibility and power constraints.}
\end{table*}

The original problem therefore returns a lower-cardinality radio solution that cannot be installed. Enforcing the infrastructure constraints increases the RN count from three to four and the normalized cost from $3.40$ to $4.80$, but eliminates all deployment violations. The additional RN also reduces the maximum serving distance from $12.21$~m to $11.18$~m. This example demonstrates that the deployability-aware formulation changes both the feasible set and the optimal placement; the physical constraints cannot, in general, be applied as a harmless post-processing step after radio-only optimization.
}
\begin{table*}[h]
\begin{center}
\renewcommand{\arraystretch}{1.12}
\begin{tabular}{lcccc}
\toprule
Comparison & Mean WKHM gain & $95\%$ CI & $t(29)$ & Holm-adjusted $p$ \\
& (Mbps) & (Mbps) & & \\
\midrule
WKHM-KM  & $2.17$ & $[1.23,\,3.11]$ & $4.72$ & $2.2\times10^{-4}$ \\
WKHM-CKM & $1.58$ & $[0.55,\,2.61]$ & $3.14$ & $1.14\times10^{-2}$ \\
WKHM-KHM & $1.21$ & $[-0.02,\,2.44]$ & $2.01$ & $5.4\times10^{-2}$ \\
WKHM-KC  & $0.23$ & $[0.07,\,0.39]$ & $2.87$ & $1.5\times10^{-2}$ \\
\bottomrule
\end{tabular}
\end{center}
\caption{Statistical significance tests for WKHM.}
\label{tab:sig}
\end{table*}

\begin{figure}
    \centering
    \includegraphics[width = 0.7\linewidth]{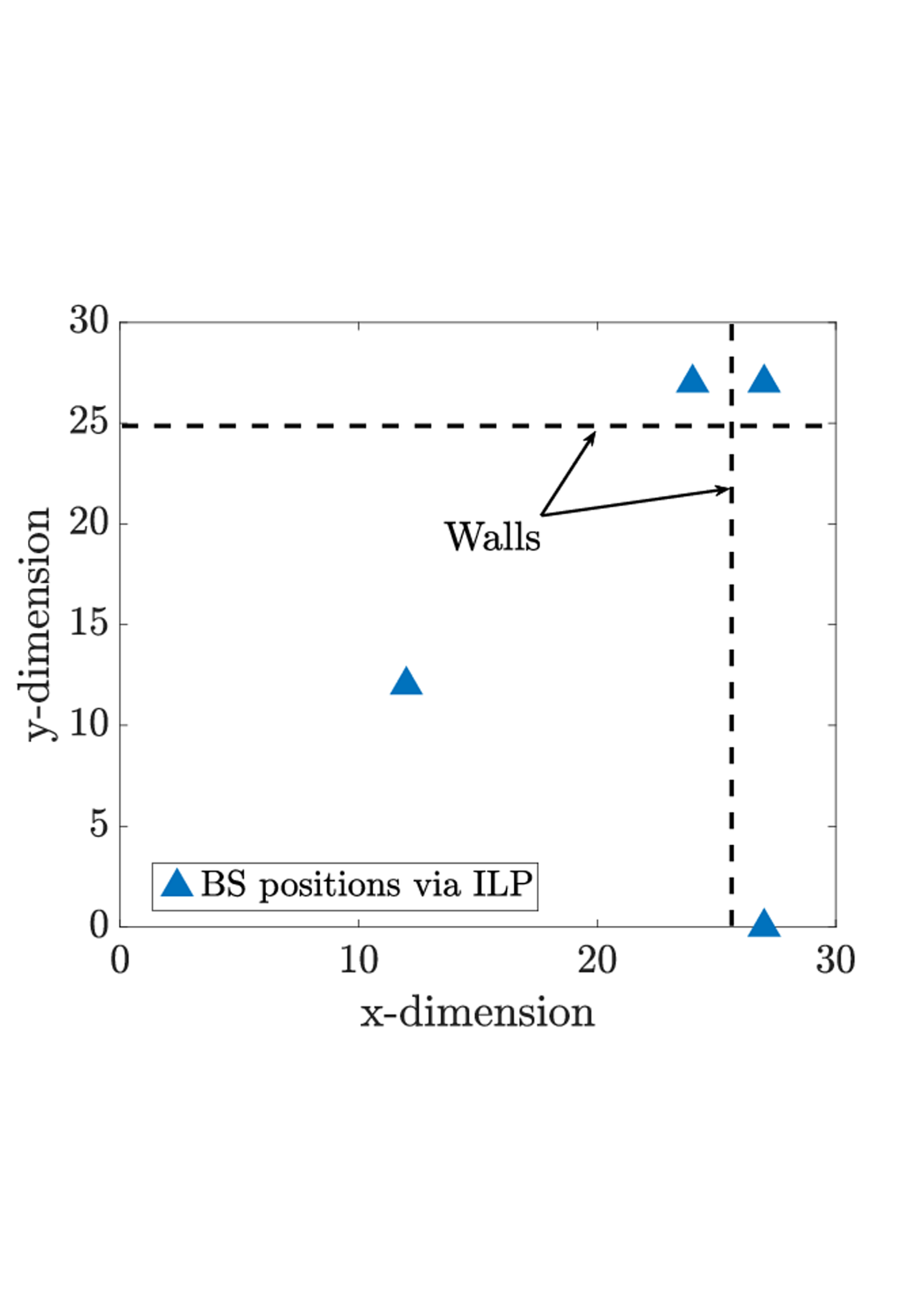}
    \caption{Illustration of ILP taking into account wall blockages.}
    \label{fig:ILP_Illustration}
\end{figure}

\begin{table*}[t]
\centering
\begin{tabular}{|c|c|c|c|c|c|}
\hline
 Square Dimension & Circular Radius &ILP &$p_{\rm s} = 0.9$ &$p_{\rm s} = 0.95$ &Optimal $N_s$ $(p_{\rm s}^*)$ \\ \hline
 10 m $\times$ 10 m & 5.64 m &1 &1 &1 &1(99.7\%) \\ \hline
 30 m $\times$ 30 m & 16.92 m &1 &1 &1 &3(99\%) \\ \hline
 50 m $\times$ 50 m & 28.20 m &1 &1 &1 &7(98.8\%) \\ \hline
 100 m $\times$ 100 m & 56.41 m &1 &1 &2 &11(98.6\%) \\ \hline
 200 m $\times$ 200 m & 112.83 m &1 &1 &3 &24(98.5\%) \\ \hline
 500 m $\times$ 500 m & 282.09 m &5 &2 &5 &29(98.4\%) \\ \hline
 750 m $\times$ 750 m & 423.14 m &7 &3 &6 &36(98.3\%) \\ \hline
 1000 m $\times$ 1000 m & 564.18 m &8 &4 &8 &36(98.3\%) \\ \hline
\end{tabular}
\caption{Required number of RNs predicted by SG analysis and ILP solution.}
\label{table:comp_table_MD_ILP_result}
\end{table*}
\begin{table*}[h]
\centering
\begin{tabular}{|c|c|Sc|p{23mm}|p{25mm}|p{23mm}|p{20mm}|}
\hline
Abscissa&Ordinate &Example&$\mathbb{E}_\Phi[\min_{UE}\{\zeta\}]$&$\min_\Phi\{\min_{UE}\{\zeta\}\}$&$\mathbb{E}_{\Phi}[\mathbb{E}_{UE}[\zeta]]$ & Ratio\\ \hline
Uniform&Uniform &
    \includegraphics[width = 0.1\linewidth]{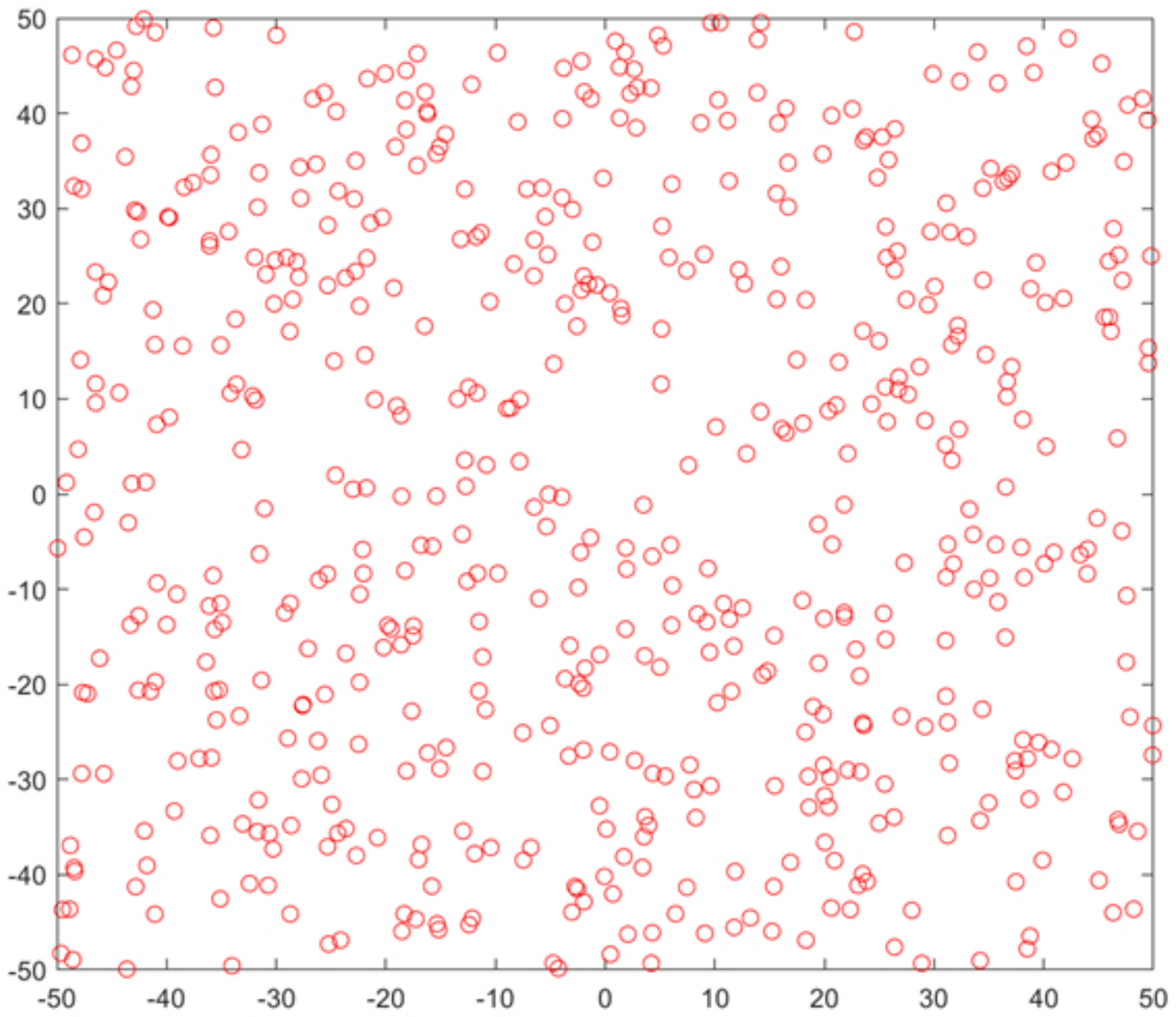}&\vspace{-1.7cm} \textcolor{blue}{KM: -97.13}  \newline CKM: -97.67 \newline KHM: -97.21 \newline \textcolor{Ao}{WKHM: -97.05}\newline \textcolor{red}{KC: -99.92} &\vspace{-1.7cm}\textcolor{blue}{KM: -100.54} \newline CKM: -101.13 \newline KHM: -101.46 \newline \textcolor{Ao}{WKHM: -99.24} \newline \textcolor{red}{KC: -100.91}&\vspace{-1.7cm}KM: -83.64 \newline CKM: -83.65 \newline \textcolor{Ao}{KHM: -83.07}\newline \textcolor{blue}{WKHM: -83.60}\newline \textcolor{red}{KC: -85.30}&\vspace{-1.7cm}KM: 2.08\newline \textcolor{Ao}{CKM: 1.46}\newline KHM: 1.82\newline \textcolor{blue}{WKHM: 1.6}\newline \textcolor{red}{KC: 4.9}\\
\hline
Uniform&Exponential&
    \includegraphics[width = 0.1\linewidth]{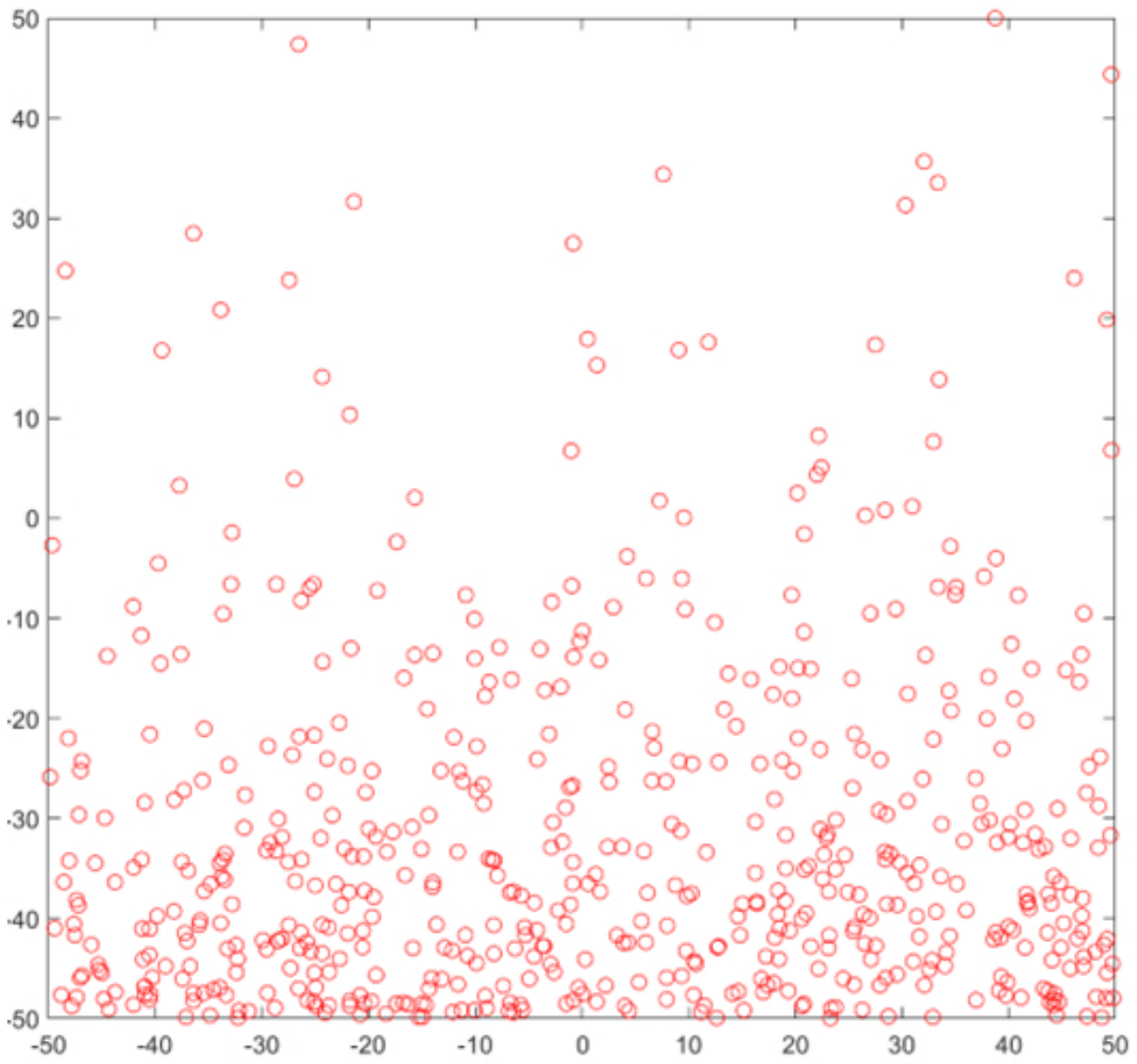}&\vspace{-1.8cm}KM: -104.8 \newline \textcolor{red}{CKM: -109.2} \newline KHM: -107.3 \newline \textcolor{blue}{WKHM: -103.5} \newline \textcolor{Ao}{KC: -99.1}&\vspace{-1.8cm}KM: -107.9\newline \textcolor{red}{CKM: -111.7}\newline KHM: -108.8\newline \textcolor{blue}{WKHM: -105.6}\newline \textcolor{Ao}{KC: -83.4}&\vspace{-1.8cm}KM: -80.6\newline \textcolor{blue}{CKM: -79.9} \newline \textcolor{Ao}{KHM: -79.8}\newline WKHM: -81.1\newline \textcolor{red}{KC: -83.4}&\vspace{-1.8cm}KM: 10.6\newline \textcolor{Ao}{CKM: 1.55}\newline \textcolor{blue}{KHM: 3.7}\newline WKHM: 10.5\newline \textcolor{red}{KC: 71.3}\\
\hline
Uniform&Gaussian&\includegraphics[width = 0.1\linewidth]{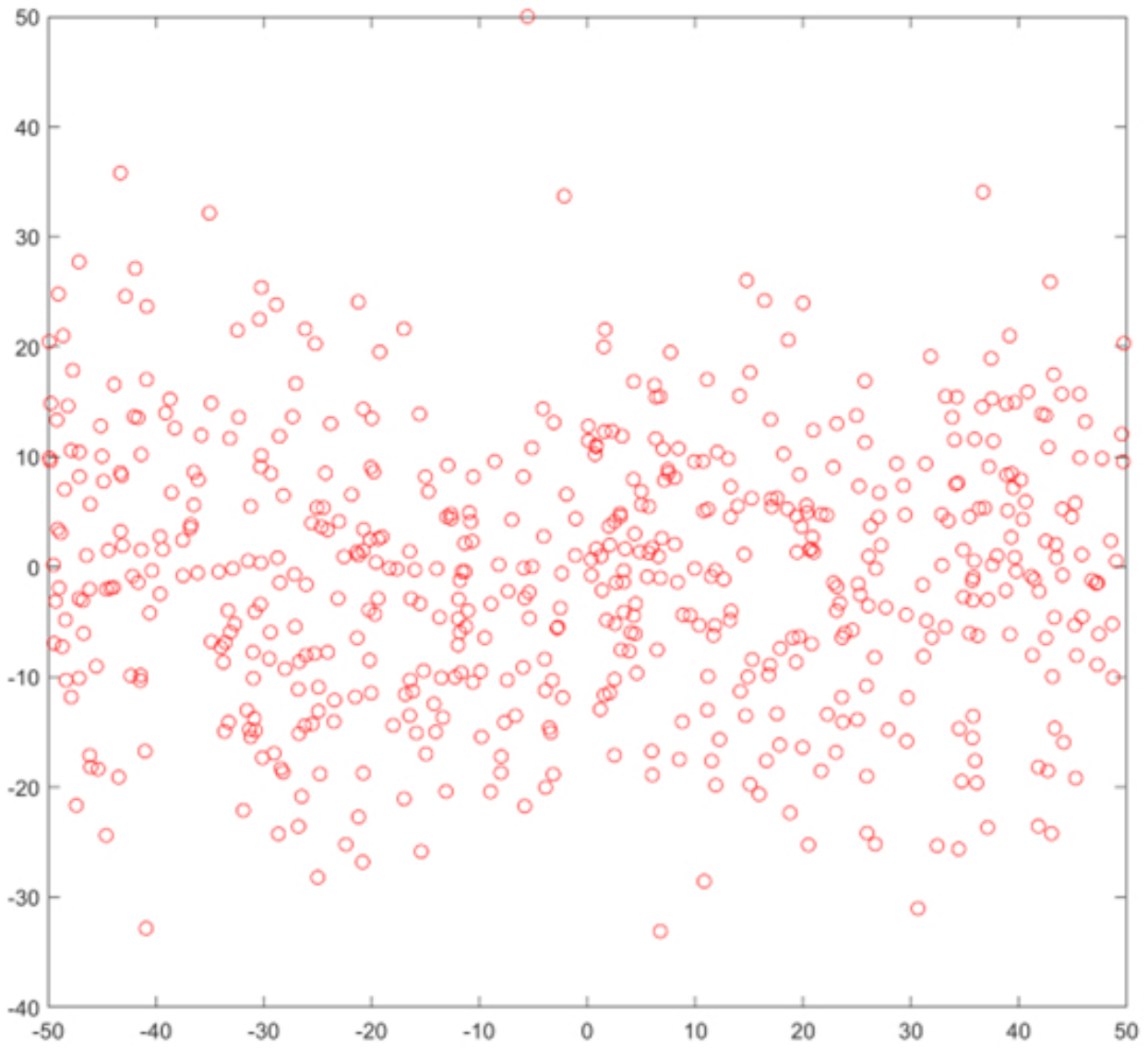}&\vspace{-1.8cm}KM: -100.9 \newline CKM: -101.5\newline \textcolor{red}{KHM: -102}\newline \textcolor{blue}{WKHM: -100.7}\newline \textcolor{Ao}{KC: -98.5}&\vspace{-1.8cm}KM:-102.3 \newline \textcolor{red}{CKM: -103.4}\newline KHM: -103.1\newline \textcolor{blue}{WKHM: -100.5}\newline \textcolor{Ao}{KC: -100.1}&\vspace{-1.8cm}KM: -81.6 \newline CKM: -81.7\newline \textcolor{Ao}{KHM: -81.1}\newline \textcolor{blue}{WKHM: -81.5}\newline \textcolor{red}{KC: -84.5}&\vspace{-1.8cm}KM: 2.6\newline \textcolor{blue}{CKM: 1.6}\newline KHM: 1.7\newline \textcolor{Ao}{WKHM: 1.5}\newline \textcolor{red}{KC: 22}\\ \hline
Uniform&Bi-exponential&\includegraphics[width = 0.1\linewidth]{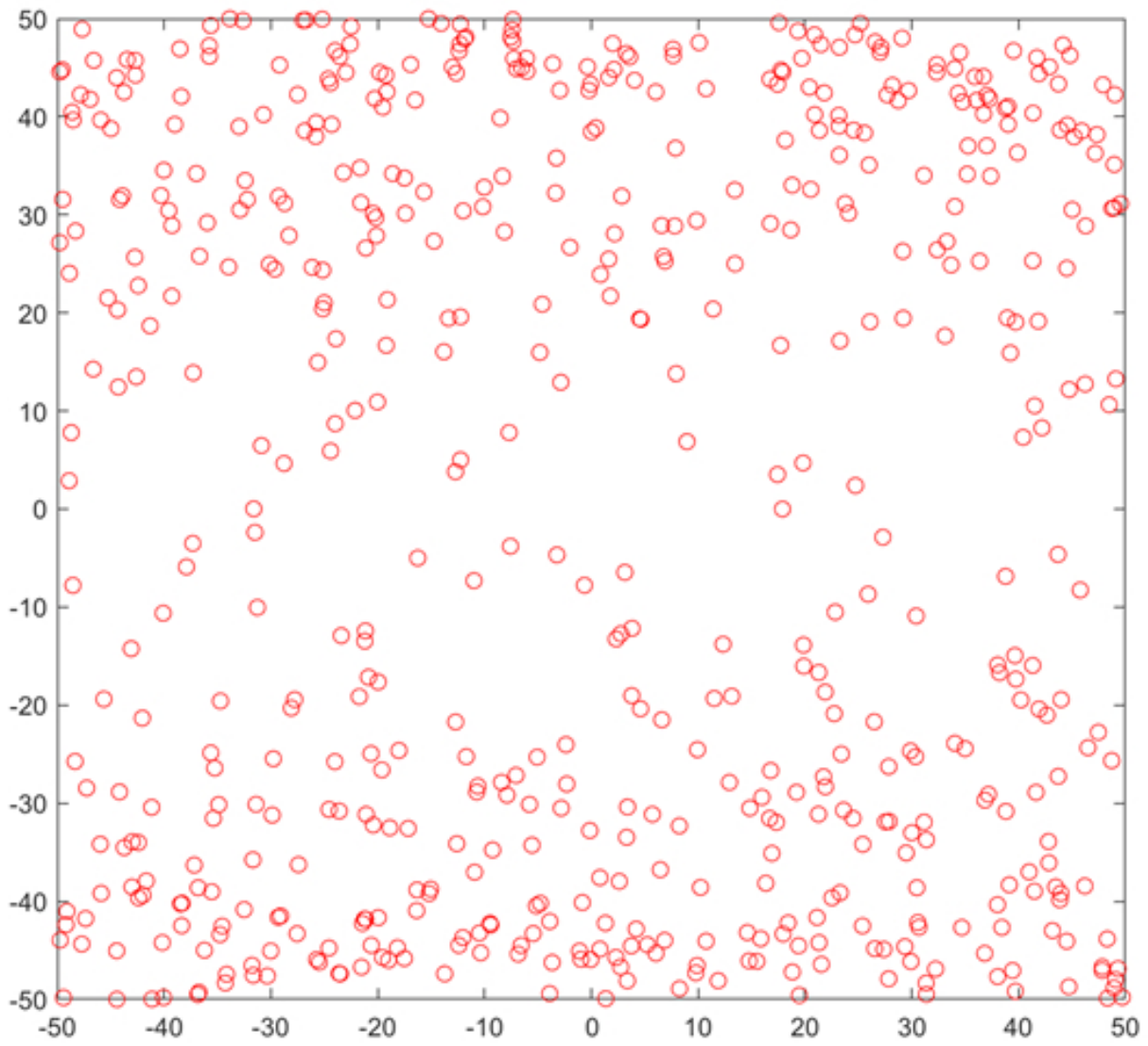}&\vspace{-1.8cm}\textcolor{blue}{KM: -98.8}\newline CKM: -99.2\newline KHM: -99.4\newline \textcolor{Ao}{WKHM: -98.3}\newline \textcolor{red}{KC: -100.2}&\vspace{-1.8cm}KM: -101.6\newline CKM: -101.6\newline \textcolor{red}{KHM: -102.7}\newline \textcolor{Ao}{WKHM: -100.3}\newline \textcolor{blue}{KC: -101.4}&\vspace{-1.8cm}KM: \textcolor{blue}{-83.3}\newline CKM: -83.4\newline \textcolor{Ao}{KHM: -83}\newline WKHM: -83.6 \newline \textcolor{red}{KC: -84.4}&\vspace{-1.8cm}KM: 2.4\newline \textcolor{Ao}{CKM: 1.5}\newline \textcolor{blue}{KHM: 2.1}\newline WKHM: 2.6 \newline \textcolor{red}{KC: 3.7}\\ \hline
Exponential&Exponential&\includegraphics[width = 0.1\linewidth]{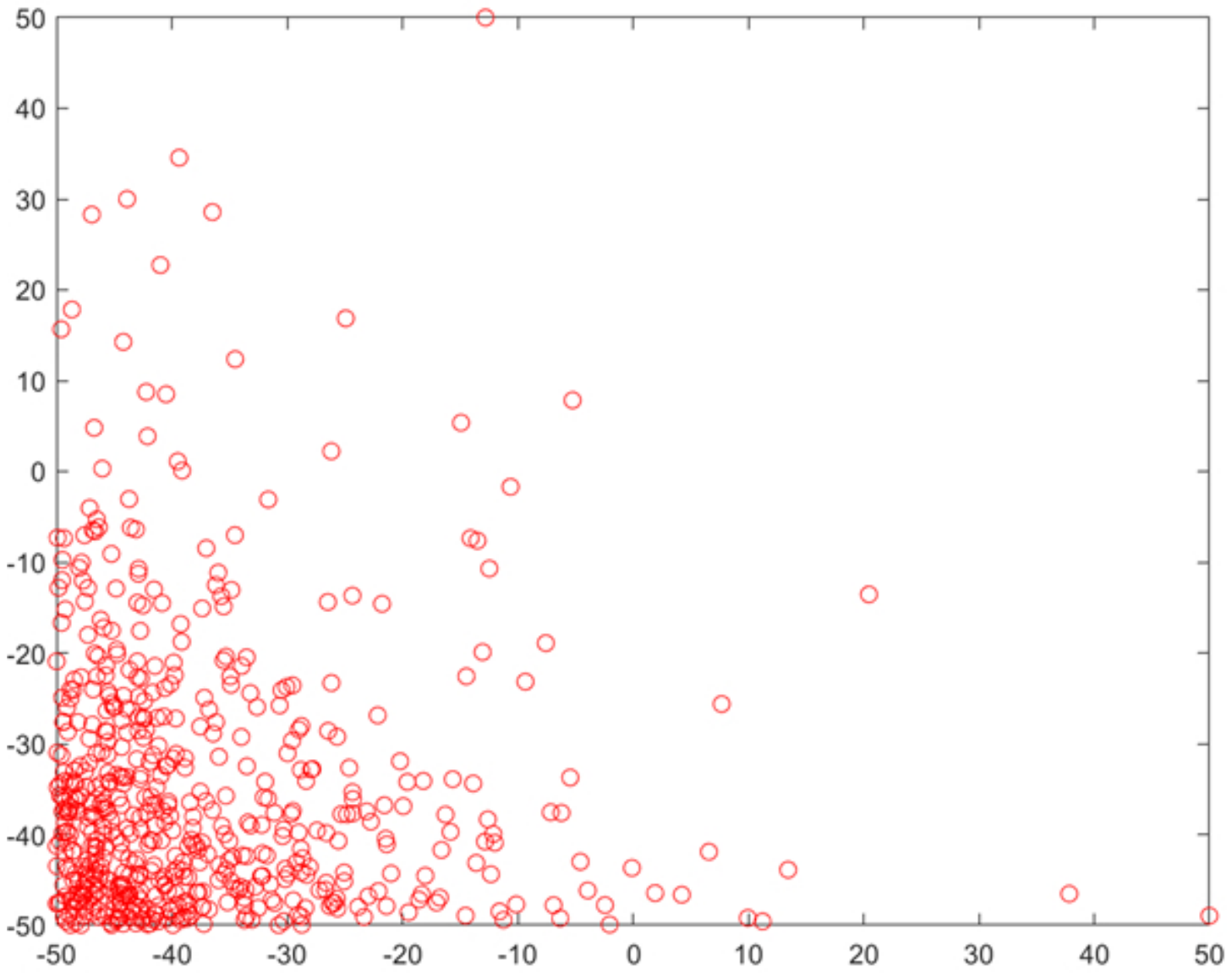}&\vspace{-1.6cm}KM: -102.8\newline \textcolor{red}{CKM: -106.2}\newline KHM: -104.9\newline \textcolor{blue}{WKHM: -99.3}\newline \textcolor{Ao}{KC: -95.6}&\vspace{-1.6cm}KM: -106.8\newline \textcolor{red}{CKM: -108.8}\newline KHM: -106.9\newline \textcolor{blue}{WKHM: -100}\newline \textcolor{Ao}{KC: -97.3}&\vspace{-1.6cm}KM: -78.8\newline \textcolor{Ao}{CKM: -78.0}\newline \textcolor{blue}{KHM: -78.3}\newline WKHM: -79.3\newline \textcolor{red}{KC: -82.0}& \vspace{-1.6cm}KM: 17\newline \textcolor{Ao}{CKM: 1.7}\newline \textcolor{blue}{KHM: 8.7}\newline WKHM: 23.5\newline \textcolor{red}{KC: 170}\\ \hline
Exponential&Gaussian&\includegraphics[width = 0.1\linewidth]{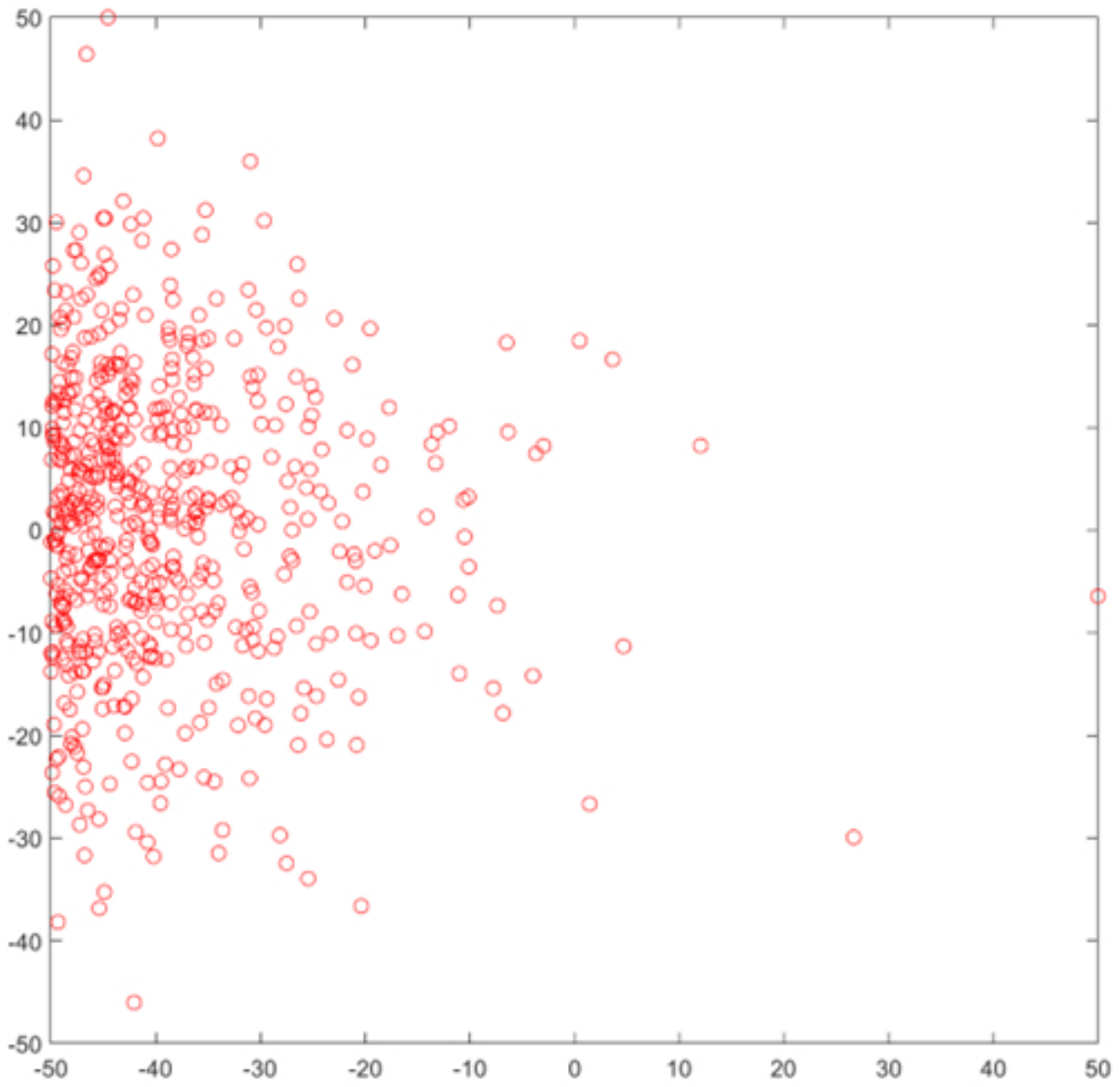}&\vspace{-1.8cm}KM: -101.1\newline \textcolor{red}{CKM: -107.6}\newline KHM: -102.5\newline \textcolor{blue}{WKHM: -99.7}\newline \textcolor{Ao}{KC: -98.1}&\vspace{-1.8cm}KM: -106.9\newline \textcolor{red}{CKM: -108.2}\newline KHM: -107.7\newline \textcolor{blue}{WKHM: -103.0}\newline \textcolor{Ao}{KC: -98.9}&\vspace{-1.8cm}KM: -78.3 \newline \textcolor{Ao}{CKM: -77.5}\newline \textcolor{blue}{KHM: -77.7}\newline WKHM: -78.4\newline \textcolor{red}{KC: -81.3}&\vspace{-1.8cm}KM: 16.4\newline \textcolor{Ao}{CKM: 1.6} \newline \textcolor{blue}{KHM: 9.3} \newline WKHM: 16.2\newline \textcolor{red}{KC: 301}\\ \hline
Exponential&Bi-exponential&\includegraphics[width = 0.1\linewidth]{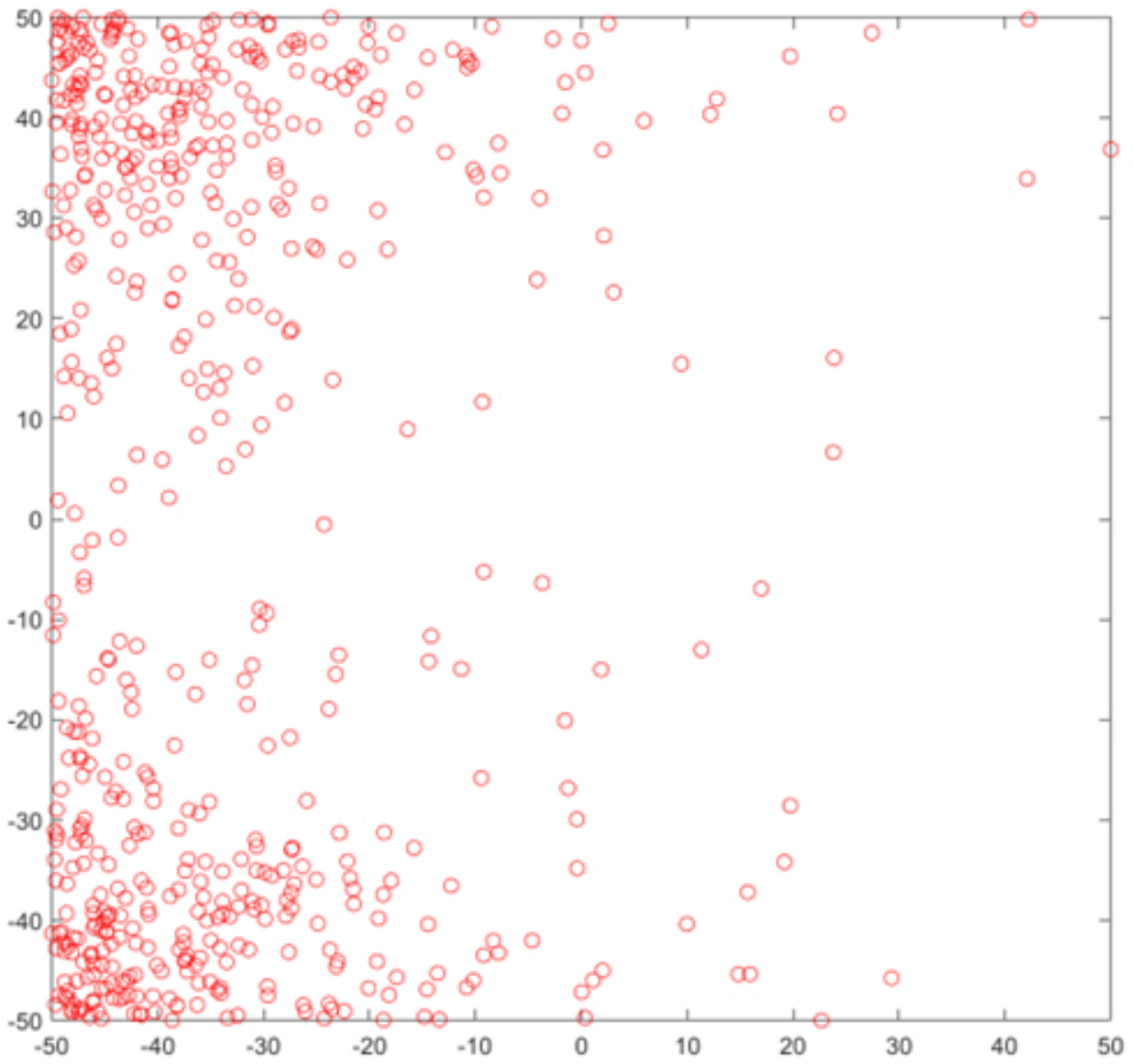}&\vspace{-1.8cm}KM: -105.4\newline \textcolor{red}{CKM: -107.8}\newline KHM: -107.5\newline \textcolor{blue}{WKHM: -103.4}\newline \textcolor{Ao}{KC: -99.1}&\vspace{-1.8cm}KM: -108.8\newline CKM: -108.8\newline \textcolor{red}{KHM: -109.5}\newline \textcolor{blue}{WKHM: -107.0}\newline \textcolor{Ao}{KC: -100.1}&\vspace{-1.8cm}KM: -80.5\newline \textcolor{Ao}{CKM: -79.8}\newline \textcolor{blue}{KHM: -79.9}\newline WKHM: -80.7\newline \textcolor{red}{KC: -82.3}&\vspace{-1.8cm}KM: 17.8\newline \textcolor{Ao}{CKM: 1.88}\newline \textcolor{blue}{KHM: 6.6}\newline WKHM: 15.6 \newline \textcolor{red}{KC: 209}\\ \hline
Gaussian&Gaussian&\includegraphics[width = 0.1\linewidth]{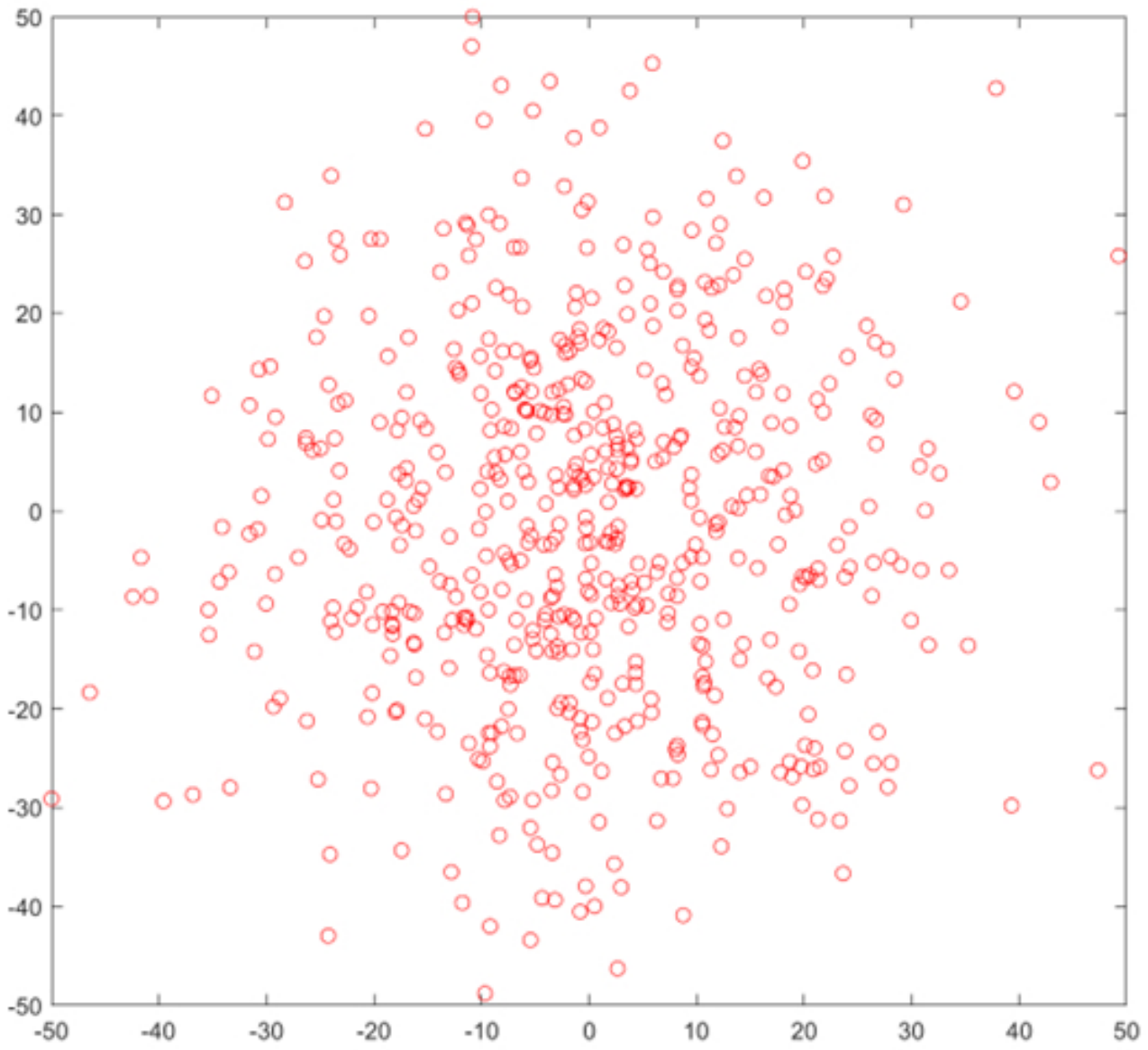}&\vspace{-1.8cm}KM: -99.3\newline CKM: -99.5\newline \textcolor{red}{KHM: -100.9}\newline \textcolor{blue}{WKHM: -99.2}\newline \textcolor{Ao}{KC: -97.8}&\vspace{-1.8cm}KM: -100.9\newline \textcolor{red}{CKM: -102.2}\newline KHM: -101.4\newline \textcolor{blue}{WKHM: -100.4}\newline \textcolor{Ao}{KC: -98.7}&\vspace{-1.8cm}KM: -81.1\newline CKM: -80.9\newline \textcolor{Ao}{KHM: -80.2}\newline \textcolor{blue}{WKHM: -80.8}\newline \textcolor{red}{KC: -83.9}&\vspace{-1.8cm}KM: 4.2\newline \textcolor{Ao}{CKM: 1.7} \newline \textcolor{blue}{KHM: 2.1}\newline WKHM: 3.11\newline \textcolor{red}{KC: 40}\\ \hline
Gaussian&Bi-exponential&\includegraphics[width = 0.1\linewidth]{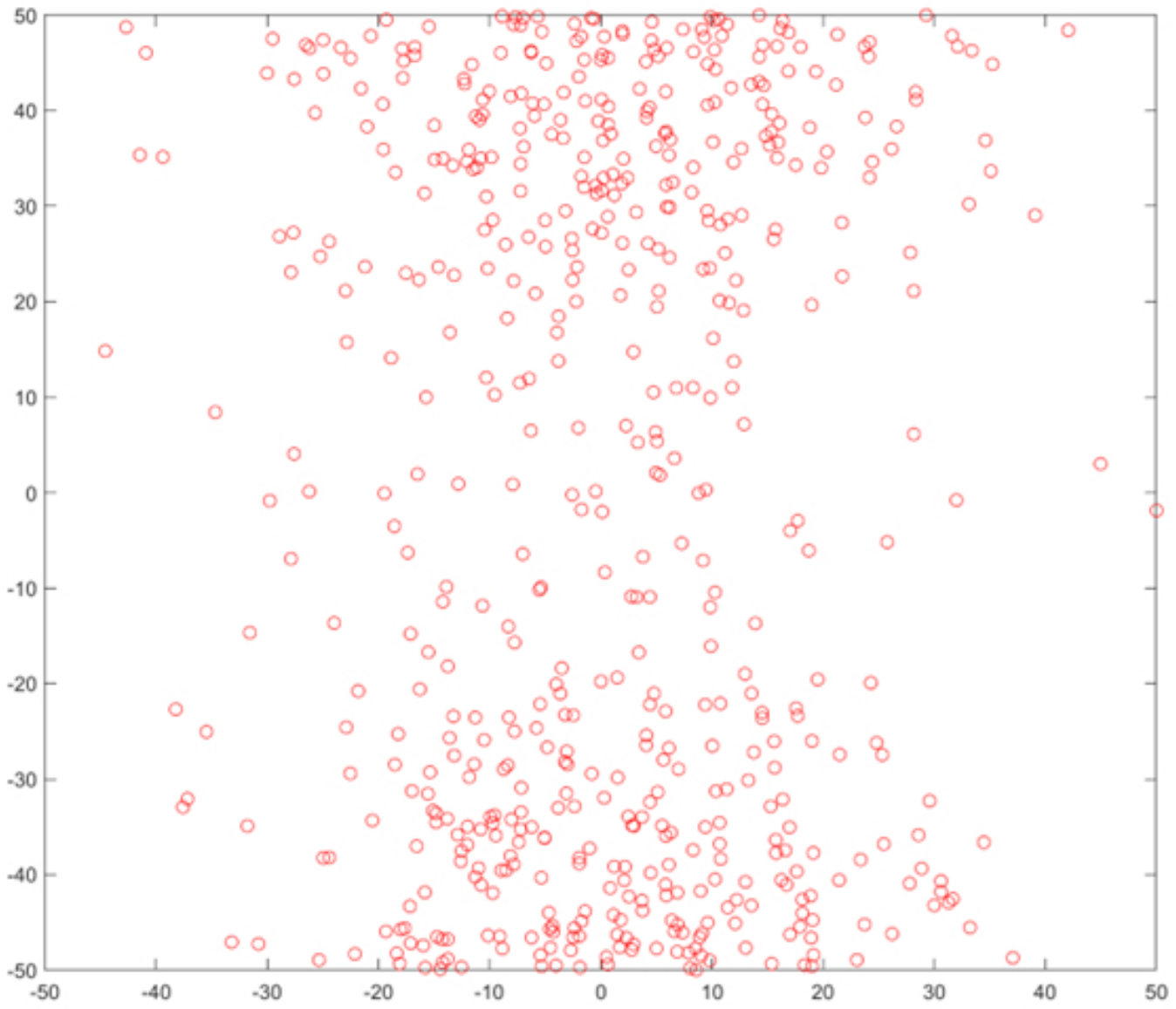}&\vspace{-1.7cm}KM: -100.1\newline CKM: -100.8\newline \textcolor{red}{KHM: -101.1}\newline \textcolor{blue}{WKHM: -100}\newline \textcolor{Ao}{KC: -98.7}&\vspace{-1.7cm}KM: -102.5\newline CKM: -102.9\newline \textcolor{red}{KHM: -103.6}\newline \textcolor{blue}{WKHM: -102.3}\newline \textcolor{Ao}{KC: -100.3}&\vspace{-1.7cm}KM: -80.3\newline \textcolor{blue}{CKM: -80.0}\newline \textcolor{Ao}{KHM: -79.8}\newline WKHM: -80.6\newline \textcolor{red}{KC: -83}&\vspace{-1.8cm}KM: 3.1\newline \textcolor{Ao}{CKM: 1.7}\newline \textcolor{blue}{KHM: 2.08}\newline WKHM: 2.4\newline \textcolor{red}{KC: 32}\\ \hline
Bi-exponential&Bi-exponential&\includegraphics[width = 0.1\linewidth]{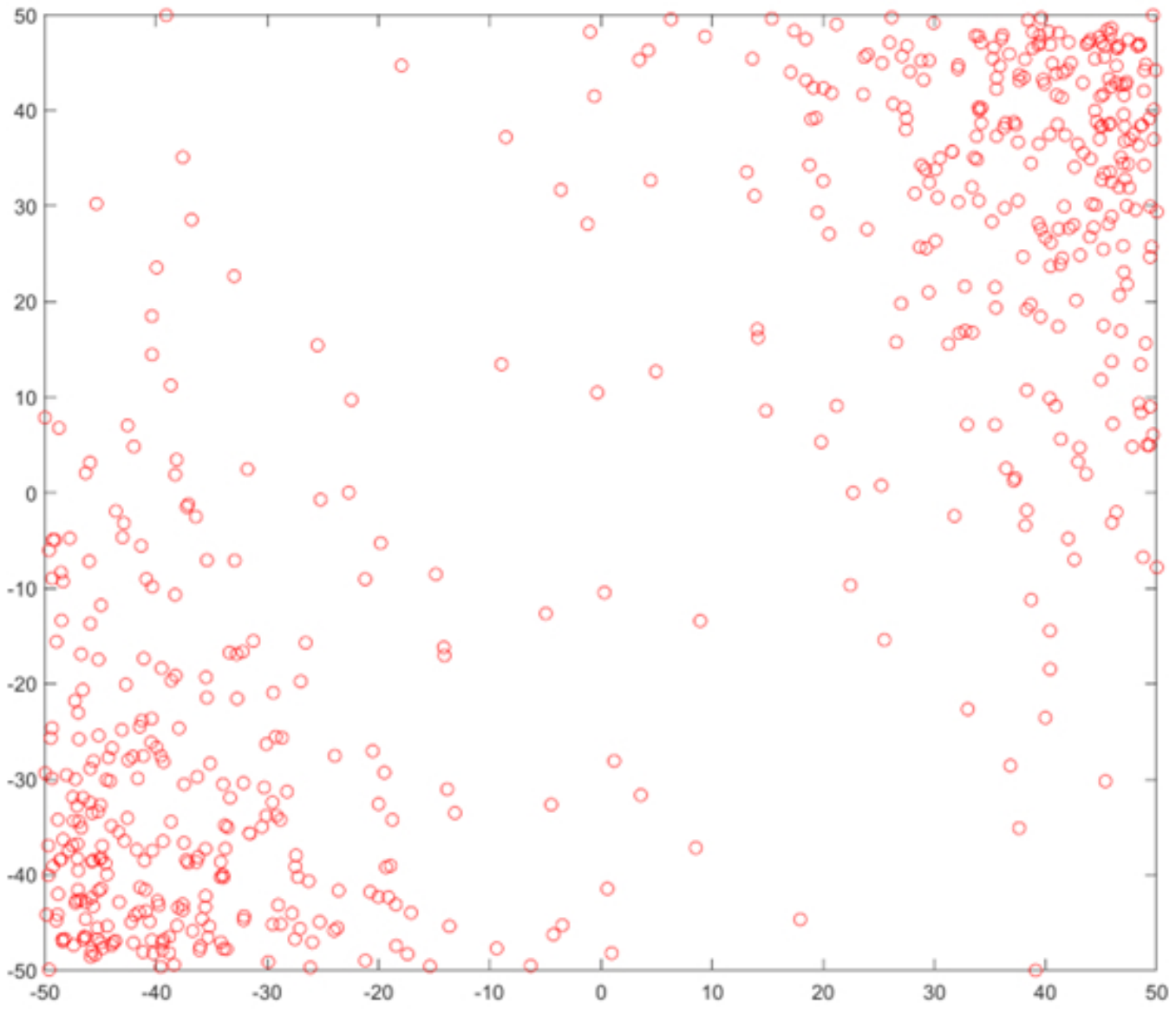}&\vspace{-1.7cm}KM: -103.62\newline \textcolor{red}{CKM: -106.0}\newline KHM: -105.5\newline \textcolor{blue}{WKHM: -103.3}\newline \textcolor{Ao}{KC: -99.4}&\vspace{-1.7cm}KM: -105.8\newline CKM: -107.1\newline \textcolor{red}{KHM: -107.7}\newline \textcolor{blue}{WKHM: -105.2}\newline \textcolor{Ao}{KC: -100.4}&\vspace{-1.7cm}KM: -80.1\newline \textcolor{Ao}{CKM: -79.3}\newline \textcolor{blue}{KHM: -79.5}\newline WKHM: -80.6 \newline \textcolor{red}{KC: -83.2}&\vspace{-1.8cm}KM: 9.5\newline \textcolor{Ao}{CKM: 1.8}\newline \textcolor{blue}{KHM: 4.6}\newline WKHM: 7.2\newline \textcolor{red}{KC: 47}\\ \hline
\end{tabular}
\caption{Monte Carlo simulations with different UE distribution skewness. The RSRP values are expressed in dBm. The worst, best, and the second best cases are represented by \textcolor{red}{red}, \textcolor{Ao}{green}, and \textcolor{blue}{blue}, respectively.}
\label{table:MC_table}
\end{table*}

\subsection{Initial RN placement:}
Once the rough number of RNs are predicted either by the MD analysis or by the ILP formulation, the next task is to optimize their locations leveraging the statistical UE location distribution.
In Table~\ref{table:MC_table} (last page) we compare the performance of the different clustering schemes in terms of the mean RSRP averaged over the network realizations, the minimum RSRP averaged over the network realization, and the worst case RSRP. Additionally, we depict the mean association ratio achieved for each algorithm, which is defined as the expected value of the ratio of the number of UEs associated to the most loaded RN with the least loaded RN. In particular, we consider four distributions of the UE locations along the $x-$ and $y-$ axes of a square deployment area.
\begin{itemize}
    \item Uniform: $f_d(x) = \frac{1}{2a}$ for $-a \leq x \leq a$.
    \item Truncated exponential: $f_d(x) = \frac{\exp(-(x + a))}{1 - \exp(-2a)}$ for $-a \leq x \leq a$.
    \item Truncated Gaussian: $\frac{f_G(x)}{F_{G}(a) - F_G(-a)}$ for $-a \leq x \leq a$.
    \item Truncated bi-exponential: 
\end{itemize}
In the above $f_G(\cdot)$ and $F_G(\cdot)$ are the \ac{PDF} and the \ac{CDF} of the standard Gaussian distribution, respectively. The combination of the $x-$coordinate (abscissa) and the $y-$coordinate (ordinate) following one of the above distributions result in ten possible UE location scenarios as shown in Table~\ref{table:MC_table}.

From a minimum RSRP perspective, the KC algorithm achieves the best guarantee across most deployment scenarios. However, it suffers from a considerably worse association ratio as compared to other algorithms. As expected, CKM performs the best in terms of the association ratio. However, in order to create equally sized cells in terms of the number of associated UEs, the CKM algorithm often creates large sized cells in terms of the geographical area. This is especially true for the case of skewed UE distributions. This results in a the cell-edge UEs of larger cells experiencing poor RSRP. Without any association constraint, WKHM achieves a good balance between the association ratio and the RSRP experience. 

{
\subsection{Controlled ablation of the WKHM weighting:}
To isolate the effect of the proposed weighting, all schemes are evaluated using the same UE realizations, number of RNs, initialization pool, and stopping tolerance. The resulting throughput and fairness statistics are summarized below.

\begin{center}
\renewcommand{\arraystretch}{1.12}
\begin{tabular}{lcccc}
\toprule
Scheme & $5$th percentile & Median & Mean & Jain index \\
& \multicolumn{3}{c}{Per-user throughput (Mbps)} & \\
\midrule
KM   & $0.6$ & $3.9$ & $4.0$ & $0.80$ \\
CKM  & $1.9$ & $4.6$ & $4.6$ & $0.88$ \\
KHM  & $1.7$ & $5.0$ & $5.0$ & $0.87$ \\
KC   & $0.4$ & $5.8$ & $5.9$ & $0.76$ \\
WKHM & $4.2$ & $6.1$ & $6.2$ & $0.96$ \\
\bottomrule
\end{tabular}
\end{center}

WKHM improves the fifth-percentile throughput by approximately $2.5$ Mbps relative to KHM and by $3.8$ Mbps relative to KC, while attaining the highest fairness index. KC produces the longest upper tail, but its low fifth-percentile throughput and Jain index reveal severe load imbalance. Consistently, the empirical SINR CDF shows zero observed outage at $\gamma=0$ dB for WKHM, whereas all comparison schemes exhibit a nonzero outage probability at the same threshold. The parameter ablation further shows that the mean SINR is stable for $\epsilon\in[10^{-7},10^{-5}]$; excessively large $p$ or $q$ makes the association too sharp and degrades SINR by increasing sensitivity to the interference geometry. The value $q=\alpha$ provides the exact received-power-fraction interpretation, while the final operating value is selected from the numerical ablation.
}

{
\subsection{Statistical evaluation for WKHM:}
All algorithmic comparisons are evaluated over $M=30$ independent spatial realizations. Within each realization, the competing methods use the same UE locations, RN cardinality, and initialization pool, so that the resulting KPI differences are paired. The reported point estimates are realization averages; uncertainty is summarized by the sample standard deviation and the $95\%$ confidence interval
\[
\overline{Y}\pm t_{0.975,M-1}\frac{s_Y}{\sqrt{M}}.
\]
Because the control-channel powers are averaged over fast fading, the principal experiment re-samples the UE process and algorithm initialization, but not instantaneous fading. Blockage and shadowing are re-sampled only in the corresponding sensitivity experiment. For pairwise throughput comparisons with WKHM, we apply a two-sided paired $t$-test to the per-realization mean throughput and use Holm correction over the four hypotheses.

The gains over KM, CKM, and KC remain significant after correction for multiple comparisons (see Tab.~\ref{tab:sig}). The WKHM-KHM confidence interval includes zero and the adjusted $p$-value is $0.054$; hence, this comparison is described as a marginal positive gain rather than a statistically significant improvement. This statistical qualification is consistent with the throughput CDF: WKHM provides the clearest advantage in the lower tail and in fairness, whereas its mean-throughput advantage over KHM is smaller.
}

\subsection{Cell load, clustering and \ac{DAS}}
In case the number of UEs associated to the RNs is below the threshold number of admissible UEs, larger cells can be created by combining multiple RNs to enable clustered cells. In this regard, first let us study the cell load in the network. In case of Voronoi cells, the cell load is challenging to derive due to the irregular shapes and sizes of the cells. However,
By using the ergodicity of the point processes, the average load can be approximated using the mean cell approach~\cite{blaszczyszyn2014user}, as: 
\begin{align}
    \rho_i &= \int_{\mathcal{A}_i} \frac{\sigma(s)}{\frac{B}{|\Phi_i|} \log_2 (1 + \xi(s))} ds \overset{(a)}{\approx} \int_{\gamma} \frac{\sigma(\gamma)}{\frac{B}{|\Phi_i|} \log_2 (1 + \xi(\gamma))}p(\gamma) d\gamma \nonumber \\
    &\overset{(b)}{\approx} \sum_{n \in \mathcal{A}_i} \frac{\lambda_A}{\frac{B}{|\Phi_i|} \log_2(1 + \xi_n)}
\end{align}
where the expectation with respect to the distance is converted in step (a) into an expectation with respect to the SINR variations, the CDF of which is precisely the standard success probability. The value of $\sigma(\gamma)$ is still challenging to derive, since this represents the traffic generated by the UEs that are experiencing an SINR of $\gamma$. To approximate this further, we calculate the cell load numerically, given the \ac{SINR} distribution within that cell for given realization of the UEs.
\begin{figure}
    \centering
    \includegraphics[width = 0.65\linewidth]{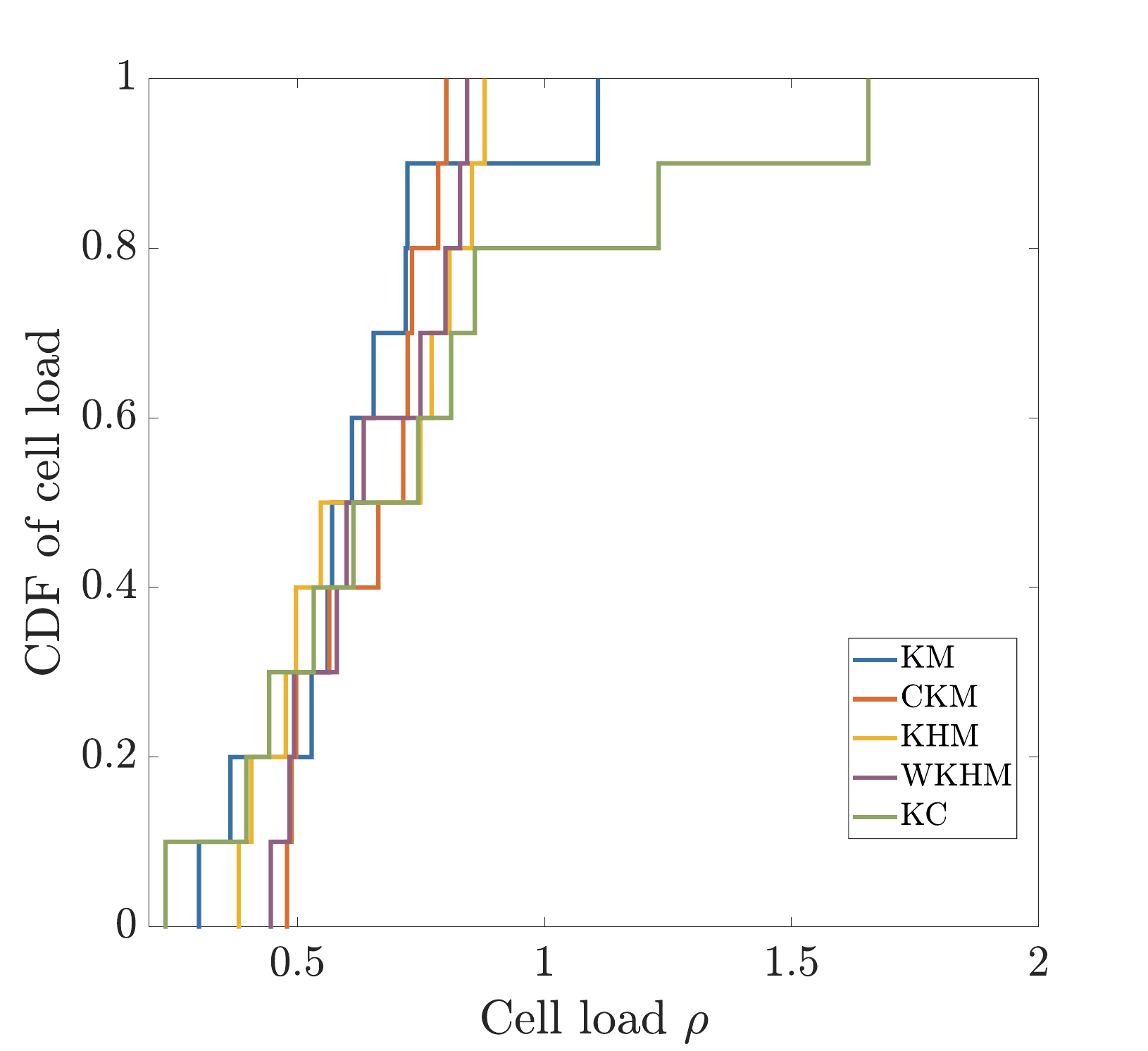}
    \caption{CDF of cell load for different initial placement schemes. Here $N_{\rm r} = 10$ and the deployment is in a 200 m $\times$ 200 m square.}
    \label{fig:loadcdf}
\end{figure}
A cell load of less than 1 corresponds to under load, i.e., the cell capacity is more than the total UE traffic generated in the cell. While, a cell load of greater than 1 refers to the overloaded case where the UEs arrive in the network at a faster rate as compared to the rate of service.

Fig.~\ref{fig:loadcdf} shows the cell-load distribution for a single realization. KM and KC each produce overloaded cells (one and two, respectively), indicating poor dimensioning, along with highly underloaded cells ($\rho \leq 0.25$). In contrast, the remaining algorithms yield only underloaded cells. This motivates cell clustering, which can merge under- and overloaded cells when load is driven by cell size rather than UE density. WKHM and CKM exhibit low load variance. Clustering gains, illustrated for WKHM initialization, increase with the maximum UE association and exceed 1.5 dB under full clustering.

\begin{figure}
\subfloat[]{\includegraphics[width = 0.45\linewidth]{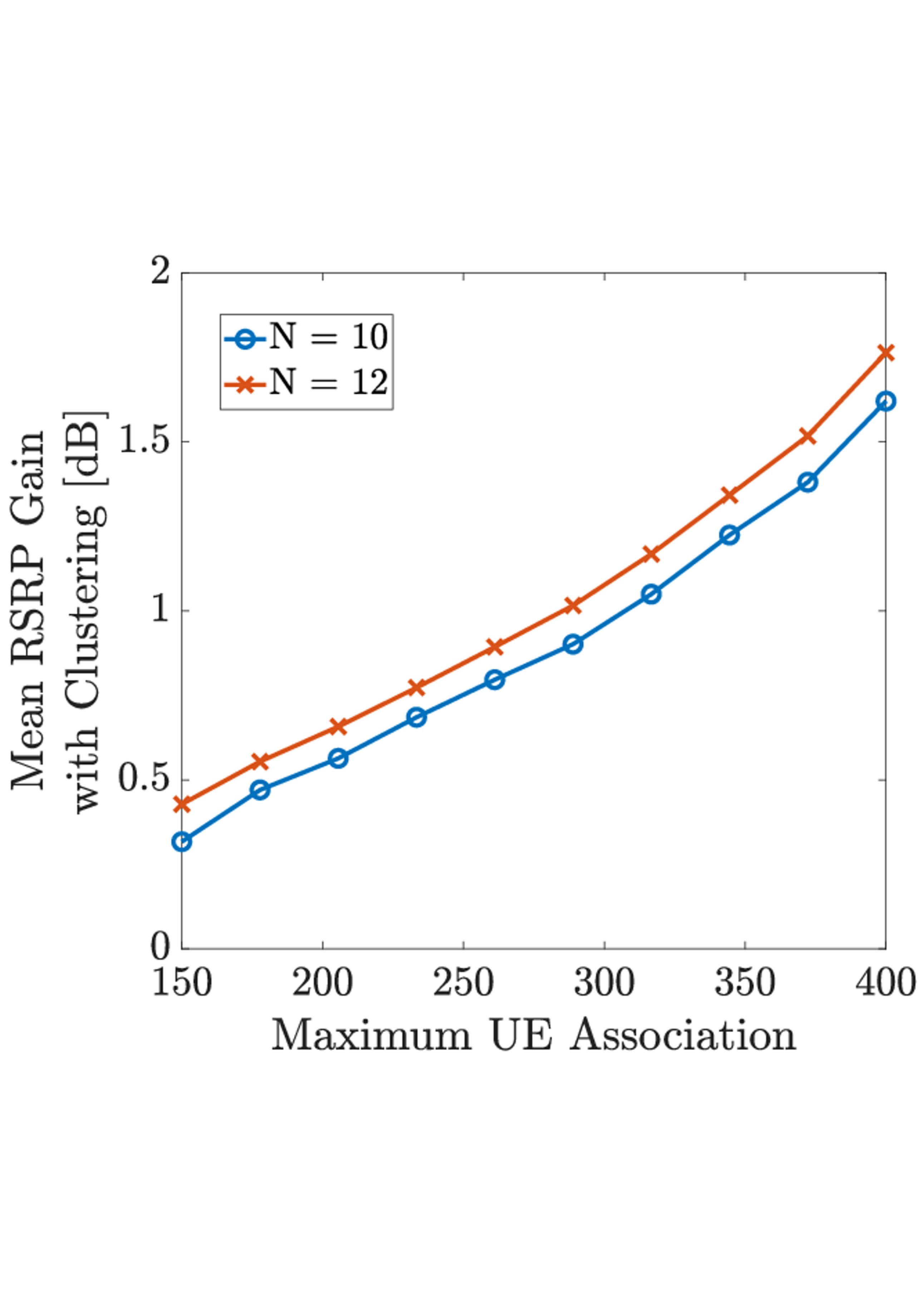} \label{fig:cluster_gain}}
\hfil
\subfloat[]{\includegraphics[width = 0.45\linewidth]{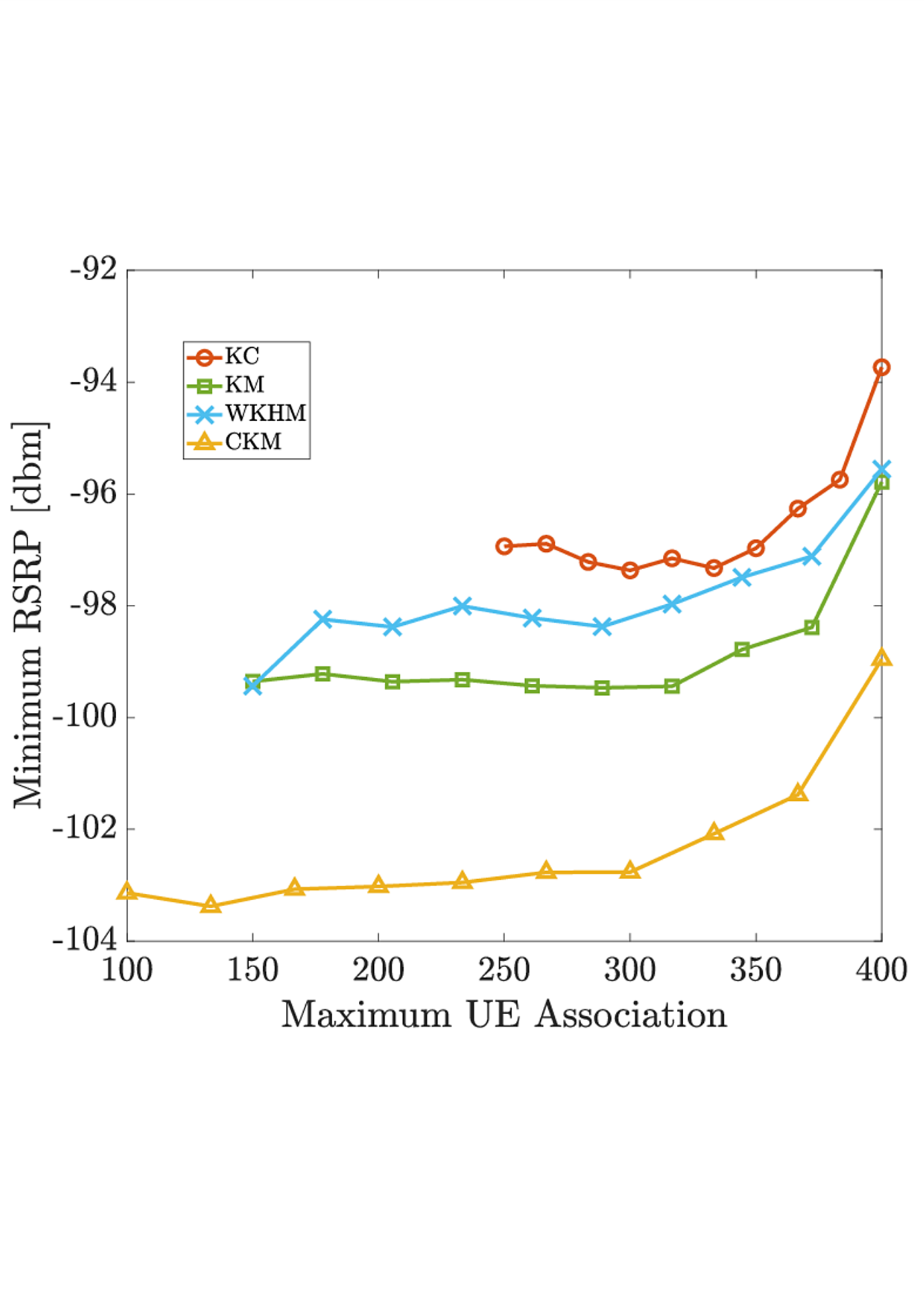}\label{fig:cluster_gain_min_rsrp}}
\caption{(a) Gain with RN clustering with respect to the number of admissible UEs per cell. Here, total number of UEs is 400. The dimensions are 500 m $\times$ 500 m and (b) Clustering gain for different initial placement strategies.}
\end{figure}


Fig.~\ref{fig:cluster_gain_min_rsrp} reveals that the impact of clustering is dependent on the initial placement scheme. Although improvement is observed whenever clustering is feasible, for a given threshold on the number of admissible UEs, all candidate algorithms may not be feasible. For example, with 400 UEs and 10  RNs, an admission threshold of 100-120 UEs results in only the CKM algorithm being feasible. All other algorithms have at least one RN with more UEs than the admission threshold. Hence, CKM must necessarily be employed for initial placement. In case the admission threshold increases to beyond 150 UEs, both KM and WKHM become feasible algorithms and with clustering, we observe that WKHM results in the best RSRP performance. Finally, when the admission threshold is increased to beyond 250, KC is included in the feasible set of the algorithms and also results in the best RSRP performance.

\begin{figure}
    \centering
    \includegraphics[width = \linewidth]{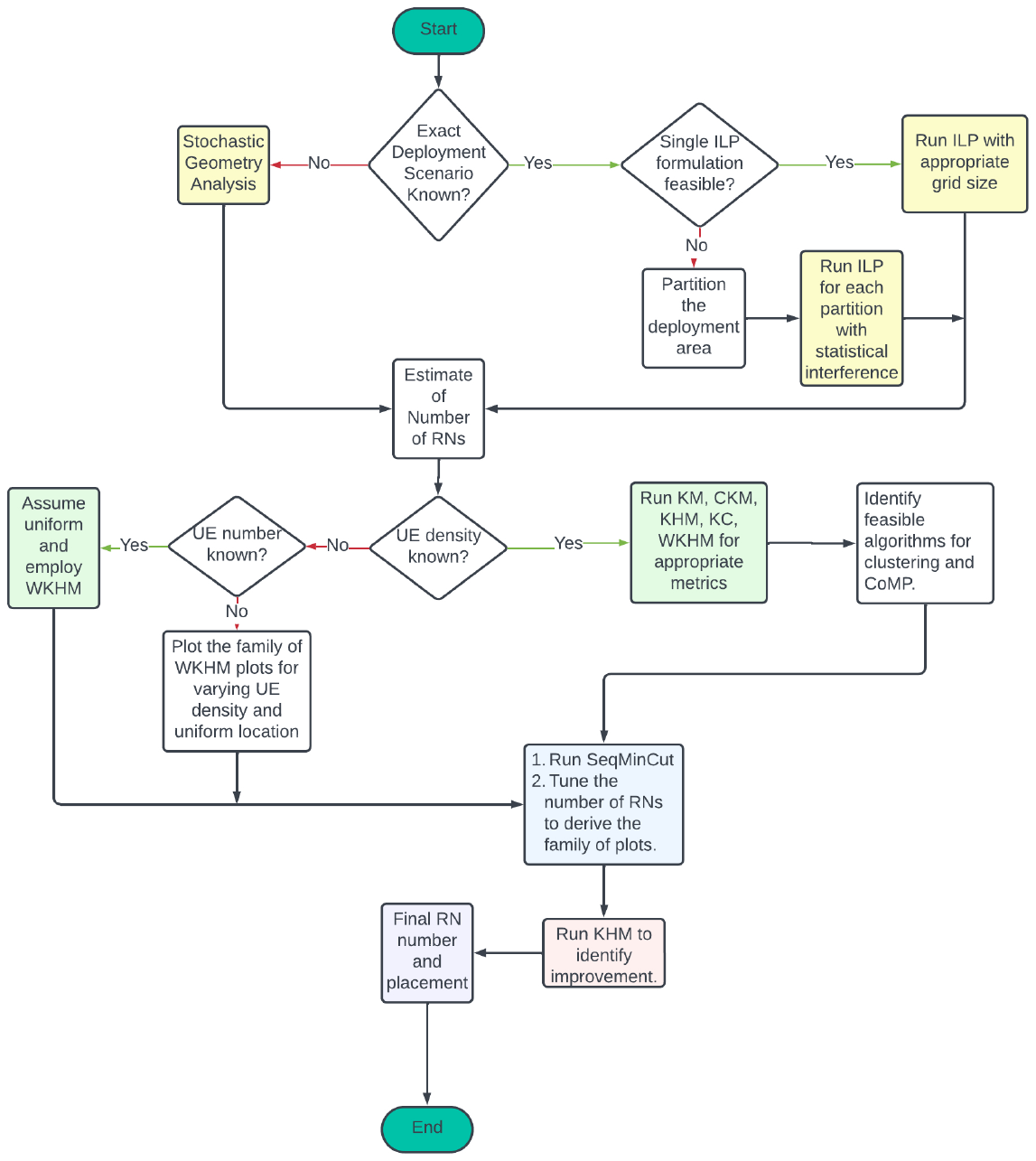}
    \caption{An example implementation of the proposed framework.}
    \label{fig:flowchart}
\end{figure}
\subsection{Framework for Network Planning and Scalability:} The above steps can be combined into a complete indoor 5G planning workflow, summarized in Fig.~\ref{fig:flowchart}. When only the deployment area is known, \ac{SG} enables initial dimensioning of the number of \acp{RN}. If the deployment map is available and computationally feasible, an \ac{ILP} framework determines RN locations and counts. With known UE density, clustering algorithms refine RN placement; otherwise, uniform UE locations reduce the problem to \ac{RSRP} optimization. Finally, \texttt{SeqMinCut} clusters RNs to form a \ac{DAS}, further improving UE performance. The full workflow is provided as a MATLAB toolbox~\cite{codes}.

Since ILP is computationally prohibitive for large areas or fine grids, we propose an iterative hybrid approach (Fig.~\ref{fig:Joint}) that combines local ILP optimization with \ac{SG}-based interference approximation. The region is partitioned into sub-regions, where ILP is applied sequentially while RNs in other areas are uniformly deployed according to the MD framework. RN locations are fixed after each step, and the process continues until all sub-regions are optimized.

\begin{figure}
    \centering
    \includegraphics[width=\linewidth]{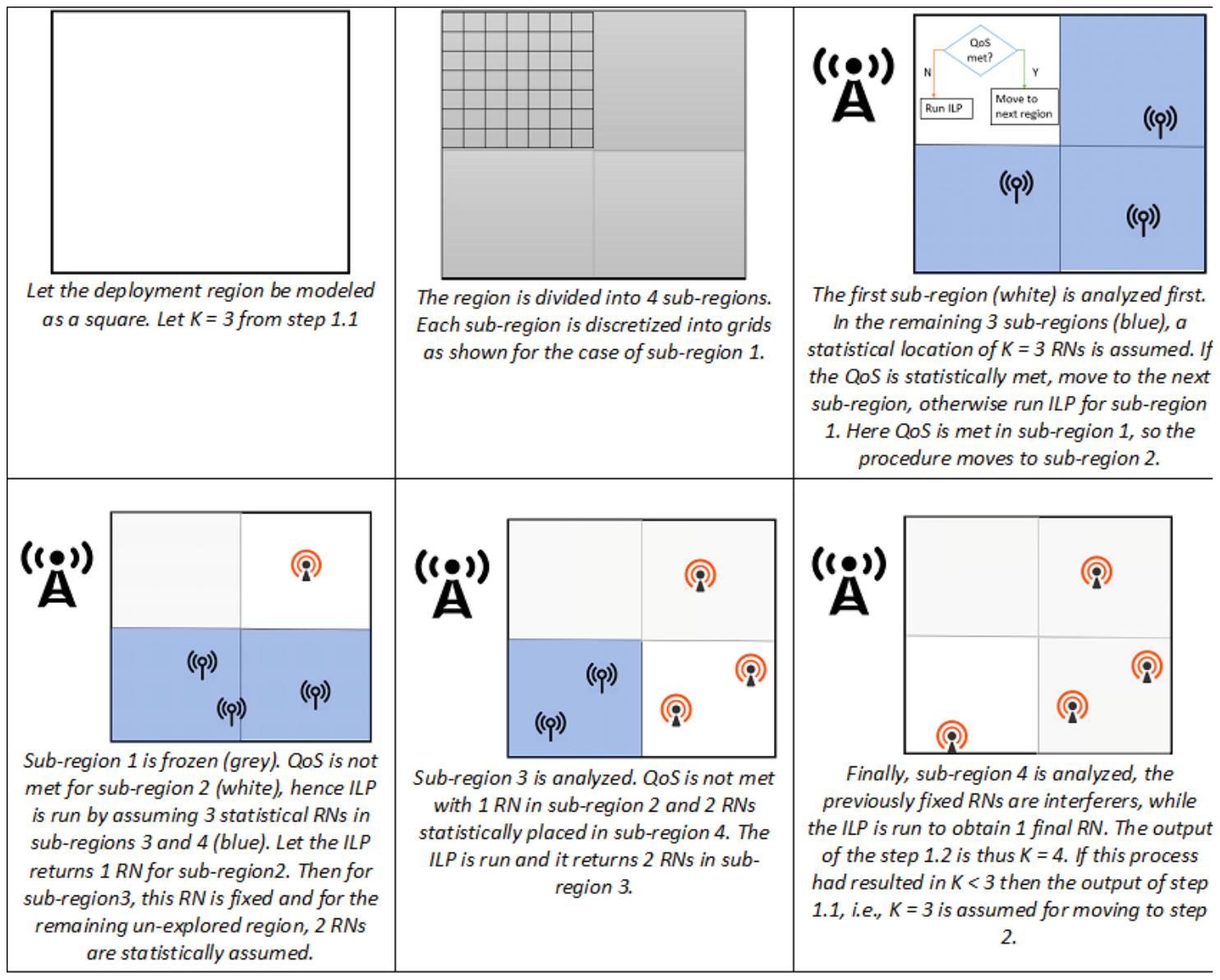}
    \caption{Iterative framework that leverages SG and ILP for real maps.}
    \label{fig:Joint}
\end{figure}

\section{Conclusions}
\label{sec:Con}
We considered the RN placement problem in indoor enterprise 5G networks and provided a comprehensive methodology for network planning. Stochastic geometry provides an efficient tool for initial network planning when the exact deployment map is not known. However, if the exact map is known, an integer linear program method is proposed that simultaneously gives the number and locations of RNs. Then, based on the distribution of the UE locations in the network, we explored a variety of clustering algorithms to obtain the location of the RNs. In particular, we proposed a novel clustering algorithm WKHM that aims to strike a balance between the UE association variation across RNs and the received RSRP. To further augment the RSRP performance of the UEs, we proposed an RN clustering algorithm to form larger cells under a UE admission constraint. We provided extensive numerical results to highlight the salient advantages of our framework. Finally, we organized the derived framework as a MATLAB toolbox which we make publicly available for network designers and researchers.

\appendices


\bibliography{references.bib}
\bibliographystyle{ieeetr}
\end{document}